\newcolumntype{x}[1]{>{\centering\arraybackslash\hspace{0pt}}m{#1}}
\newcommand{\muspace}{\mspace{1mu}}
\DeclareRobustCommand{\scond}{\mathchoice{\muspace\vert\muspace}{\vert}{\vert}{\vert}}
\newcommand{\cond}{\mathchoice{\,\vert\,}{\mspace{2mu}\vert\mspace{2mu}}{\vert}{\vert}}
\DeclareRobustCommand{\discint}{\mathchoice{\mspace{-1.5mu}:\mspace{-1.5mu}}{\mspace{-1.5mu}:\mspace{-1.5mu}}{:}{:}}
\newcommand{\suchthat}{\mathchoice{\colon}{\colon}{:\mspace{1mu}}{:}}
\newcommand{\Cc}{\mathcal{C}}
\newcommand{\Ec}{\mathcal{E}}
\newcommand{\Uc}{\mspace{1.5mu}\mathcal{U}}
\newcommand{\Wc}{\mathcal{W}}
\newcommand{\Xc}{\mathcal{X}}
\newcommand{\Rr}{\mathscr{R}}
\newcommand{\rv}{{\bf r}}
\newcommand{\Iv}{{\bf I}}
\newcommand{\pen}{{P_e^{(n)}}}
\newcommand{\aep}{{\mathcal{T}_{\epsilon}^{(n)}}}
\newcommand{\Mh}{{\hat{M}}}
\newcommand{\mh}{{\hat{m}}}
\newcommand{\Nt}{{\tilde{N}}}
\newcommand{\Rt}{{\tilde{R}}}
\newcommand{\Yt}{{\tilde{Y}}}
\newcommand{\Zt}{{\tilde{Z}}}
\newcommand{\dt}{{\tilde{d}}}
\def\a{\alpha}
\def\b{\beta}
\def\g{\gamma}
\def\d{\delta}
\def\e{\epsilon}
\def\l{\lambda}
\DeclareMathOperator\E{\textsf{E}}
\let\P\relax
\DeclareMathOperator\P{\textsf{P}}
\DeclareMathOperator\C{\textsf{C}}
\def\error{\mathrm{e}}
\newcommand{\Bern}{\mathrm{Bern}}
\newcommand{\N}{\mathrm{N}}
\newcommand{\U}{\mathrm{Unif}}
\def\textiid{i.i.d.\@\xspace}
\newcommand\iid{\ifmmode\text{ i.i.d. } \else \textiid \fi}
\def\mathllap{\mathpalette\mathllapinternal}
\def\mathllapinternal#1#2{%
  \llap{$\mathsurround=0pt#1{#2}$}}
\def\mathrlap{\mathpalette\mathrlapinternal}
\def\mathrlapinternal#1#2{%
  \rlap{$\mathsurround=0pt#1{#2}$}}
\def\clap#1{\hbox to 0pt{\hss#1\hss}}
\def\mathclap{\mathpalette\mathclapinternal}
\def\mathclapinternal#1#2{%
  \clap{$\mathsurround=0pt#1{#2}$}}
\let\oldstackrel\stackrel
\renewcommand{\stackrel}[2]{\oldstackrel{\mathclap{#1}}{#2}}
\renewcommand{\hbar}{h\mathllap{\overline{\vphantom{h}\hphantom{\rule{4.6pt}{0pt}}}\mspace{0.77mu}}}
\newcommand{\urltilde}{\kern -.06em\lower -.06em\hbox{~}\kern .02em}
\theoremstyle{plain}
\newtheorem{theorem}{Theorem}
\newtheorem{lemma}{Lemma}
\newtheorem{proposition}{Proposition}
\theoremstyle{definition}
\theoremstyle{remark}
\newtheorem{remark}{Remark}
\let\j\relax
\newcommand{\j}{j}
\newcommand{\RIAN}[1]{\Rr_{{#1},\mathrm{IAN}}}
\newcommand\RIAN*{\Rr_{\mathrm{IAN}}}
\newcommand{\RSCD}[1]{\Rr_{{#1},\mathrm{SCD}}}
\newcommand\RSCD*{\Rr_{\mathrm{SCD}}}
\title{Sliding-Window Superposition Coding:\\Two-User Interference Channels}
\author{Lele Wang, Young-Han Kim, Chiao-Yi Chen, Hosung Park, and Eren \c{S}a\c{s}o\u{g}lu

\thanks{The material in this paper was presented in part in the IEEE International Symposium on Information Theory (ISIT) 2014, Honolulu, HI, and in part in the IEEE Globecom Workshops (GC Wkshps) 2014, Austin, TX.} 

\thanks{L. Wang is jointly with the Department of Electrical Engineering, Stanford University, Stanford, CA 94305 USA and the Department of Electrical Engineering - Systems, Tel Aviv University, Tel Aviv, Israel (email: wanglele@stanford.edu).}

\thanks{Y.-H. Kim is with the Department of Electrical and Computer Engineering, University of California, San Diego, La Jolla, CA 92093 USA (e-mail: yhk@ucsd.edu).}

\thanks{C.-Y. Chen is with Broadcom Limited, 190 Mathilda Place, Sunnyvale, CA 94086 USA (email: uscychen@gmail.com).}

\thanks{H. Park is with the School of Electronics and Computer Engineering, Chonnam National University, Gwangju 61186, Korea (e-mail: hpark1@jnu.ac.kr).}

\thanks{E. \c{S}a\c{s}o\u{g}lu is  with  Intel  Corporation, Santa Clara, CA 95054 USA (e-mail: eren.sasoglu@gmail.com).}
}
\begin{document}
\maketitle

\begin{abstract}
 A low-complexity coding scheme is developed to achieve the rate region of maximum likelihood decoding for interference channels. As in the classical rate-splitting multiple access scheme by Grant, Rimoldi, Urbanke, and Whiting, the proposed coding scheme uses superposition of multiple codewords with successive cancellation decoding, which can be implemented using standard point-to-point encoders and decoders. Unlike rate-splitting multiple access, which is not rate-optimal for multiple receivers, the proposed coding scheme transmits codewords over multiple blocks in a staggered manner and recovers them successively over sliding decoding windows, achieving the single-stream optimal rate region as well as the more general Han--Kobayashi inner bound for the two-user interference channel. The feasibility of this scheme in practice is verified by implementing it using commercial channel codes over the two-user Gaussian interference channel.
\end{abstract}

\section{Introduction}

For high data rates and massive connectivity, next-generation cellular networks are expected to deploy many small base stations. While such dense deployment provides the benefit of bringing radio closer to end users, it also increases the amount of interference from neighboring cells. Consequently, efficient and effective management of interference is expected to become one of the main challenges for high-spectral-efficiency, low-power, broad-coverage wireless communications.

Over the past few decades, several techniques at different protocol layers \cite{Boudreau--Panicker--Guo--Chang--Wang--Vrzic2009, Seol--Cheun2009, Cadambe--Jafar2008} have been proposed to mitigate adverse effects of interference in wireless networks. One important conceptual technique at the physical layer is \emph{simultaneous decoding} \cite[Section~6.2]{El-Gamal--Kim2011},~\cite{Bidokhti--Prabhakaran2014}. In this decoding method, each receiver attempts to recover both the intended and a subset of the interfering codewords at the same time. When the interference is strong \cite{Costa--El-Gamal1987, Sato1978b} and weak~\cite{Shang--Kramer--Chen2009,Annapureddy--Veeravalli2009,Motahari--Khandani2011,Liu--Nair--Xia2014}, simultaneous decoding of random code ensembles achieves the capacity of the two-user interference channel. In fact, for any given random code ensemble, simultaneous decoding achieves the same rates achievable by the optimal maximum likelihood decoding \cite{Motahari--Khandani2011, Baccelli--El-Gamal--Tse2011, Bandemer--El-Gamal--Kim2015}. The celebrated Han--Kobayashi coding scheme \cite{Han--Kobayashi1981} also relies on simultaneous decoding as a crucial component. As a main drawback, however, each receiver in simultaneous decoding (or maximum likelihood decoding) has to employ some form of multiuser sequence detection, which usually has high computational complexity. This issue has been tackled recently by a few approaches based on emerging spatially coupled and polar codes~\cite{Yedla--Nguyen--Pfister--Narayanan2011, Wang--Sasoglu2014}, but these solutions involve very long block lengths.

For this reason, most practical communication systems use conventional point-to-point low-complexity decoding. The simplest method is \emph{treating interference as noise}, in which only statistical properties (such as the distribution and power), rather than the actual codebook information, of the interfering signals, are used. In \emph{successive cancellation decoding}, similar low-complexity point-to-point decoding is performed in steps, first recovering interfering codewords and then incorporating them as part of the channel output for decoding of desired codewords. Successive cancellation decoding is particularly well suited when the messages are split into multiple parts by rate splitting, encoded into separate codewords, and transmitted via superposition coding. In particular, when there is only one receiver (i.e., for a multiple access channel), this rate-splitting coding scheme with successive cancellation decoding was proposed by Rimoldi and Urbanke~\cite{Rimoldi--Urbanke1996} for the Gaussian case and Grant, Rimoldi, Urbanke, and Whiting~\cite{Grant--Rimoldi--Urbanke--Whiting2001} for the discrete case, and achieves the optimal rate region of the polymatroidal shape (the pentagon for two senders). When there are two or more receivers---as in the two-user interference channel or the compound multiple access channel---the rate-splitting multiple access scheme fails to achieve the optimal rate region as demonstrated earlier in~\cite{Zhao--Tan--Avestimehr--Diggavi--Pottie2012} for Gaussian codes and in Section~\ref{sec:rs-subopt} of this paper (and~\cite{Wang--Sasoglu--Kim2014}) for general codes.

A natural question is whether low-complexity point-to-point coding techniques, which could achieve capacity for multiple access and single-antenna Gaussian broadcast channels, are fundamentally deficient for the interference channel, and high-complexity simultaneous decoding would be critical to achieve the capacity in general. In this paper, we develop a new coding scheme, called \emph{sliding-window superposition coding}, that overcomes the limitations of low-complexity decoding through a new diagonal superposition structure. The main ingredients of the scheme are block Markov coding, sliding-window decoding (both commonly used for multihop relaying and feedback communication), superposition coding, and successive cancellation decoding (crucial for low-complexity implementation using standard point-to-point codes). Each message is encoded into a single long codeword that are transmitted diagonally over multiple blocks and multiple signal layers, which helps avoid the performance bottleneck for the aforementioned rate-splitting multiple access scheme. Receivers recover the desired and interfering codewords over a decoding window spanning multiple blocks. Successive cancellation decoding is performed within each decoding window as well as across a sequence of decoding windows for streams of messages. When the number and distribution of signal layers are properly chosen, the sliding-window superposition coding scheme can achieve every rate pair in the rate region of maximum likelihood decoding for the two-user interference channel with single streams, providing a constructive answer to our earlier question. We develop a more complete theory behind the number and distribution of signal layers and the choice of decoding orders, which leads to an extension of this coding scheme that achieves the entire Han--Kobayashi inner bound.

For practical communication systems, the conceptual sliding-window superposition coding scheme can be readily adapted to a coded modulation scheme using binary codes and common signal constellations. We compare this \emph{sliding-window coded modulation scheme} with two well-known coded modulation schemes, multi-level coding~\cite{Imai--Hirakawa1977,Wachsmann--Fischer--Huber1999} and bit-interleaved coded modulation~\cite{Zehavi1992,Caire--Taricco--Biglieri1998}. We implement the sliding-window coded modulation scheme for the two-user Gaussian channel using the 4G LTE turbo code and demonstrate its performance improvement over treating interference as noise. Following earlier conference versions~\cite{Wang--Sasoglu--Kim2014,Park--Kim--Wang2014} of this paper, several practical implementations of sliding-window superposition coding have been investigated~\cite{Kim--Ahn--Kim--Park--Wang--Chen--Park2015, Kim--Ahn--Kim--Park--Chen--Kim2016} and proposed to the 5G standards~\cite{5Gdiscussion1,5Gdiscussion3,5Gdiscussion4,5Gdiscussion5,5Gdiscussion6,5Gdiscussion7}.

The rest of the paper is organized as follows. We first define the problem and the relevant rate regions in Section~\ref{sec:setup}. Then, we explain the rate-splitting scheme and demonstrate its fundamental deficiency for the interference channel in Section~\ref{sec:rs}. We introduce the new sliding-window superposition coding in Section~\ref{sec:swsc-sd}, first by developing a simple scheme that achieves the corner points of simultaneous decoding region, and then extending it to achieve every point in the region. We devote Section~\ref{sec:swcm} to sliding-window coded modulation and its application in a practical communication setting. In Section~\ref{sec:layering-order}, we present a more complete theory of the sliding-window superposition coding scheme with a discussion on the number of superposition layers and alternative decoding orders.  With further extensions and augmentations, we develop a scheme that achieves the Han--Kobayashi inner bound~\cite{Han--Kobayashi1981} for the two-user interference channel with point-to-point encoders and decoders in Section~\ref{sec:hk}. We offer a couple of concluding remarks in Section~\ref{sec:rmk}.

Throughout the paper, we closely follow the notation in~\cite{El-Gamal--Kim2011}. In particular, for $X \sim p(x)$ and $\e \in (0,1)$, we define the set of $\e$-typical $n$-sequences $x^n$ (or the typical set in short)~\cite{Orlitsky--Roche2001} as
\[
\aep(X) = \bigl\{ x^n : | \#\{i \suchthat x_i = x\}/n - p(x) | \le \e p(x) \text{ for all } x \in \Xc \bigr\}.
\]
We use $X_k^n$ to denote the vector $(X_{k1},X_{k2},\ldots,X_{kn})$. For $n = 1,2,\ldots, [n] = \{1,2,\ldots,n\}$ and for $a \ge 0, [2^a]=\{1,2,\ldots,2^{\left\lceil a \right\rceil}\}$, where $\left\lceil a \right\rceil$ is the smallest integer greater than or equal to $a$. The probability of an event $\mathcal{A}$ is denoted by $\P(\mathcal{A})$.


\section{Two-User Interference Channels}
\label{sec:setup}
Consider the communication system model depicted in Fig.~\ref{fig:2-user-ic}, whereby
senders~1 and~2 wish to communicate independent messages $M_1$ and $M_2$ to their respective
receivers over a shared channel $p(y_1,y_2|x,w)$. Here $X$ and $W$ are channel inputs from senders~1 and~2, respectively,
and $Y_1$ and $Y_2$ are channel outputs at receivers~1 and~2, respectively. In network information theory, this model is commonly referred to
as the \emph{two-user interference channel}. 

The Gaussian interference channel in Fig.~\ref{fig:gaussian-sim}
is an important special case with
channel outputs
\begin{equation} \label{eq:channel}
\begin{split}
    Y_1 = g_{11}X + g_{12}W + N_1, \\
    Y_2 = g_{21}X + g_{22}W + N_2,
\end{split}
\end{equation}
where $g_{jk}$ denotes the channel gain coefficient from sender $k$ to receiver $j$,
and $N_1$ and $N_2$ are independent $\N(0,1)$ noise components.
Under the average power constraint $P$ on each input $X$ and $W$,
we denote the received signal-to-noise ratios (SNRs) as $S_1 = g_{11}^2 P$ and $S_2 = g_{22}^2 P$, and the received interference-to-noise ratios (INRs) as $I_1 = g_{12}^2 P$ and $I_2 = g_{21}^2 P$.

\begin{figure}[t]
\centering
\psfrag{m1}[b]{$M_1$}
\psfrag{m2}[b]{$M_2$}
\psfrag{x1}[b]{$X^n$}
\psfrag{x2}[b]{$W^n$}
\psfrag{y1}[b]{$Y_1^n$}
\psfrag{y2}[b]{$Y_2^n$}
\psfrag{mh1}[b]{$\Mh_1$}
\psfrag{mh2}[b]{$\Mh_2$}
\psfrag{e1}[c]{Encoder $1$}
\psfrag{e2}[c]{Encoder $2$}
\psfrag{d1}[c]{Decoder $1$}
\psfrag{d2}[c]{Decoder $2$}
\psfrag{p}[c]{$p(y_1,y_2|x,w)$}
\includegraphics[scale=0.6]{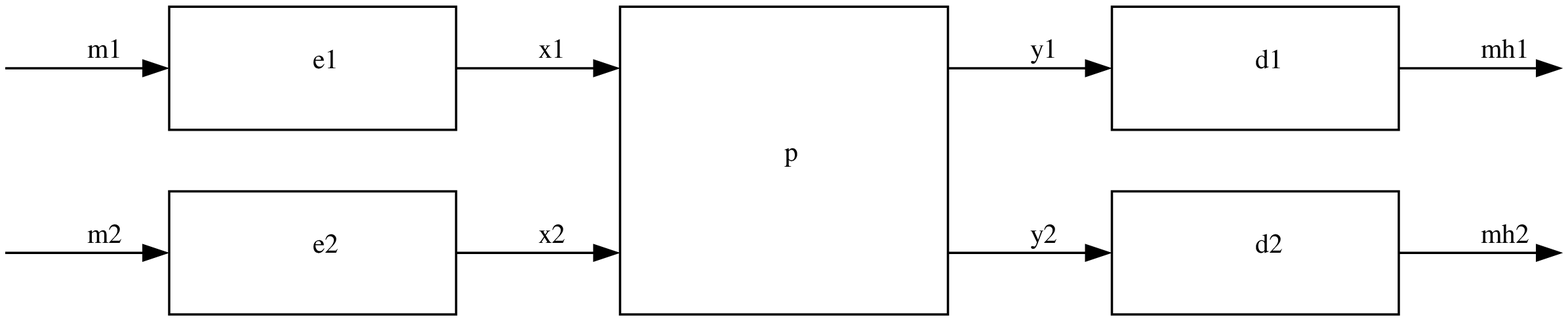}
\caption{The interference channel with two sender--receiver pairs.}
\label{fig:2-user-ic}
\end{figure}

\begin{figure}[t]
\centering
\psfrag{z1}[bl]{\hspace{-.6em}$N_1$}
\psfrag{z2}[tl]{\hspace{-.6em}$N_2$}
\psfrag{x1}[r]{$X$}
\psfrag{x2}[r]{$W$}
\psfrag{y1}[l]{$Y_1$}
\psfrag{y2}[l]{$Y_2$}
\psfrag{a}[t]{$g_{12}$}
\psfrag{b}[b]{$g_{21}$}
\psfrag{c}[t]{$g_{22}$}
\psfrag{d}[b]{$g_{11}$}
\includegraphics[width= .3\linewidth]{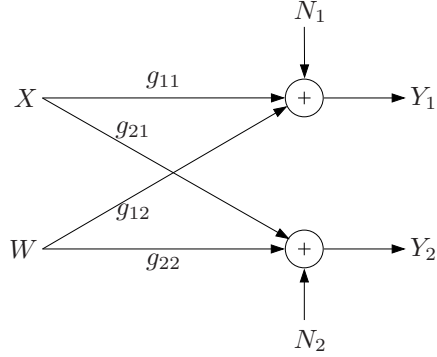}
\medskip
\caption{The two-user Gaussian interference channel.}
\label{fig:gaussian-sim}
\end{figure}

A $(2^{nR_1},2^{nR_2},n)$ code $\Cc_n$ for the (two-user) interference channel consists of
\begin{itemize}
\item two message sets $[2^{nR_1}] := \{1, \ldots, 2^{\lceil nR_1\rceil}\}$ and $[2^{nR_2}]$,

\item two encoders, where encoder~1 assigns a codeword $x^n(m_1)$ to each message 
$m_1 \in [2^{nR_1}]$ and encoder~2 assigns a codeword $w^n(m_2)$ to each message $m_2 \in [2^{nR_2}]$, and

\item two decoders, where decoder~1 assigns an estimate $\mh_1$ or an error message $\error$ to each received sequence $y_1^n$ and decoder~2 assigns an estimate $\mh_2$ or an error message $\error$ to each received sequence $y^n_2$.
\end{itemize}
The performance of a given code $\Cc_n$ for the interference channel is measured by its average probability of error
\[
\pen(\Cc_n) = \P\bigl\{ (\Mh_1, \Mh_2) \ne (M_1,M_2) \bigr\},
\]
where the message pair $(M_1,M_2)$ is uniformly distributed over $[2^{nR_1}]\times [2^{nR_2}]$. 
A rate pair $(R_1,R_2)$ is said to be \emph{achievable} if there exists a sequence of $(2^{nR_1}, 2^{nR_2}, n)$ codes $(\Cc_n)_{n=1}^\infty$ such that $\lim_{n
\to \infty} \pen(\Cc_n) = 0$. A set of rate pairs, typically referred to as a \emph{rate region}, is said to be achievable if every rate pair in the interior of the set is achievable. The \emph{capacity region} is the closure of the set of achievable rate pairs $(R_1, R_2)$, which is the largest achievable rate region and captures the optimal tradeoff between the two rates of reliable communication over the interference channel. The capacity region for the two-user interference channel is not known in general.

Let $p = p(x)p(w)$ be a given product pmf on $\Xc \times \Wc$. Suppose that the codewords $x^n(m_1), m_1 \in [2^{nR_1}]$, and $w^n(m_2), m_2 \in [2^{nR_2}]$, that constitute the codebook are generated randomly and independently according to $\prod_{i=1}^n p_{X}(x_{i})$ and $\prod_{i=1}^n p_{W}(w_{i})$, respectively. We refer to 
the codebooks generated in this manner collectively as the $(2^{nR_1},2^{nR_2},n;p)$ \emph{random code ensemble} (or the \emph{$p$-distributed random code ensemble} in short). 

Fixing the encoders as such, we now consider a few alternative decoding schemes. Here and henceforth, we assume $p = p(x)p(w)$ is fixed and write rate regions without $p$ whenever it is clear from the context.

\begin{itemize}
\item \emph{Treating interference as noise (IAN).} Receiver~1 recovers $M_1$ by treating the interfering codeword $W^n(M_2)$ as noise generated according to a given (memoryless)
distribution $p(w)$. In other words, receiver~1 performs 
point-to-point decoding (either a specific decoding technique or a conceptual scheme)
for the channel 
\begin{align*}
p(y_1^n|x^n) &= \sum_{w^n} p(w^n) p(y_1^n | x^n, w^n) \\
&= \prod_{i=1}^n \sum_{w_i} p_W(w_i)p_{Y_1|X,W}(y_{1i}|x_i,w_i) = \prod_{i=1}^n p_{Y_1|X}(y_{1i}|x_i).
\end{align*}
For example, if joint typicality decoding~\cite[Section 7.7]{Cover--Thomas2006} is used, 
the decoder finds $\mh_1$ such that $(x^n(\mh_1),y_1^n) \in \aep(X,Y_1)$. Similarly, receiver~2 can 
recover $M_2$ by treating $X^n$ as noise. 
For the $p$-distributed random code ensemble, 
treating noise as interference achieves 
\[
\RIAN* = \RIAN1 \cap \RIAN2
\]
where $\RIAN1$ and $\RIAN2$ denote the sets of all rate pairs $(R_1, R_2)$
such that $R_1 \le I(X;Y_1)$ and $R_2 \le I(W;Y_2)$, respectively; see Fig.~\ref{fig:ian}.

\item \emph{Successive cancellation decoding (SCD).} Receiver~1 recovers $M_2$ by treating $X^n$ as noise and then recovers $M_1$ based on $W^n(M_2)$ (and $Y_1^n$). For example, in joint typicality decoding, the decoder finds a unique $\mh_2$ such that $(w^n(\mh_2),y_1^n) \in \aep(W,Y_1)$ and then a unique $\mh_1$ such that $(x^n(\mh_1), w^n(\mh_2), y_1^n) \in \aep(X,W,Y_1)$. Receiver~2 operates in a similar manner. For the $p$-distributed random code ensemble, successive cancellation decoding achieves 
\[
\RSCD* = \RSCD1 \cap \RSCD2,
\]
where $\RSCD1$ consists of $(R_1, R_2)$ such that
\begin{align*}
 R_2 \le I(W;Y_1),\quad
 R_1 \le I(X;Y_1|W),
\end{align*}
and similarly $\RSCD2$ consists of $(R_1,R_2)$ such that
\begin{align*}
 R_1 \le I(X;Y_2),\quad
 R_2 \le I(W;Y_2|X).
\end{align*}
See Fig.~\ref{fig:sc} for an illustration of $\RSCD*$.

\item \emph{Mix and match.} Each receiver can choose 
between treating interference as noise and successive cancellation decoding. 
This mix-and-match achieves 
\begin{equation}
\label{eqn:ian-sc}
(\RIAN1 \cup \RSCD1) \cap (\RIAN2 \cup \RSCD2).
\end{equation}
The achievable rate region for mixing and matching is illustrated in Fig.~\ref{fig:comp}.

\item \emph{Simultaneous (nonunique) decoding (SND).} Receiver~1 recovers both the desired message $M_1$ and the interfering message $M_2$ simultaneously. It then keeps $M_1$ as the message estimate and ignores the error in estimating $M_2$. Receiver~2 operates in a similar manner. For example, in joint typicality decoding, receiver~1 finds a unique $\mh_1$ such that $(x^n(\mh_1),w^n(m_2),y_1^n) \in \aep(X,W,Y_1)$ for some $m_2 \in [2^{nR_2}]$, and receiver~2 finds a unique $\mh_2$ such that $(x^n(m_1),w^n(\mh_2),y_2^n) \in \aep(X,W,Y_2)$ for some $m_1 \in [2^{nR_1}]$. For the $p$-distributed random code ensemble, simultaneous decoding achieves
\[
\Rr_{\mathrm{SND}} = \Rr_{1,\mathrm{SND}} \cap \Rr_{2,\mathrm{SND}},
\]
where $\Rr_{1,\mathrm{SND}}$ consists of $(R_1,R_2)$ such that
\begin{equation}
\label{eq:mld-region1a}
~~R_1 \le I(X;Y_1)
\end{equation}
or
\begin{equation}
\label{eq:mld-region1b}
\begin{split}
 R_2 &\le I(W;Y_1|X),\\
 R_1 + R_2 &\le I(X,W;Y_1),
\end{split}
\end{equation}
and $\Rr_{2,\mathrm{SND}}$ is characterized by index substitution $1 \leftrightarrow 2$ and variable substitution $X \leftrightarrow W$ in \eqref{eq:mld-region1a} and~\eqref{eq:mld-region1b}, i.e., 
\[
~~R_2 \le I(W;Y_2)
\]
or 
\begin{align*}
 R_1 &\le I(X;Y_2|W),\\
 R_1 + R_2 &\le I(X,W;Y_2).
\end{align*}
Note that $\Rr_{\mathrm{SND}}$ can be written as
\begin{align}
 \Rr_{\mathrm{SND}} &= \left(\Rr_{1,\mathrm{IAN}} \cup \Rr_{1,\mathrm{SD}}\right) \cap \left(\Rr_{2,\mathrm{IAN}} \cup \Rr_{2,\mathrm{SD}}\right)\nonumber\\
 &= \left(\Rr_{1,\mathrm{IAN}} \cap \Rr_{2,\mathrm{IAN}}\right) \cup \left(\Rr_{1,\mathrm{SD}} \cap \Rr_{2,\mathrm{IAN}}\right) \cup \left(\Rr_{1,\mathrm{IAN}} \cap \Rr_{2,\mathrm{SD}}\right) \cup \left(\Rr_{1,\mathrm{SD}} \cap \Rr_{2,\mathrm{SD}}\right),\label{eqn:component}
\end{align}
where $\Rr_{1,\mathrm{SD}}$ is defined as the set of rate pairs $(R_1,R_2)$ such that
\begin{equation}
\label{eqn:sd}
\begin{split}
 R_1 &\le I(X;Y_1|W),\\
 R_2 &\le I(W;Y_1|X),\\
 R_1 + R_2 &\le I(X,W;Y_1), 
\end{split}
\end{equation}
and $\Rr_{2,\mathrm{SD}}$ is defined similarly by making the index substitution $1 \leftrightarrow 2$ and variable substitution $X \leftrightarrow W$ in $\Rr_{1,\mathrm{SD}}$.  
\end{itemize}

\begin{figure}[t!]
\centering
\begin{subfigure}[b]{0.48\textwidth}
\centering
\small
\psfrag{r1}[b]{$R_2$}
\psfrag{r2}[l]{$R_1$}
\psfrag{i1}[cc]{$I(X;Y_1)$}
\psfrag{i2}[r]{$I(W;Y_2)$}
\psfrag{rr1}[l]{$\RIAN1$}
\psfrag{rr2}[l]{$\RIAN2$}
\includegraphics[scale=0.6]{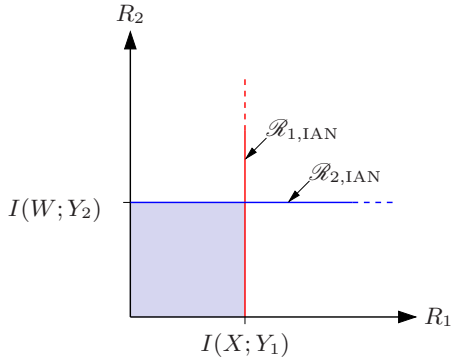}\\[.5em]
\caption{$\RIAN*$ is the intersection of 
the red-lined region $\RIAN1$ and the blue-lined region $\RIAN2$.}
\label{fig:ian}
\end{subfigure}%
~~
\begin{subfigure}[b]{0.48\textwidth}
 \centering
 \small
\psfrag{r1}[b]{$R_2$}
\psfrag{r2}[l]{$R_1$}
\psfrag{i1}[r]{$I(W;Y_2|X)$}
\psfrag{i2}[r]{$I(W;Y_1)$}
\psfrag{i3}[cc]{$I(X;Y_2)$}
\psfrag{i4}[cc]{$I(X;Y_1|W)$}
\psfrag{rr1}[l]{$\RSCD1$}
\psfrag{rr2}[l]{$\RSCD2$}
\includegraphics[scale=0.6]{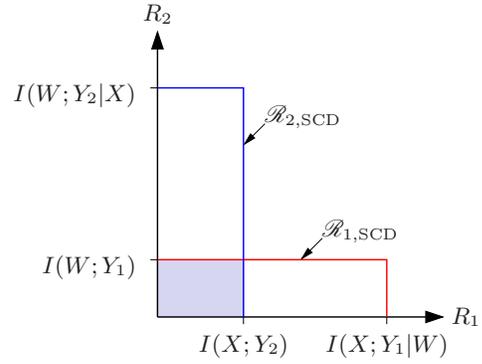}\\[.5em]
\caption{$\RSCD*$ is the intersection of
the red-lined region $\RSCD1$ and the blue-lined region $\RSCD2$.}
\label{fig:sc}
\end{subfigure}%
\\[1em]
\begin{subfigure}[b]{0.48\textwidth}
 \centering
 \small
\psfrag{r1}[b]{$R_2$}
\psfrag{r2}[l]{$R_1$}
\psfrag{i1}[r]{$I(W;Y_2|X)$}
\psfrag{i2}[r]{$I(W;Y_2)$}
\psfrag{i3}[r]{$I(W;Y_1)$}
\psfrag{i4}[cc]{$I(X;Y_2)$}
\psfrag{i5}[cc]{$I(X;Y_1)$}
\psfrag{i6}[cc]{$I(X;Y_1|W)$}
\psfrag{rr1}[l]{$\RIAN1 \cup \RSCD1$}
\psfrag{rr2}[l]{$\RIAN2 \cup \RSCD2$}
\includegraphics[scale = .6]{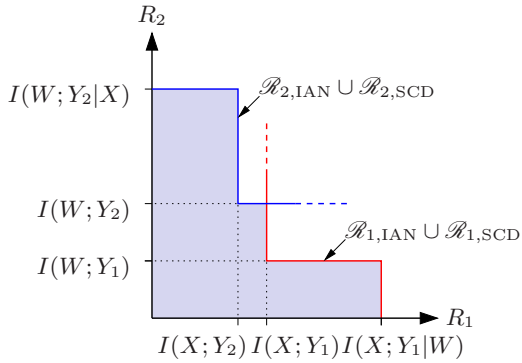}\\[.5em]
\caption{The mix-and-match region is the intersection of the red-lined region $\RIAN1 \cup \RSCD1$ and the blue-lined region $\RIAN2 \cup \RSCD2$.}
\label{fig:comp}
\end{subfigure}%
~~
\begin{subfigure}[b]{.48\textwidth}
 \centering
 \small
 \psfrag{r1}[b]{$R_2$}
\psfrag{r2}[l]{$R_1$}
\psfrag{i1}[r]{$I(W;Y_2|X)$}
\psfrag{i2}[r]{$I(W;Y_2)$}
\psfrag{i3}[r]{$I(W;Y_1)$}
\psfrag{i4}[cc]{$I(X;Y_2)$}
\psfrag{i5}[cc]{$I(X;Y_1)$}
\psfrag{i6}[cc]{$I(X;Y_1|W)$}
\psfrag{i7}[r]{$I(W;Y_1|X)$}
\psfrag{i8}[cc]{$I(X;Y_2|W)$}
\psfrag{rr1}[l]{$\Rr_{1,\mathrm{SND}}$}
\psfrag{rr2}[l]{$\Rr_{2,\mathrm{SND}}$}
\includegraphics[scale=0.6]{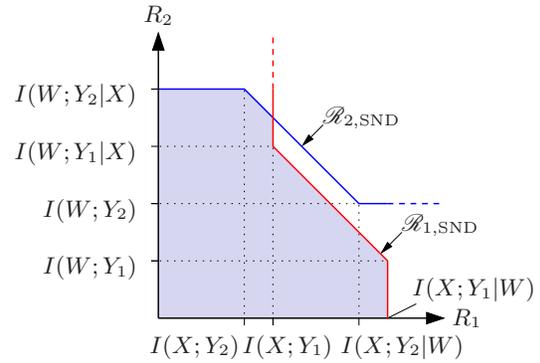}\\[.5em]
\caption{$\Rr_{\mathrm{SND}}$ is the intersection of the red-lined region $\Rr_{1,\mathrm{SND}}$
and the blue-lined region $\Rr_{2,\mathrm{SND}}$. $\Rr_{\mathrm{SND}}$ is identical to the MLD region $\Rr^*$.}
\label{fig:mld}
\end{subfigure}%
\caption{Illustration of the MLD, IAN, SCD regions and their comparison.}
\end{figure}

As illustrated in Fig.~\ref{fig:mld}, $\Rr_{\mathrm{SND}}$ is in general strictly larger than the mix-and-match region in~\eqref{eqn:ian-sc}.

It turns out no decoding rule can improve upon $\Rr_{\mathrm{SND}}$. More precisely, given any codebook $\{(x^n(m_1),w^n(m_2))\}$, the probability of decoding error is minimized by the maximum likelihood decoding (MLD) rule
\begin{equation}
\label{eqn:mld}
\begin{split}
 \mh_1 = \arg\max_{m_1} \sum_{m_2} \prod_{i=1}^n p_{Y_1|X,W} (y_{1i}|x_{i}(m_1),w_{i}(m_2)),\\
 \mh_2 = \arg\max_{m_2} \sum_{m_1} \prod_{i=1}^n p_{Y_2|X,W} (y_{2i}|x_{i}(m_1),w_{i}(m_2)).
 \end{split}
\end{equation}
The \emph{optimal rate region} (or the MLD region) $\Rr^*(p)$  for the $p$-distributed random code ensembles 
is the closure of the set of rate pairs $(R_1,R_2)$ such that 
the sequence of $(2^{nR_1},2^{nR_2},n;p)$ random code ensembles satisfies
\[
 \lim_{n \to \infty} \E[\pen(\Cc_n)] = 0,
\]
where the expectation is with respect to the randomness in codebook generation. It is established in~\cite{Bandemer--El-Gamal--Kim2015} that SND is optimal for the $p$-distributed random code ensembles, i.e.,
\[
 \Rr^* = \Rr_{\mathrm{SND}}.
\]

As shown in Fig.~\ref{fig:mld}, $\Rr^* = \Rr_{\mathrm{SND}}$ is in general strictly larger than the mix-and-match region in~\eqref{eqn:ian-sc}, the gain of which may be attributed to high-complexity multiple sequence detection. The goal of this paper is to develop a coding scheme that achieves $\Rr^*$ using \emph{low-complexity} encoders and decoders.

\section{Rate Splitting for the Interference Channel}
\label{sec:rs}

In order to improve upon the mix-and-match scheme in the previous section at comparable complexity, one can incorporate the rate-splitting technique by Rimoldi and Urbanke~\cite{Rimoldi--Urbanke1996} and Grant, Rimoldi, Urbanke, and Whiting~\cite{Grant--Rimoldi--Urbanke--Whiting2001}. 

\subsection{Rate-Splitting Multiple Access}
\label{sec:rs-one}

Consider the multiple access channel $p(y_1|x,w)$ with two inputs $X$ and $W$ and the common output $Y_1$. It is well-known that simultaneous decoding of the random code ensemble generated according to $p = p(x)p(w)$ achieves $\Rr_{1,\mathrm{SD}}(p)$ in~\eqref{eqn:sd}. In the following, we show how to achieve this region via rate splitting with point-to-point decoders.

Suppose that the message $M_1 \in [2^{nR_1}]$ is split into two parts $(M_{11},M_{12}) \in [2^{nR_{11}}]\times [2^{nR_{12}}]$ while the message $M_2 \in [2^{nR_2}]$ is not split. The messages $m_{11}$ and $m_{12}$ are encoded into codewords $x_1^n$ and $x_2^n$, respectively, which are then symbol-by-symbol mapped to the transmitted sequence $x^n$, that is, $x_i(m_{11},m_{12}) = x(x_{1i}(m_{11}),x_{2i}(m_{12}))$, $i \in [n]$, for some function $x(x_1,x_2)$. The message $m_2$ is mapped to $w^n$. For decoding, the receiver recovers $\mh_{11}$, $\mh_2$, and $\mh_{12}$, successively, which is denoted as the decoding order 
\[
d_1 \suchthat  \mh_{11} \to \mh_2 \to \mh_{12}.
\]
This rate-splitting scheme~\cite{Rimoldi--Urbanke1996} with so-called homogeneous superposition coding~\cite{Wang--Sasoglu--Bandemer--Kim2013} and successive cancellation decoding in Fig.~\ref{fig:rs-mac} can be easily implemented by low-complexity point-to-point encoders and decoders.
\begin{figure}[b]
\centering
\psfrag{m1}[b]{$M_{11}$}
\psfrag{m2}[b]{$M_{12}$}
\psfrag{m3}[b]{$M_2$}
\psfrag{n}[b]{}
\psfrag{u}[b]{$X_1^n$}
\psfrag{v}[b]{$X_2^n$}
\psfrag{x1}[b]{$X^n$}
\psfrag{x2}[b]{$W^n$}
\psfrag{y1}[b]{$Y_1^n$}
\psfrag{y2}[b]{$Y_2^n$}
\psfrag{mh1}[b]{$\Mh_{11} \to \Mh_2 \to \Mh_{12}$}
\psfrag{mh2}[b]{$\Mh_{11} \to \Mh_{12} \to \Mh_2$}
\psfrag{e1}[c]{Encoder $1$}
\psfrag{e2}[c]{Encoder $2$}
\psfrag{d1}[c]{Decoder $1$}
\psfrag{d2}[c]{Decoder $2$}
\psfrag{p}[c]{$p(y_1|x,w)$}
\psfrag{p2}[c]{$p(y_2|x,w)$}
\includegraphics[scale=0.6]{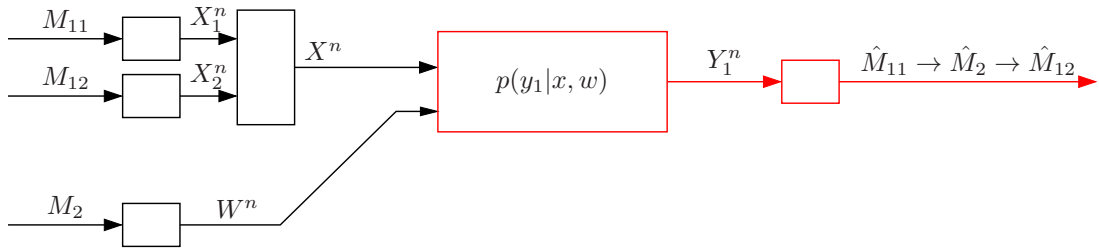}%
\caption{Rate-splitting with successive cancellation for receiver~1.}
\label{fig:rs-mac}
\end{figure}
Following the standard analysis for random code ensembles generated by $p'(x_1)p'(x_2)p'(w)$, decoding is successful if
\begin{align*}
R_{11} &< I(X_1;Y_1),\\
R_{2} &< I(W;Y_1|X_1),\\
R_{12} &< I(X_2;Y_1|X_1,W) = I(X;Y_1|X_1,W).
\end{align*}
By setting $R_1 = R_{11} + R_{12}$, it follows that the scheme achieves the rate region $\Rr_{\mathrm{RS}}(p)$ consisting of $(R_1,R_2)$ such that
\begin{equation}
\label{eqn:rs-mac}
\begin{split}
 R_1 &\le I(X_1;Y_1) + I(X;Y_1|X_1,W),\\
 R_2 &\le I(W;Y_1|X_1).
 \end{split}
\end{equation}
By varying $p'(x_1)p'(x_2)$ and $x(x_1,x_2)$, while maintaining $p'(x) = \sum_{x_1,x_2\suchthat x(x_1,x_2) = x} p'(x_1)p'(x_2) = p(x)$ and $p'(w) = p(w)$, which we compactly denote by $p' \simeq p$, the rectangular region~\eqref{eqn:rs-mac} traces the boundary of rate region $\Rr_{1,\mathrm{SD}}(p)$. More precisely, we have the following identity; see Appendix~\ref{appx:layer-splitting} for the proof.

\begin{lemma}[Layer-splitting lemma~\cite{Grant--Rimoldi--Urbanke--Whiting2001}]
 \label{lem:layer-splitting}
\[
 \Rr_{1,\mathrm{SD}}(p) = \bigcup_{p' \simeq p} \Rr_{\mathrm{RS}}(p').
\]
\end{lemma}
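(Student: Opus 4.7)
The plan is to prove the two inclusions separately. For the forward direction $\bigcup_{p' \simeq p} \Rr_{\mathrm{RS}}(p') \subseteq \Rr_{1,\mathrm{SD}}(p)$, fix any splitting $p' \simeq p$ and verify that the corner $(R_1^*, R_2^*) = \bigl(I(X_1;Y_1) + I(X;Y_1|X_1,W),\; I(W;Y_1|X_1)\bigr)$ of the rectangle $\Rr_{\mathrm{RS}}(p')$ satisfies all three MAC pentagon constraints. The sum bound holds with equality: by the chain rule,
$I(X_1;Y_1) + I(W;Y_1|X_1) + I(X;Y_1|X_1,W) = I(X_1,W,X;Y_1) = I(X,W;Y_1),$
where the last step uses $Y_1 \perp X_1 \mid (X,W)$ from the memoryless channel $p(y_1|x,w)$. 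The bound $R_1^* \le I(X;Y_1|W)$ follows from $I(X_1;Y_1) \le I(X_1;Y_1|W)$ (which holds since $X_1 \perp W$, giving $I(X_1;Y_1|W) - I(X_1;Y_1) = I(X_1;W|Y_1) \ge 0$) combined with the chain-rule identity $I(X_1;Y_1|W) + I(X;Y_1|X_1,W) = I(X;Y_1|W)$ (again via $Y_1 \perp X_1 \mid X,W$). The bound $R_2^* \le I(W;Y_1|X)$ follows from $W \perp X \mid X_1$ (a consequence of $X_1 \perp X_2 \perp W$ with $X = x(X_1,X_2)$) via the analogous chain-rule manipulation.

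For the reverse direction, I will exhibit, for each point on the dominant face $\{(R_1,R_2) : R_1+R_2 = I(X,W;Y_1),\; R_2 \in [I(W;Y_1), I(W;Y_1|X)]\}$ of $\Rr_{1,\mathrm{SD}}(p)$, a splitting whose rectangle corner sits there. Since each rectangle extends down to the origin and every point of the pentagon is dominated by some point on the dominant face (via the sum-rate bound), this covers $\Rr_{1,\mathrm{SD}}(p)$. Concretely, for $\alpha \in [0,1]$, augment $\Xc$ with a dummy symbol $*$, draw $U \sim \Bern(\alpha)$ independent of $(X,W)$, and set $X_1 = X$ when $U = 0$ and $X_1 = *$ when $U = 1$; take $X_2$ to be an independent copy of $X$ and let $x(X_1,X_2) = X_1$ if $X_1 \neq *$ and $x(X_1,X_2) = X_2$ otherwise. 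Then $X_1 \perp X_2$ and the induced channel input $X = x(X_1,X_2)$ has marginal $p$, so $p' \simeq p$. Conditioning on the events $\{X_1 = *\}$ and $\{X_1 \neq *\}$ yields
$I(W;Y_1 \mid X_1) = \alpha\, I(W;Y_1) + (1-\alpha)\, I(W;Y_1 \mid X),$
which sweeps continuously over $[I(W;Y_1),\, I(W;Y_1 \mid X)]$ as $\alpha$ varies from $1$ to $0$. Together with the sum-rate identity $R_1^* + R_2^* = I(X,W;Y_1)$ from the forward direction, the corner $(R_1^*, R_2^*)$ thereby traces the entire dominant face.

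The main obstacle is the reverse inclusion, in particular the design of a splitting family that realizes every point on the dominant face. The erasure-style construction above is the simplest I see that keeps $X_1 \perp X_2$ while preserving the marginal of $X$; once it is in hand, the conditional mutual information identity is immediate by conditioning on $X_1 = *$ versus $X_1 \neq *$. The forward direction is routine chain-rule bookkeeping once the two key conditional-independence relations $Y_1 \perp X_1 \mid (X,W)$ and $W \perp X \mid X_1$ are recorded.
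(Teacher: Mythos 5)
Your proof is correct and follows essentially the same route as the paper's: the forward inclusion is established by the same chain-rule bookkeeping using $X_1 \perp W$, $W \perp (X_1,X_2)$, and the Markov relation $X_1 \to X \to (W,Y_1)$, and the reverse inclusion uses the identical erasure-channel construction (the paper writes it with an explicit erasure indicator $E = \mathbbm{1}\{X_1 = \error\}$, but the decomposition of $I(W;Y_1|X_1)$ by conditioning on the erasure event is the same computation). The only cosmetic difference is that your description of $X_1$ as first derived from $X$ via $U$, and then retrofitted into the independent $(X_1,X_2)$ form, reads a bit circularly, but the resulting $(p'(x_1), p'(x_2), x(x_1,x_2))$ triple is exactly the paper's and the argument is sound.
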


\begin{remark}
\label{rmk:sd-rs}
 Simultaneous decoding of $\Mh_{11},\Mh_{12}$, and $\Mh_{2}$ cannot achieve rates beyond $\Rr_{1,\mathrm{SD}}(p)$ and therefore it does not improve upon (the union of) successive cancellation decoding for the multiple access channel.
\end{remark}

\subsection{Rate Splitting for the Interference Channel}
\label{sec:rs-subopt}
The main idea of rate splitting for the multiple access channel is to represent the messages by multiple parts and encode each into one of the superposition layers. Combined with successive cancellation decoding, this superposition coding scheme transforms the multiple access channel into a sequence of point-to-point channels, over which low-complexity encoders and decoders can be used. For the interference channel with multiple receivers, however, this rate-splitting scheme can no longer achieve the rate region of simultaneous decoding (cf. Remark~\ref{rmk:sd-rs}). The root cause of this deficiency is not rate splitting per se, but
suboptimal successive cancellation decoding. Indeed, proper rate splitting can achieve rates better than no splitting when simultaneous decoding is used (cf.~Han--Kobayashi coding).

To understand the limitations of successive cancellation decoding, we consider the rate-splitting scheme with the same encoder structure as before and two decoding orders
\begin{align*}
d_1 \suchthat & \mh_{11} \to \mh_2 \to \mh_{12}, \\
d_2 \suchthat & \mh_{11} \to \mh_{12} \to \mh_2,
\end{align*}
as depicted in Fig.~\ref{fig:rs}.
\begin{figure}[htbp]
\centering
\psfrag{m1}[b]{$M_{11}$}
\psfrag{m2}[b]{$M_{12}$}
\psfrag{m3}[b]{$M_2$}
\psfrag{n}[b]{}
\psfrag{u}[b]{$X_1^n$}
\psfrag{v}[b]{$X_2^n$}
\psfrag{x1}[b]{$X^n$}
\psfrag{x2}[b]{$W^n$}
\psfrag{y1}[b]{$Y_1^n$}
\psfrag{y2}[b]{$Y_2^n$}
\psfrag{mh1}[b]{$\Mh_{11} \to \Mh_2 \to \Mh_{12}$}
\psfrag{mh2}[b]{$\Mh_{11} \to \Mh_{12} \to \Mh_2$}
\psfrag{e1}[c]{Encoder $1$}
\psfrag{e2}[c]{Encoder $2$}
\psfrag{d1}[c]{Decoder $1$}
\psfrag{d2}[c]{Decoder $2$}
\psfrag{p}[c]{$p(y_1|x,w)$}
\psfrag{p2}[c]{$p(y_2|x,w)$}
\includegraphics[scale=0.6]{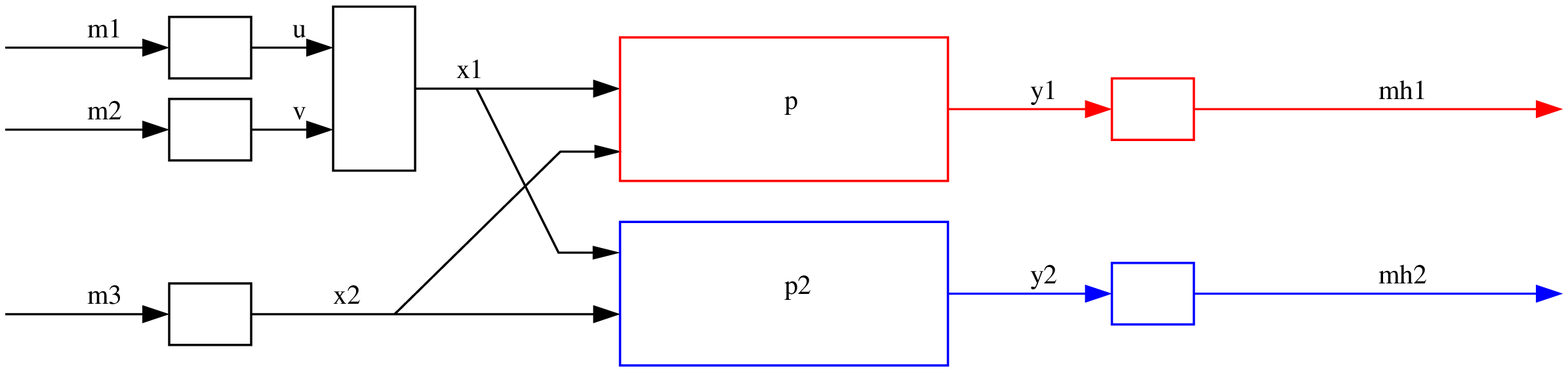}%
\caption{Rate-splitting with successive cancellation in the two-user interference channel.}
\label{fig:rs}
\end{figure}
Following the standard analysis, decoding is successful at receiver~1 if
\begin{subequations}\label{eqn:ex-grp}
\begin{align}
R_{11} &< I(X_1;Y_1),\\
R_{2} &< I(W;Y_1|X_1),\\
R_{12} &< I(X;Y_1|X_1,W).\\
\intertext{and at receiver~2 if }
R_{11} &< I(X_1;Y_2),\label{eqn:ex-r11}\\
R_{12} &< I(X;Y_2|X_1),\label{eqn:ex-r12}\\
R_2 &< I(W;Y_2|X).
\end{align}
\end{subequations}
By Fourier--Motzkin elimination, this scheme achieves the rate region consisting of $(R_1,R_2)$ such that
\begin{subequations}\label{eqn:ex-region}
\begin{align}
R_1 &\le \min\{I(X_1;Y_1),\, I(X_1;Y_2)\} + \min\{ I(X;Y_1|X_1,W),\, I(X;Y_2|X_1)\},
\label{eqn:ex-r1}\\
R_2 &\le \min\{I(W;Y_1|X_1),\, I(W;Y_2|X)\}. \label{eqn:ex-r2}
\end{align}
\end{subequations}

\begin{remark}[Min of the sum vs. sum of the min]
\label{rmk:minsum}
We note a common misconception in the literature, reported also in~\cite{Fawzi--Savov2012} (see the references therein), that the bounds on $R_{11}$ and $R_{12}$ in~\eqref{eqn:ex-grp} would simplify to
\begin{equation}
\label{eqn:min-sum}
 R_1 \le \min\{I(X_1;Y_1)+I(X;Y_1|X_1,W),\, I(X;Y_2)\},
\end{equation}
which could be sufficient to achieve the MLD region $\Rr^*(p)$ in Section~\ref{sec:setup}. This conclusion is incorrect, since the bound in~\eqref{eqn:ex-r1} is strictly smaller than~\eqref{eqn:min-sum} in general. In fact, the rate region in~\eqref{eqn:ex-region}, even after taking the union over all $p'\simeq p$ is strictly smaller than $\Rr^*(p)$. In order to ensure reliable communication over two different underlying multiple access channels $p(y_i|x,w), i = 1,2$, the message parts in the rate-splitting scheme have to be loaded at the rate of the worse channel on each superposition layer, which in general incurs a total rate loss.
\end{remark}

%

It turns out that this deficiency is fundamental and cannot be overcome by introducing more superposition layers and different decoding orders (which include treating interference as noise $d_1\suchthat \mh_{11} \to \mh_{12}$ and $d_2\suchthat \mh_2$). To be more precise, we define the general $(p',s,t,d_1,d_2)$ rate-splitting scheme. The message $M_1$ is split into $s$ independent parts $M_{11}, M_{12}, \ldots, M_{1s}$ with rates $R_{11}, R_{12}, \ldots, R_{1s}$, respectively, and the message $M_2$ is split into $t$ independent parts $M_{21}, M_{22}, \ldots, M_{2t}$ at rates $R_{21}, R_{22}, \ldots, R_{2t}$, respectively. These messages are encoded by the random code ensemble generated according to $p' = \bigl(\prod_{j=1}^s p'(x_j)\bigr) \bigl(\prod_{j=1}^t p'(w_j)\bigr)$ and the corresponding codewords are superimposed by symbol-by-symbol mappings $x(x_1,\ldots,x_s)$ and $w(w_1,\ldots,w_t)$. The receivers use successive cancellation decoding with decoding orders $d_1$ and $d_2$, where $d_1$ is an ordering of elements in $\{\mh_{11},\ldots,\mh_{1s}, \mh_{21}, \ldots, \mh_{2k}\}$, $k \le t$, and $d_2$ is an ordering of elements in $\{\mh_{11},\ldots,\mh_{1l}, \mh_{21},$ $\ldots, \mh_{2t}\}$, $l \le s$. The achievable rate region of this rate-splitting scheme is denoted by $\Rr_{\mathrm{RS}}(p',s,t,d_1,d_2)$. We establish the following statement in Appendix~\ref{appx:insufficiency}.

\begin{theorem}
\label{thm:insufficiency}
There exists an interference channel $p(y_1,y_2|x,w)$ and some input pmf $p = p(x)p(w)$ such that
\[
\bigcup_{p' \simeq p}\Rr_{\mathrm{RS}}(p',s,t,d_1,d_2) \subsetneq \Rr^*(p)
\]
for any finite $s$ and $t$, and decoding orders $d_1$ and $d_2$.
\end{theorem}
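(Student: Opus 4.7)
The plan is to exhibit a concrete two-user interference channel $p(y_1,y_2|x,w)$ together with a product input pmf $p(x)p(w)$ for which some rate pair $(R_1^\star,R_2^\star)$ in the interior of the MLD region $\Rr^*(p)$ cannot be reached by any $(p',s,t,d_1,d_2)$ rate-splitting scheme. A natural target point sits on a dominant face of $\Rr^*(p)$ where a sum-rate bound such as $R_1+R_2 \le I(X,W;Y_1)$ is tight and where the per-symbol mutual-information rankings over $X$ and $W$ at the two receivers disagree. A small binary-input example (for instance, a deterministic interference channel, or a binary symmetric interference channel with mismatched crossover probabilities on the two cross links) should make the numerical verification transparent while still covering the phenomenon identified in Remark~\ref{rmk:minsum}.

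The next step is to derive a uniform upper bound on the sum rate achievable by any $(p',s,t,d_1,d_2)$ rate-splitting scheme. By the standard successive-cancellation analysis, each layer $X_j$ required to be decoded at receiver~$i$ contributes a rate at most $I(X_j;Y_i \mid \text{previously decoded layers at receiver } i)$, with an analogous bound for the $W_j$ layers. For the layers $X_1,\ldots,X_l$ and $W_1,\ldots,W_k$ that are decoded at both receivers, the per-layer rate is bounded by the minimum of the two conditional mutual informations. Summing over all layers and applying the chain rule at each receiver gives an expression of the schematic form
\begin{equation*}
R_1+R_2 \;\le\; \sum_{j=1}^{l}\min\bigl(a_j^X,b_j^X\bigr) + \sum_{j=1}^{k}\min\bigl(a_j^W,b_j^W\bigr) + \bigl(\text{private-layer terms}\bigr),
\end{equation*}
where $a_j^\bullet$ and $b_j^\bullet$ denote the layer-wise conditional mutual informations evaluated at receivers~1 and~2 respectively.

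The crux is the elementary inequality $\sum_j \min(a_j,b_j) \le \min\bigl(\sum_j a_j,\sum_j b_j\bigr)$, which is strict unless, for every shared layer, the pointwise minimum is attained by the same receiver index. I would choose the channel, the pmf, and the target point $(R_1^\star,R_2^\star)$ so that this consistency is structurally impossible for any splitting: at least one shared layer must be worse at receiver~1 while another must be worse at receiver~2, no matter how $p'\simeq p$ and the superposition mappings $x(x_1,\ldots,x_s)$, $w(w_1,\ldots,w_t)$ are chosen. The resulting bound then falls strictly below $\min\{I(X,W;Y_1),I(X,W;Y_2)\}$, which is exactly the value attained at the target point, giving the desired strict containment.

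The main obstacle is controlling arbitrary decoding orders together with arbitrary numbers of layers, since each $a_j$ and $b_j$ depends on the interleaving of $X$-layers and $W$-layers within $d_1$ and $d_2$. To avoid a combinatorial explosion I would aggregate layers at each receiver into just two blocks, shared versus private, and collapse each block via the chain rule to a single mutual information that depends only on which variables have already been decoded at that receiver. The remaining degrees of freedom reduce to the two partition indices $0 \le l \le s$ and $0 \le k \le t$ and an outer maximization over $p'$, which can be handled by a compact case analysis. If the coarse bound still leaves a gap for a boundary case, the fallback is the direct per-layer argument in the spirit of Remark~\ref{rmk:minsum}, showing that the worse-of-two-channels penalty on each shared superposition layer is irreducible and that no reallocation of rate across layers or permutation of decoding orders can recover it.
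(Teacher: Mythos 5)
Your proposal captures the correct intuition from Remark~\ref{rmk:minsum} (the ``sum of min vs.\ min of sum'' loss), but it has a critical gap at exactly the point where the real work lies. You write that you would ``choose the channel, the pmf, and the target point so that this consistency is structurally impossible for any splitting,'' but you do not supply that argument, and it is not something one can simply arrange by inspection: the adversary controls the full auxiliary distribution $p' \simeq p$, the superposition maps, the number of layers $s$ and $t$, and the decoding orders $d_1,d_2$. A priori nothing rules out a clever $p'$ for which, on every shared layer, the same receiver attains the pointwise minimum, making $\sum_j \min(a_j,b_j) = \min\bigl(\sum_j a_j, \sum_j b_j\bigr)$ with equality. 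Moreover, your aggregation-by-chain-rule step is not as clean as stated: the conditioning in $a_j^X$ and $b_j^X$ is different at the two receivers because the interleaved $W$-layers (and which of them are decoded at all) differ across $d_1$ and $d_2$, so $\sum_j a_j^X$ does not automatically telescope to $I(X;Y_1)$ with a consistent conditioning set.

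The paper's proof is built precisely to close this gap, and the route is genuinely different in character. It picks the symmetric Gaussian interference channel with strong-but-not-very-strong interference and targets the corner point $\bigl(\C(I/(1+S)),\C(S)\bigr)$. It then (i) reduces to a canonical decoding order at receiver~2 that puts all $X$-layers before all $W$-layers (Lemma~\ref{lem:d2}), (ii) reduces to equal numbers of layers with an alternating order at receiver~1 (Lemma~\ref{lem:d1}), (iii) shows the aggregate inputs must be $\N(0,P)$ to hit the corner (Lemma~\ref{lem:gaussian}), and (iv) runs an iterated entropy-power-inequality argument, together with a uniqueness lemma for Gaussian inputs (Lemma~\ref{lem:independence}), to prove that each successive split layer must be \emph{independent} of the transmitted signal, i.e., the split is degenerate; the remaining single-layer constraint then forces $R_1 \le \C(S/(1+I)) < \C(I/(1+S))$, a contradiction. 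None of this is captured by the sum-of-min observation alone, which is why the paper states it only as informal intuition in Remark~\ref{rmk:minsum}. Your plan, as written, stops short of proving the theorem.
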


\begin{remark}
\label{rmk:bottleneck}
It can be easily checked that the first three regions in the decomposition of $\Rr^*$ in~\eqref{eqn:component} are achievable by properly chosen $(p',2,1,d_1,d_2)$ rate-splitting schemes. The fourth region $\Rr_{1,\mathrm{SD}} \cap \Rr_{2,\mathrm{SD}}$ is the bottleneck in achieving the entire $\Rr^*$ with rate splitting and successive cancellation. 
\end{remark}

\section{Sliding-Window Superposition Coding}
\label{sec:swsc-sd}

In this section, we develop a new coding scheme, termed sliding-window superposition coding (SWSC), that overcomes the limitation of rate splitting by encoding the message to multiple superposition layers across consecutive blocks.

\subsection{Corner Points}
\label{sec:2-1split}
We first show how to achieve the rate region in~\eqref{eqn:ex-r2} and~\eqref{eqn:min-sum}, which will be shown to be sufficient to achieve the corner points of $\Rr_{1,\mathrm{SD}} \cap \Rr_{2,\mathrm{SD}}$. 

In SWSC, we consider a stream of messages, $(m_1(1),m_2(1)), (m_1(2),m_2(2)), \ldots,$ to be communicated over multiple blocks. As before, $m_2(j)$ is encoded into a codeword $w^n$ to be transmitted in block $j$. The message $m_1(j)$, which was split and transmitted in two layers $X_1$ and $X_2$ in the previous rate-splitting scheme, is now encoded into two sequences $x_2^n$ and $x_1^n$ to be transmitted in two consecutive blocks $j$ and $j+1$, respectively; see Table~\ref{tab:sw-rs}. The transmitted sequence $x^n$ in block $j$ is the symbol-by-symbol superposition of $x_1^n(m_1(j))$ and $x_2^n(m_1(j-1))$, which has the same superposition coding structure as in the rate-splitting scheme, but without actual splitting of message rates. Note that similar diagonal transmission of message streams has been already used in block Markov coding for relaying and feedback communication~\cite{Cover--El-Gamal1979,Cover--Leung1981}. For $b$ blocks of communication, the scheme is initialized with $m_1(0) = 1$ and terminated with $m_1(b) = 1$, incurring a slight rate loss. 

\begin{table}[b]
\centering
\begin{tabular}{cc@{}c@{}c@{}c@{}ccc@{}c@{}cc}
block $\j$ & $1$ && $2$ && $3$ & $\cdots$  & $b-1$ && $b$ & \hphantom{AAA}\\
\hline\\[-.8em]
$X_1$ & $1$ && $m_{1}(1)$ & & $m_{1}(2)$  & $\ldots$ & $\ldots$ & & $m_{1}(b-1)$\\[-.2em]
&& $\diagup$ && $\diagup$ && && $\diagup$ & \\[-.2em]
$X_2$ & $m_{1}(1)$ && $m_{1}(2)$ && $\ldots$  & $\ldots$ & $m_{1}(b-1)$ && $1$\\[.5em]
$W$ & $m_{2}(1)$ && $m_{2}(2)$ && $\ldots$ & $\ldots$ & $\ldots$ && $m_{2}(b)$\\[.5em]
\hline\\[-.8em]
&&& $\mh_{1}(1)$ && $\mh_{1}(2)$ & $\ldots$ & $\ldots$ && $\mh_{1}(b-1)$\\
$Y_1$ && $\nearrow$ & $\downarrow$ & $\nearrow$ & $\downarrow$ &&&& $\downarrow$\\
& $\mh_{2}(1)$ && $\mh_{2}(2)$ && $\mh_{2}(3)$ & $\ldots$ & $\ldots$ && $\mh_{2}(b)$\\[.5em]
\hline\\[-.8em]
&&& $\mh_{1}(1)$ && $\mh_{1}(2)$  & $\ldots$ & $\ldots$ && $\mh_{1}(b-1)$\\
$Y_2$ &&& $\downarrow$ & $\nearrow$ & $\downarrow$ &&&& $\downarrow$\\
&&& $\mh_{2}(1)$ && $\mh_{2}(2)$  & $\ldots$ & $\ldots$ &&  $\mh_{2}(b-1) \mathrlap{\to\mh_{2}(b)}$\\[.5em]
\hline\\
\end{tabular}

\caption{Sliding-window superposition coding scheme.}
\label{tab:sw-rs}
\end{table}

For decoding at receiver~1, $\mh_1(j-1)$ and $\mh_2(j)$ are recovered successively from the channel outputs $y_1^n(j-1)$ and $y_1^n(j)$, as shown in Fig.~\ref{fig:2-1diagram}. In the language of typicality decoding, at the end of block $j$, it finds the unique message $\mh_{1}(\j-1)$
such that
\[
 (x_1^n(\mh_{1}(\j-2)), x_2^n(\mh_{1}(\j-1)), w^n(\mh_{2}(\j-1)), y_1^n(\j-1)) \in \aep(X_1,X_2,W,Y_1)
\]
and 
\[
 (x_1^n(\mh_{1}(\j-1)), y_1^n(\j)) \in \aep(X_1,Y_1)
\]
simultaneously, where $\mh_{1}(\j-2)$ and $\mh_{2}(\j-1)$ are already known from the previous block. Then it finds the unique $\mh_{2}(\j)$  such that
\[
(x_1^n(\mh_{1}(\j-1)),w^n(\mh_{2}(\j)), y_1^n(\j)) \in \aep(X_1,W,Y_1).
\]
We represent this successive cancellation decoding operation compactly as
\begin{equation}
\label{eqn:shortchain1}
 d_1\suchthat \mh_1(j-1) \to \mh_2(j),
\end{equation}
which is performed at the end of block $j$. To recover the next pair of messages $\mh_1(j)$ and $\mh_2(j+1)$, receiver~1 slides the decoding window to $y_1^n(j)$ and $y_1^n(j+1)$ at the end of block $j+1$. This sliding-window decoding scheme is originally due to Carleial~\cite{Carleial1982} and used in the network decode--forward relaying scheme~\cite{Xie--Kumar2005,Kramer--Gastpar--Gupta2005}.
The overall schedule of message decoding is shown in Table~\ref{tab:sw-rs}. As can be easily checked by inspection, decoding is successful if
\begin{equation}
 \label{eqn:21swsc-1}
 \begin{split}
 R_1 &< I(X_1;Y_1)+I(X;Y_1|X_1,W),\\
 R_2 &< I(W;Y_1|X_1).
 \end{split}
\end{equation}
\begin{figure*}[b]
 \centering
\begin{subfigure}[b]{0.33\textwidth}
\centering
\begin{tabular}{ccc}
block    & $j-1$ & $j$\\
 \hline\\[-1em]
 $X_1$ & $*$ & $m_1(j-1)$ \\[.5em]
 $X_2$ & $m_1(j-1)$ & $m_1(j)$ \\[.5em]
 $W$   & $*$ & $m_2(j)$ \\[.5em]
\hline
\end{tabular}
\caption{The initial state at the end of block $j$.}
\end{subfigure}%
~
\begin{subfigure}[b]{0.3\textwidth}
\centering
\begin{tabular}{@{\qquad}c@{\qquad\qquad}c@{\qquad}}
   $j-1$ & $j$\\
 \hline\\[-1em]
 $*$ & $*$ \\[.5em]
 $*$  & $m_1(j)$ \\[.5em]
 $*$  & $m_2(j)$ \\[.5em]
\hline
\end{tabular}
\caption{Step 1: recover $\mh_1(j-1)$.}
\end{subfigure}%
~
\begin{subfigure}[b]{0.3\textwidth}
\centering
\begin{tabular}{@{\qquad}c@{\qquad\qquad}c@{\qquad}}
 $j-1$ & $j$\\
 \hline\\[-1em]
 $*$ & $*$ \\[.5em]
 \clap{$*$}  & $m_1(j)$ \\[.5em]
 $*$ & $*$ \\[.5em]
 \hline
\end{tabular}
\caption{Step 2: recover $\mh_2(j)$.}
\end{subfigure}%
\caption{Illustration of the decoding process at receiver~1.}
\label{fig:2-1diagram}
\end{figure*}
A formal proof of this error analysis along with a complete description of the corresponding random coding scheme is delegated to Appendix~\ref{appx:2-1swsc}. Receiver~2 similarly uses successive cancellation decoding at the end of each block $j$ as
\[
 d_2\suchthat m_1(j-1) \to m_2(j-1),
\]
which is successful if
\begin{align*}
 R_1 &< I(X_1;Y_2) + I(X_2;Y_2|X_1) = I(X;Y_2),\\
 R_2 &< I(W;Y_2|X).
\end{align*} 
When the nominal message rate pair of each block is $(R_1,R_2)$, the scheme achieves $(\frac{b-1}{b}R_1,R_2)$ on average, which can be made arbitrarily close to $(R_1,R_2)$ by letting $b \to \infty$. We summarize the performance of this SWSC scheme as follows.

\begin{proposition}
 \label{pro:sw-rs}
 Let $p'(x_1)p'(x_2)p'(w)$ and $x(x_1,x_2)$ be fixed. Then the SWSC scheme in Table~\ref{tab:sw-rs} achieves the rate region $\Rr_{\mathrm{SWSC}}(p',2,1,d_1,d_2)$ that consists of the set of rate pairs $(R_1,R_2)$ such that
 \begin{align*}
 R_1 &\le \min\{I(X_1;Y_1)+I(X;Y_1|X_1,W),\,I(X;Y_2)\},\\
 R_2 &\le \min\{I(W;Y_1|X_1),\,I(W;Y_2|X) \}.
 \end{align*}
\end{proposition}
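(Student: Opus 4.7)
The plan is to use the standard random-coding-plus-joint-typicality-decoding blueprint, with the twist that decoding happens over a two-block sliding window, which converts into an additive rate bound via a product of independent typicality events.

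First I would specify the random code ensemble. For each block $\j \in [b]$, independently generate codebooks $\{x_1^n(m_1)\}$, $\{x_2^n(m_1')\}$, and $\{w^n(m_2)\}$ with i.i.d.\ entries drawn from $p'(x_1)$, $p'(x_2)$, $p'(w)$, respectively. In block $\j$ the transmitter sends the superposition codeword with $i$-th symbol $x(x_{1i}(m_1(\j)), x_{2i}(m_1(\j-1)))$, together with the independent $w^n(m_2(\j))$, using the boundary convention $m_1(0)=m_1(b)=1$.

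Next I would formalize the sliding-window decoder. Proceeding by induction on $\j$, assume $(\mh_1(\j-2), \mh_2(\j-1)) = (m_1(\j-2), m_2(\j-1))$ with high probability. At the end of block $\j$, receiver~1 performs two successive steps. Step~1: find the unique $\mt_1$ such that both
\[
 (x_1^n(\mh_1(\j-2)), x_2^n(\mt_1), w^n(\mh_2(\j-1)), y_1^n(\j-1)) \in \aep(X_1,X_2,W,Y_1)
\]
and $(x_1^n(\mt_1), y_1^n(\j)) \in \aep(X_1,Y_1)$ hold. Step~2: find the unique $\mt_2$ such that $(x_1^n(\mh_1(\j-1)), w^n(\mt_2), y_1^n(\j)) \in \aep(X_1,W,Y_1)$. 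Receiver~2 is handled symmetrically by sliding over blocks $\j-1,\j$ to recover $m_1(\j-1)$, and then decoding $m_2(\j-1)$ within block $\j-1$ once the full $x^n(\j-1)$ is known.

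The error analysis reduces to two packing-lemma bounds per receiver. For any $\mt_1 \neq m_1(\j-1)$, the two typicality events in Step~1 involve the codewords $x_2^n(\mt_1)$ from block $\j-1$ and $x_1^n(\mt_1)$ from block $\j$; thanks to the independent codebook generation across blocks, these events are independent, so their probabilities multiply. The joint typicality lemma bounds them by $2^{-n(I(X_2;Y_1|X_1,W)-\delta(\epsilon))}$ and $2^{-n(I(X_1;Y_1)-\delta(\epsilon))}$, respectively; since $X$ is a deterministic function of $(X_1,X_2)$, we have $I(X_1;Y_1)+I(X_2;Y_1|X_1,W) = I(X_1;Y_1)+I(X;Y_1|X_1,W)$. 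A union bound over the $2^{nR_1}$ candidates then yields $R_1 < I(X_1;Y_1)+I(X;Y_1|X_1,W)$, and Step~2 is a standard one-block argument giving $R_2 < I(W;Y_1|X_1)$. The mirror analysis at receiver~2 produces $R_1 < I(X_1;Y_2)+I(X_2;Y_2|X_1) = I(X;Y_2)$ and $R_2 < I(W;Y_2|X)$.

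The main obstacle, modest but worth care, is bookkeeping the block-level induction: one needs a clean base case (block~1, where $m_1(0)=1$ is known), a union bound over the $b$ blocks to propagate the induction hypothesis, and the standard two-limit argument $n\to\infty$ followed by $b\to\infty$ to absorb the $\frac{b-1}{b}$ rate penalty coming from pinning $m_1(b)=1$. The only conceptual point to highlight is that independent codebooks across blocks decouple the two Step~1 events --- this independence is what produces the additive mutual-information bound on $R_1$ that rate splitting alone cannot achieve (cf.~Theorem~\ref{thm:insufficiency}).
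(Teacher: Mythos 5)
Your proposal is correct and follows the same argument as the paper's proof in Appendix~\ref{appx:2-1swsc}: independent per-block codebooks, the key observation that the two joint-typicality tests over the sliding window involve candidate codewords from \emph{independent} codebooks (so their probabilities multiply, producing the additive $R_1$ bound), and the standard $n\to\infty$ followed by $b\to\infty$ limit. One minor inconsistency worth fixing: your encoding line assigns $m_1(j)$ to $X_1$ and $m_1(j-1)$ to $X_2$ in block $j$, but your decoding events, independence argument, and resulting rate bounds all use the opposite (and intended) convention of Table~\ref{tab:sw-rs} and Appendix~\ref{appx:2-1swsc}, namely that $X_1$ in block $j$ carries $m_1(j-1)$ and $X_2$ in block $j$ carries $m_1(j)$; the paper's own prose in Section~\ref{sec:2-1split} contains the same index swap, so you may have inherited it, but a self-contained write-up should pick one labeling and apply it throughout.
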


We now note that each corner point of $\Rr_{1,\mathrm{SD}} \cap \Rr_{2,\mathrm{SD}}$ is contained in one of the four regions
\begin{equation}
\label{eqn:4corner-cases}
\begin{split}
 \Rr_{1,\mathrm{SD}} &\cap \Rr_{2,\mathrm{SCD}1\to 2},\\
 \Rr_{1,\mathrm{SD}} &\cap \Rr_{2,\mathrm{SCD}2\to 1}, \\
 \Rr_{1,\mathrm{SCD}1\to 2} &\cap \Rr_{2,\mathrm{SD}}, \\
 \Rr_{1,\mathrm{SCD}2\to 1} &\cap \Rr_{2,\mathrm{SD}},
\end{split}
\end{equation}
where $\Rr_{j,\mathrm{SCD}1\to 2}, j = 1,2$, is the set of rate pairs $(R_1,R_2)$ such that $R_1 \le I(X;Y_j), R_2 \le I(W;Y_j|X)$ and $\Rr_{j,\mathrm{SCD}2\to 1}$ is the set of rate pairs $(R_1,R_2)$ such that $R_1 \le I(X;Y_j|W), R_2 \le I(W;Y_j)$.
Since any boundary point in $\Rr_{1,\mathrm{SD}}$ can be expressed as~\eqref{eqn:rs-mac} by Lemma~\ref{lem:layer-splitting}, $\Rr_{1,\mathrm{SD}}(p) \cap \Rr_{2,\mathrm{SCD}1\to 2}(p)$ is contained in $\Rr_{\mathrm{SWSC}}(p',2,1)$ for some $p'\simeq p$ and is achieved by the SWSC scheme. The other three regions in~\eqref{eqn:4corner-cases} can be achieved similarly by using different decoding orders, and thus SWSC achieves every corner point of $\Rr_{1,\mathrm{SD}} \cap \Rr_{2,\mathrm{SD}}$.

\begin{remark}
In the SWSC scheme above, for \emph{finite} $b$, there is a rate loss $(1/b) R_1$ for message $M_1$, since no message is scheduled via $X_1^n$ in block 1 and via $X_2^n$ in block $b$. The decoding delay of one block ($\mh_1(j)$ recovered in block $j+1$) is independent of $b$, while the overall probability of error is, by the union-of-events bound, linear in $b$ due to error propagation.
\end{remark} 

\begin{remark} 
\label{rmk:rateloss}
In order to reduce the rate loss, we can instead send message $M_1$ at the treating-interference-as-noise rate $\min\{I(X_1;Y_1),I(X_1;Y_2)\}$ for $X_1^n$ in block 1 and at rate $\min\{I(X;Y_1|X_1,W),I(X;Y_2|X_1)\}$ for $X_2^n$ in block $b$. This increases the overall $R_1$ by 
\begin{align*}
\frac{1}{b}\left[\min\{I(X_1;Y_1),I(X_1;Y_2)\} + \min\{I(X;Y_1|X_1,W),I(X;Y_2|X_1)\}\right],
\end{align*}
which is the same as $1/b$ times the achievable $R_1$ by rate-splitting in~\eqref{eqn:ex-r1}. 
\end{remark}

\subsection{General Rate Points}
\label{sec:3-1split}
The SWSC scheme developed in the previous section cannot achieve the entire region of $\Rr_{1,\mathrm{SD}} \cap \Rr_{2,\mathrm{SD}}$ in general. As illustrated in Fig.~\ref{fig:mac-region-combined}, the scheme can achieve any point on the dominant face of $\Rr_{1,\mathrm{SD}}$ or $\Rr_{2,\mathrm{SD}}$ at the respective receiver. (This is clearly an improvement over the rate-splitting multiple access scheme as noted in Remark~\ref{rmk:minsum}.) In general, however, these two points are not aligned, which may result in a rate region strictly smaller than $\Rr_{1,\mathrm{SD}} \cap \Rr_{2,\mathrm{SD}}$. To overcome this deficiency, we introduce an additional layer to $X$ while keeping $W$ unsplit. The receivers now have the flexibility of merging three layers $X_1,X_2,X_3$ into two groups, for example, $(X_1),(X_2,X_3)$ at receiver~1 and $(X_1,X_2),(X_3)$ at receiver~2, which can align the two points on the dominant faces of $\Rr_{1,\mathrm{SD}}$ and $\Rr_{2,\mathrm{SD}}$ as illustrated in Fig.~\ref{fig:mac-region-combined2}. 
\begin{figure}[htbp]
\centering
\begin{subfigure}[b]{0.48\textwidth}
\centering
\small
\psfrag{r1}[b]{$R_2$}
\psfrag{r2}[l]{$R_1$}
\includegraphics[scale=0.6]{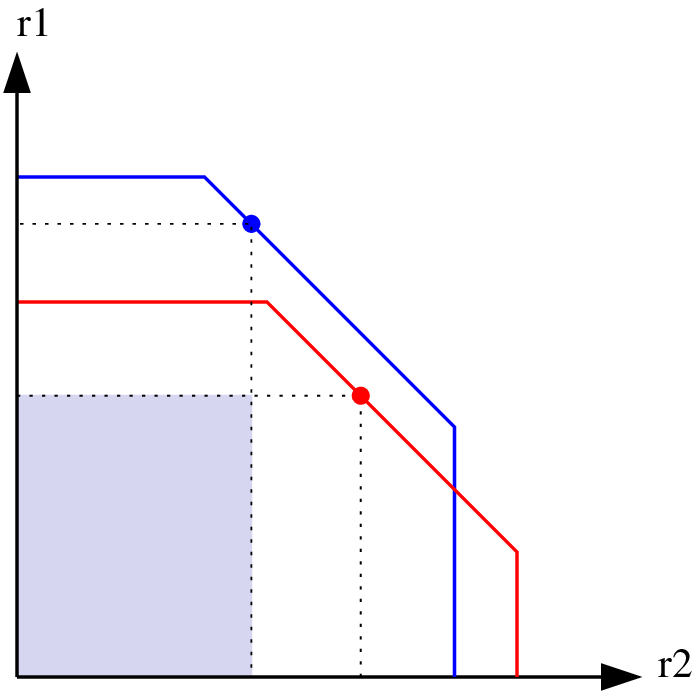}\\[.5em]
\caption{Sum of the min.}
\label{fig:mac-region-combined}
\end{subfigure}%
~~
\begin{subfigure}[b]{0.48\textwidth}
 \centering
 \small
\psfrag{r1}[b]{$R_2$}
\psfrag{r2}[l]{$R_1$}
\includegraphics[scale=0.6]{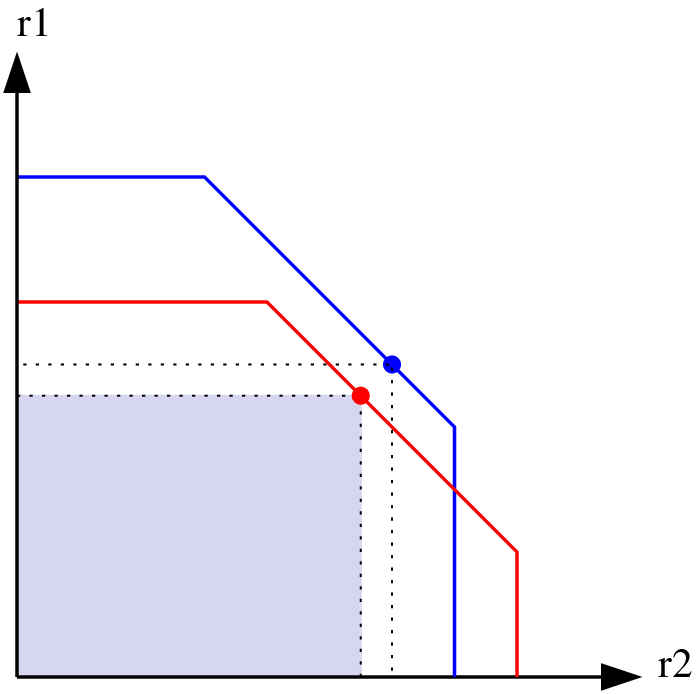}\\[.5em]
\caption{Min of the sum.}
\label{fig:mac-region-combined2}
\end{subfigure}%
\caption{Rate loss of rate splitting in the interference channel.}
\label{fig:loss}
\end{figure}

To be more precise, we first present a coding scheme that achieves the rate region consisting of rate pairs $(R_1,R_2)$ such that
 \begin{equation}
 \label{eqn:31swsc-region1}
 \begin{split}
 R_1 &\le \min\{I(X_1;Y_1) + I(X_2,X_3;Y_1|X_1,W),\,I(X_1,X_2;Y_2) + I(X_3;Y_2|X_1,X_2,W)\},\\
 R_2 &\le \min\{I(W;Y_1|X_1),\,I(W;Y_2|X_1,X_2) \}.
 \end{split}
 \end{equation}
In this SWSC scheme, the message $m_1(j)$ is encoded into three sequences $x_3^n$, $x_2^n$, and $x_1^n$ to be transmitted in three consecutive blocks $j$, $j+1$, and $j+2$, respectively. The message $m_2(j)$ is encoded into a codeword $w^n$ to be transmitted in block $j$. The encoding structure is illustrated in Table~\ref{tab:31swsc1}. The transmitted sequence $x^n$ in block $j$ is the symbol-by-symbol superposition of $x_1^n(m_1(j))$, $x_2^n(m_1(j-1))$, and $x_3^n(m_1(j-2))$.

\begin{table}[b]
\centering
\begin{tabular}{cc@{}c@{}c@{}c@{}c@{}c@{}cc@{}c@{}c@{}c@{}cc}
block $\j$  & $1$ && $2$ && $3$ && $4$ & $\cdots$  && $b-1$ && $b$ & \hphantom{AAA}\\
 \hline\\[-.8em]
  $X_1$ & $1$ && $1$ && $m_{1}(1)$  && $m_{1}(2)$ & $\ldots$ && $\ldots$ &&$m_{1}(b-2)$ \\[-.2em]
  &&  && $\diagup$ && $\diagup$ && &&& $\diagup$ & \\[-.2em]
  $X_2$ & $1$ && $m_{1}(1)$ && $m_{1}(2)$  && $\ldots$ & $\ldots$ && $m_{1}(b-2)$ && $1$ \\[-.2em]
 && $\diagup$ && $\diagup$ && &&& $\diagup$ & && \\[-.2em]
 $X_3$ & $m_{1}(1)$ && $m_{1}(2)$ && $\ldots$  && $\ldots$ & $m_{1}(b-2)$ && $1$ && $1$ \\[.5em]
 $W$ & $m_{2}(1)$ && $m_{2}(2)$ && $\ldots$ && $\ldots$ & $\ldots$ && $\ldots$ && $m_{2}(b)$ \\[.5em]
\hline\\[-.8em]
 &&&&& $\mh_{1}(1)$ && $\mh_{1}(2)$ & $\ldots$ && $\ldots$ && $\mh_{1}(b-2)$ \\
 $Y_1$ &&&& $\nearrow$ & $\downarrow$ & $\nearrow$ & $\downarrow$ &&&&& $\downarrow$\\
 & $\mh_{2}(1)$ & $\to$ &$\mh_{2}(2)$ && $\mh_{2}(3)$ && $\mh_{2}(4)$ & $\ldots$ && $\ldots$ && $\mh_{2}(b)$ \\[.5em]
\hline\\[-.8em]
 &&&&& $\mh_{1}(1)$ && $\mh_{1}(2)$ & $\ldots$ && $\ldots$ && $\mh_{1}(b-2)$ \\
 $Y_2$ &&&& $\nearrow$ & $\downarrow$ & $\nearrow$ & $\downarrow$ &&&&& $\downarrow$\\
 &&& $\mh_{2}(1)$ && $\mh_{2}(2)$ && $\mh_{2}(3)$ & $\ldots$ &&  $\ldots$ && $\mh_{2}(b-1) \mathrlap{\to\mh_{2}(b)}$\\[.5em]
\hline\\
\end{tabular}
\caption{SWSC scheme with decoding orders in~\eqref{eqn:31split-d1}.}
\label{tab:31swsc1}
\end{table}

\smallskip
For decoding, the message $\mh_{1}(\j)$ is recovered via sliding-window decoding over three blocks. The decoding orders at two receivers are
\begin{subequations}
 \label{eqn:31split-d1}
 \begin{align}
 d_1\suchthat & \mh_1(j-2) \to \mh_2(j),\label{eqn:d1-r1}\\
 d_2\suchthat & \mh_1(j-2) \to \mh_2(j-1).\label{eqn:d1-r2}
 \end{align}
\end{subequations}
The decoding process is illustrated in Table~\ref{tab:31swsc1}. 
Following the standard analysis, the decoding is successful at receiver~1 if
\begin{align*}
\begin{split}
 R_1 &< I(X_1;Y_1) + I(X_2;Y_1|X_1,W) + I(X_3;Y_1|X_1,X_2,W),\\
 R_2 &< I(W;Y_1|X_1), 
 \end{split}
\end{align*}
and at receiver~2 if
\begin{align*}
\begin{split}
 R_1 &< I(X_1;Y_2) + I(X_2;Y_2|X_1) + I(X_3;Y_2|X_1,X_2,W), \\
 R_2 &< I(W;Y_2|X_1,X_2),
 \end{split}
\end{align*}
which establishes the achievability of the rate region in~\eqref{eqn:31swsc-region1}. We denote this rate region by $\Rr_{\mathrm{SWSC}}(p',3,1,d_1,d_2)$.

\smallskip

By swapping the decoding orders between receivers~1~and~2, i.e., 
 \begin{subequations}
 \label{eqn:31split-d2}
 \begin{align}
d_1'\suchthat & \mh_1(j-2) \to \mh_2(j-1), \label{eqn:d2-r1}\\
d_2'\suchthat & \mh_1(j-2) \to \mh_2(j),\label{eqn:d2-r2}
  \end{align}
 \end{subequations}
the SWSC scheme achieves the rate region $\Rr_{\mathrm{SWSC}}(p',3,1,d_1',d_2')$ characterized by
\begin{align*}
 R_1 &\le \min\{I(X_1,X_2;Y_1) + I(X_3;Y_1|X_1,X_2,W), \,I(X_1;Y_2) + I(X_2,X_3;Y_2|X_1,W)\},\\
 R_2 &\le \min\{I(W;Y_2|X_1),\,I(W;Y_1|X_1,X_2) \}.
\end{align*}


This SWSC scheme turns out to be sufficient to achieve any rate point in the simultaneous decoding region; see Appendix~\ref{appx:31swsc} for the proof.

\begin{proposition}
 \label{pro:31swsc}
 \[
\Rr_{1,\mathrm{SD}}(p) \cap \Rr_{2,\mathrm{SD}}(p) = \bigcup_{p'\simeq p} \bigcup_{(d_1,d_2)=\eqref{eqn:31split-d1} \text{ or}~\eqref{eqn:31split-d2}} \Rr_{\mathrm{SWSC}}(p',3,1,d_1,d_2).
\]
\end{proposition}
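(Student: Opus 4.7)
The plan is to establish both inclusions. The inclusion $\bigcup_{p'\simeq p}\bigcup_{(d_1,d_2)}\Rr_{\mathrm{SWSC}}(p',3,1,d_1,d_2) \subseteq \Rr_{1,\mathrm{SD}}(p) \cap \Rr_{2,\mathrm{SD}}(p)$ I would dispatch by the chain rule of mutual information. For order~\eqref{eqn:31split-d1} (order~\eqref{eqn:31split-d2} being symmetric), independence of $W$ from $(X_1,X_2,X_3)$ gives $I(W;Y_1|X_1) \le I(W;Y_1|X)$ and $I(X_1;Y_1) \le I(X_1;Y_1|W)$, so $R_2 \le I(W;Y_1|X)$ and $R_1 \le I(X_1;Y_1|W)+I(X_2,X_3;Y_1|X_1,W) = I(X;Y_1|W)$; summing the two bounds and applying the chain rule twice yields $R_1+R_2 \le I(X_1,W;Y_1)+I(X_2,X_3;Y_1|X_1,W) = I(X,W;Y_1)$, placing the SWSC region in $\Rr_{1,\mathrm{SD}}(p)$. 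The analogous computation at receiver~2 with the merge $((X_1,X_2),X_3)$ places it in $\Rr_{2,\mathrm{SD}}(p)$.

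The reverse inclusion is the substantive direction. Fix a target $(R_1^*,R_2^*) \in \Rr_{1,\mathrm{SD}}(p) \cap \Rr_{2,\mathrm{SD}}(p)$ and assume, by symmetry between the two decoding orders, that $I(X,W;Y_1) \le I(X,W;Y_2)$, so the dominant face of $\Rr_{1,\mathrm{SD}}$ carries the binding sum-rate constraint. My strategy is to view a 3-layer splitting $(X_1,X_2,X_3)$ as producing two \emph{nested} 2-layer splittings of $X$: a coarse layer $X_1$ for one receiver and a finer layer $V=(X_1,X_2)$ for the other, with $X_3$ as the complement. Under order~\eqref{eqn:31split-d1}, apply Lemma~\ref{lem:layer-splitting} to $\Rr_{1,\mathrm{SD}}(p)$ to select a splitting with first layer $X_1$ satisfying $I(W;Y_1|X_1)=R_2^*$; then apply it again to $\Rr_{2,\mathrm{SD}}(p)$, seeking a refinement $V \supseteq X_1$ realizable as $V=(X_1,X_2)$ with $X_2 \perp X_1$ such that $I(W;Y_2|V) \in [R_2^*,\,I(X,W;Y_2)-R_1^*]$, an interval of length $\Delta := I(X,W;Y_2)-I(X,W;Y_1)\ge 0$ that, via the chain-rule identity $I(V;Y_2)+I(X;Y_2|V,W) = I(X,W;Y_2)-I(W;Y_2|V)$, encodes both receiver~2 rate constraints. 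By monotonicity and continuity of $I(W;Y_2|V)$ as $V$ is refined from $X_1$ toward $X$---and by swapping to order~\eqref{eqn:31split-d2}, which exchanges coarse and fine roles between the receivers, in the complementary regime---a suitable $V$ can always be produced, after which $X_3$ is taken as the complementary layer completing $X$ given $V$.

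The main obstacle I anticipate is twofold. First, realizing a prescribed refinement $V$ as $(X_1,X_2)$ with $X_2 \perp X_1$ while keeping $X \sim p$ representable as $x(X_1,X_2,X_3)$ is a coupling question; I would resolve it by enlarging the alphabet of $X_2$ and using an inverse-CDF-style coupling matching the prescribed conditional $p(v|x_1)$, and similarly for $X_3$ given $(X_1,X_2)$, noting that the SWSC rate region depends only on the joint marginals $(X_1,X,W,Y_1)$ and $((X_1,X_2),X,W,Y_2)$, so any coupling matching these marginals suffices. Second, a case analysis on the location of $(R_1^*,R_2^*)$ is needed---distinguishing subcases based on whether $I(W;Y_2|X_1)$ already lies below $I(X,W;Y_2)-R_1^*$ for the $X_1$ delivered by the first application of Lemma~\ref{lem:layer-splitting}---with one or the other of the two decoding orders covering each case. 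This case check, together with the explicit coupling construction, is where I expect the main technical work to lie, and it is the primary reason two decoding orders appear in the statement.
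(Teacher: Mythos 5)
Your proof follows essentially the same approach as the paper's (Appendix~\ref{appx:31swsc}): apply Lemma~\ref{lem:layer-splitting} on each receiver's side, observe that the two resulting 2-layer splits of $X$ can be taken \emph{nested} (one a degraded version of the other), realize the nesting as a 3-layer superposition via the functional representation lemma (which is exactly what your ``inverse-CDF-style coupling'' amounts to), and use the two decoding orders to cover the two possible nesting directions. The paper's version is slightly more streamlined in that it works directly with the erasure-channel parameterization from the proof of Lemma~\ref{lem:layer-splitting}, so the degradation/nesting is automatic upon comparing the two erasure probabilities $\alpha'$ and $\alpha''$, which replaces your ``monotonicity and continuity'' step and makes the case split transparent.
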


\subsection{SWSC Achieves the MLD Region $\Rr^*$}
\label{sec:swsc-mld}
We now show that the other three component regions of $\Rr^*$ in~\eqref{eqn:component}, namely, $\Rr_{1,\mathrm{IAN}} \cap \Rr_{2,\mathrm{IAN}}$, $\Rr_{1,\mathrm{SD}} \cap \Rr_{2,\mathrm{IAN}}$, and $\Rr_{1,\mathrm{IAN}} \cap \Rr_{2,\mathrm{SD}}$, can be also achieved by the SWSC scheme in Table~\ref{tab:31swsc1} (with the same encoding scheme, but with different decoding orders).

\begin{itemize}
 \item $\Rr_{1,\mathrm{IAN}} \cap \Rr_{2,\mathrm{IAN}}$: 
 \begin{subequations}
 \label{eqn:31split-d3}
 \begin{align}
  d_1\suchthat & \mh_1(j-2),\label{eqn:d3-r1}\\
  d_2\suchthat & \mh_2(j).\label{eqn:d3-r2}
  \end{align}
 \end{subequations}
 The corresponding achievable rate region is the set of rate pairs $(R_1,R_2)$ such that
\begin{align*}
 R_1 &\le I(X_1;Y_1) + I(X_2;Y_1|X_1) + I(X_3;Y_1|X_1,X_2) = I(X;Y_1),\\
 R_2 &\le I(W;Y_2).
\end{align*}

\item $\Rr_{1,\mathrm{SD}} \cap \Rr_{2,\mathrm{IAN}}$: 
 \begin{subequations}
 \label{eqn:31split-d4}
 \begin{align}
  d_1\suchthat & \mh_1(j-2) \to \mh_2(j), \label{eqn:d4-r1}\\
  d_2\suchthat & \mh_2(j). \label{eqn:d4-r2}
  \end{align}
 \end{subequations}
The corresponding achievable rate region is the set of rate pairs $(R_1,R_2)$ such that
 \begin{align*}
 R_1 &\le I(X_1;Y_1) + I(X;Y_1|X_1,W),\\
 R_2 &\le \min\{ I(W;Y_1|X_1),\, I(W;Y_2)\},
\end{align*}
which, after taking the union over all $p'\simeq p$, is equivalent to $\Rr_{1,\mathrm{SD}}(p) \cap \Rr_{2,\mathrm{IAN}}(p)$ by Lemma~\ref{lem:layer-splitting}.

\item $\Rr_{1,\mathrm{IAN}} \cap \Rr_{2,\mathrm{SD}}$: 
 \begin{subequations}
 \label{eqn:31split-d5}
 \begin{align}
  d_1\suchthat & \mh_1(j-2), \label{eqn:d5-r1}\\
  d_2\suchthat & \mh_1(j-2) \to \mh_2(j). \label{eqn:d5-r2}
  \end{align}
 \end{subequations}
 The corresponding achievable rate region is the set of rate pairs $(R_1,R_2)$ such that
\begin{align*}
 R_1 &\le \min\{I(X;Y_1), \, I(X_1;Y_2) + I(X;Y_2|X_1,W)\},\\
 R_2 &\le I(W;Y_2|X_1),
\end{align*}
which, after taking the union over all $p'\simeq p$, is equivalent to $\Rr_{1,\mathrm{IAN}}(p) \cap \Rr_{2,\mathrm{SD}}(p)$ by Lemma~\ref{lem:layer-splitting}.
\end{itemize}

In summary, the SWSC scheme in Table~\ref{tab:31swsc1}, with $p' \simeq p$ and decoding orders~\eqref{eqn:31split-d1}--\eqref{eqn:31split-d5}, achieves the MLD region $\Rr^*$.

\begin{theorem}
 \label{thm:swsc-opt}
 \[
  \Rr^*(p) = \bigcup_{p'\simeq p} \bigcup_{(d_1,d_2)=\eqref{eqn:31split-d1}\text{--}\eqref{eqn:31split-d5}} \Rr_{\mathrm{SWSC}}(p',3,1,d_1,d_2).
 \]
\end{theorem}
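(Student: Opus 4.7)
My plan is to derive the theorem from three results already established in the excerpt: the decomposition~\eqref{eqn:component} of $\Rr^*$ into its four pieces $\Rr_{1,\mathrm{IAN}}\cap\Rr_{2,\mathrm{IAN}}$, $\Rr_{1,\mathrm{SD}}\cap\Rr_{2,\mathrm{IAN}}$, $\Rr_{1,\mathrm{IAN}}\cap\Rr_{2,\mathrm{SD}}$, and $\Rr_{1,\mathrm{SD}}\cap\Rr_{2,\mathrm{SD}}$; the layer-splitting lemma (Lemma~\ref{lem:layer-splitting}); and Proposition~\ref{pro:31swsc}. The encoder is kept fixed --- the single three-layer SWSC scheme of Table~\ref{tab:31swsc1} --- and only the decoding rule varies, so the argument amounts to matching each of the five decoding orders \eqref{eqn:31split-d1}--\eqref{eqn:31split-d5} with exactly one of the four components of~\eqref{eqn:component}.

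For the inclusion $\supseteq$ (every point of $\Rr^*$ is achievable), I invoke the decomposition \eqref{eqn:component} and cover each piece separately. The $\Rr_{1,\mathrm{SD}}\cap\Rr_{2,\mathrm{SD}}$ piece is handled by orders \eqref{eqn:31split-d1}--\eqref{eqn:31split-d2} via Proposition~\ref{pro:31swsc}. For the $\Rr_{1,\mathrm{IAN}}\cap\Rr_{2,\mathrm{IAN}}$ piece, order \eqref{eqn:31split-d3} decodes only $\hat m_1$ at receiver~1 and only $\hat m_2$ at receiver~2; the standard sliding-window typicality analysis (structurally identical to Appendix~\ref{appx:2-1swsc}) gives $R_1\le I(X_1;Y_1)+I(X_2;Y_1|X_1)+I(X_3;Y_1|X_1,X_2)$, which collapses by the chain rule to $I(X;Y_1)$ for every $p'\simeq p$, and symmetrically at receiver~2. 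For the two mixed pieces, orders \eqref{eqn:31split-d4} and \eqref{eqn:31split-d5} produce rectangular rate regions whose SD-side face has exactly the rate-splitting form appearing in~\eqref{eqn:rs-mac}, so taking the union over $p'\simeq p$ sweeps the full dominant face of the relevant $\Rr_{j,\mathrm{SD}}$ by Lemma~\ref{lem:layer-splitting}, while the IAN bound at the other receiver again telescopes via the chain rule to an expression independent of $p'$.

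For the reverse inclusion $\subseteq$, I show that each of the five SWSC regions is contained in~$\Rr^*$. For every decoding order, the receiver-$j$ constraints either take the rectangular rate-splitting form of Lemma~\ref{lem:layer-splitting} (placing them inside $\Rr_{j,\mathrm{SD}}$) or collapse via the chain rule to a single-letter IAN bound (placing them inside $\Rr_{j,\mathrm{IAN}}$). Hence every $(R_1,R_2)$ in any of the five SWSC regions lies in one of the four components of $\Rr^*$ in~\eqref{eqn:component}, and therefore in $\Rr^*=\Rr_{\mathrm{SND}}$.

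The substantive work --- the three-layer block-Markov ensemble, its sliding-window typicality decoder, and the SD-SD matching --- has already been carried out in Proposition~\ref{pro:31swsc}. The only remaining effort is to write out the nearly-identical error analyses for the three new decoding orders \eqref{eqn:31split-d3}--\eqref{eqn:31split-d5} and to verify the chain-rule and Lemma~\ref{lem:layer-splitting} collapses tabulated above. I expect no conceptual obstacle; the one place to be slightly careful is the ``degenerate'' order \eqref{eqn:31split-d3}, where neither decoder attempts to recover the interfering message, so sliding-window decoding of a single message $m_1(j-2)$ (respectively $m_2(j)$) reduces to joint typicality decoding across three blocks against the marginal channel obtained by integrating $W$ (respectively $X$) out.
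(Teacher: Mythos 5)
Your proposal is correct and follows essentially the same route as the paper: decompose $\Rr^*$ via~\eqref{eqn:component}, pair the SD--SD piece with orders \eqref{eqn:31split-d1}--\eqref{eqn:31split-d2} through Proposition~\ref{pro:31swsc}, and pair the remaining three pieces with orders \eqref{eqn:31split-d3}--\eqref{eqn:31split-d5} via the chain-rule collapse and Lemma~\ref{lem:layer-splitting}. The paper presents each matching as a stated equality in Section~\ref{sec:swsc-mld}, which is precisely what your separate $\supseteq$ and $\subseteq$ arguments reconstruct.
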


\section{Sliding-Window Coded Modulation}
\label{sec:swcm}

Coded modu lation is the interface between channel coding and modulation, and specifies how (typically binary) codewords are mapped to sequences of constellation points. In this section, we show how the SWSC scheme can be specialized to a coded modulation scheme, termed \emph{sliding-window coded modulation (SWCM)}, and demonstrate through practical implementation that conventional point-to-point encoders and decoders can be utilized to achieve the performance expected from high-complexity coding schemes. We also compare SWCM with existing coded modulation schemes, such as multilevel coding (MLC)~\cite{Imai--Hirakawa1977,Wachsmann--Fischer--Huber1999} and bit-interleaved coded modulation (BICM)~\cite{Zehavi1992,Caire--Taricco--Biglieri1998}.

\subsection{An Illustration of SWCM for 4PAM}

Each coded modulation scheme is specified by two mappings: the symbol-level mapping and the block-level mapping. In SWCM, the symbol-level mapping is specified by the symbol-by-symbol mapping in superposition coding. For example, let $X_1,X_2 \in \{-1,+1\}$ be two BPSK symbols (throughout this section we assume the unit power constraint). Then a uniformly-spaced 4-PAM signal can be formed as
\begin{equation}
\label{eqn:4pam1}
 X = \frac{1}{\sqrt{5}}(X_1 + 2 X_2) \in \{-\frac{3}{\sqrt{5}}, -\frac{1}{\sqrt{5}}, \frac{1}{\sqrt{5}}, \frac{3}{\sqrt{5}}\}.
\end{equation}

The block-level mapping of SWCM is specified by the message scheduling of SWSC. For example, in the encoding scheme in Table~\ref{tab:sw-rs}, each message is encoded to a length-$2n$ binary codeword (potentially with interleaving), the first $n$ bits of which are carried by $X_2$ symbols in the current block, and the second $n$ bits of which are carried by $X_1$ symbols in the next block. 
Accordingly, each transmissed symbol $X$ is then generated by~\eqref{eqn:4pam1}, using a symbol $X_2$ from the current codeword and a symbol $X_1$ from the previous codeword.
See Fig.~\ref{fig:swcm-enc} for an illustration of the symbol-level and block-level mappings of the SWCM scheme that corresponds to Table~\ref{tab:sw-rs}. 

It is instructive to compare SWCM with two other popular coded modulation schemes, BICM and MLC. The key difference among the three lies in the block-level mapping; see Fig.~\ref{fig:comparison}. Assuming the same symbol-level mapping~\eqref{eqn:4pam1}, in BICM, the two length-$n$ parts $x_1^n$ and $x_2^n$ of a length-$2n$ codeword are transmitted in the same block. This contrasts the staggered transmission of $x_1^n$ and $x_2^n$ in SWCM. In MLC, instead of a single length-$2n$ codeword, two standalone length-$n$ codewords $x_1^n$ and $x_2^n$ are generated by splitting the message (say $M$) into two parts (say $M'$ and $M''$). When used for a point-to-point channel $p(y|x)$, SWCM achieves
\[
 I(X_1;Y)+ I(X_2;Y|X_1) = I(X;Y).
\]
MLC achieves the same rate if individual rates of the two component codes are properly matched, while BICM achieves
\[
 I(X_1;Y)+I(X_2;Y) < I(X;Y),
\]
the loss in which is due to self-interference between $X_1$ and $X_2$. The finite-block performance is better in SWCM and BICM than in MLC thanks to the longer codeword length of $2n$. More fundamentally, individual component codewords in MLC should be rate-controlled (which is difficult to be done optimally in practice) and reliably decoded (which results in rate loss under channel uncertainty or multiple receivers). The latter limitation is reflected in the deficiency of the rate-splitting scheme for the interference channel, as pointed out in Remark~\ref{rmk:minsum}. In summary, SWCM has the advantage of high rate over BICM and the advantage of long block length and robustness over MLC, but at the same time suffers from error propagation over blocks and rate loss due to initialization/termination. 
\begin{figure*}[b]
 \centering
\begin{subfigure}[b]{0.33\textwidth}
\small
\centering
\psfrag{U}[r]{$X_1$}
\psfrag{V}[r]{$X_2$}
\psfrag{a2}[c]{$M'$}
\psfrag{b2}[c]{$M''$}
\includegraphics[scale=0.4]{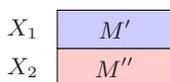}
\caption{MLC: Short, nonuniversal.\\ $R' < I(X_1; Y), R'' < I(X_2; Y | X_1)$.}
\end{subfigure}%
~
\begin{subfigure}[b]{0.33\textwidth}
\small
\centering
\psfrag{U}[r]{$X_1$}
\psfrag{V}[r]{$X_2$}
\psfrag{a2}[c]{$M$}
\psfrag{b1}[c]{$M$}
\includegraphics[scale=0.4]{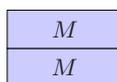}
\caption{BICM: Treat other layers as noise. \\$R < I(X_1; Y) + I(X_2; Y)$. }
\end{subfigure}%
~
\begin{subfigure}[b]{0.33\textwidth}
\small
\centering
\psfrag{U}[r]{$X_1$}
\psfrag{V}[r]{$X_2$}
\psfrag{a2}[c]{$M$}
\psfrag{b1}[c]{$M$}
\includegraphics[scale=0.4]{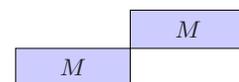}
\caption{SWCM: Error propagation, rate loss.\\ $R < I(X_1; Y) + I(X_2; Y | X_1)$.}
\end{subfigure}
\caption{Comparison of three coded modulation schemes.}
\label{fig:comparison}
\end{figure*}

\subsection{The Generalization to Other Constellations}

The SWSC framework provides great flexibility in the symbol-level mapping and the number of layers, which results in a variety of practical coded modulation schemes. For example, a Gray mapping from two BPSK symbols to the 4PAM constellation can be formed by a different symbol-level mapping
\begin{equation}
\label{eqn:4pam2}
 X = \frac{1}{\sqrt{5}}(X_1 + 2 X_1 \cdot X_2).
\end{equation}
There are four other symbol-level mappings for 4PAM.

Higher-order constellations have richer structures and allow for more diverse decompositions. For example, a uniformly-spaced 8PAM symbol can be decomposed as the superposition
\begin{equation}
 \label{eqn:8pam1}
 X = \frac{1}{\sqrt{21}}(X_1 + 2 X_2 + 4 X_3) 
\end{equation}
of three BPSK layers $X_1,X_2,X_3,$ or as the superposition
\begin{equation}
 \label{eqn:8pam2}
 X = \frac{1}{\sqrt{21}}(X_1 + 2\sqrt{5} X_2)
\end{equation}
of one BPSK layer $X_1$ and one 4PAM layer $X_2$. For the block-level mapping, each message is encoded into a length-$3n$ binary codeword. In case of~\eqref{eqn:8pam1}, the three parts of the codeword, each of length $n$, are transmitted over three consecutive blocks. In case of~\eqref{eqn:8pam2}, the first $2n$ bits of the codeword are carried by the 4PAM $X_2$ sequence (2 bits per symbol by the Gray or natural mapping) and the remaining $n$ bits are carried by the BPSK $X_1$ sequence over two consecutive blocks.

As another example, consider the 16QAM coded modulation, which can be decomposed as the superposition
\begin{equation}
 \label{eqn:16qam1}
 X = \frac{1}{\sqrt{5}}(X_1 + 2X_2)
\end{equation}
of two QPSK symbols $X_1,X_2 \in \{e^{i\frac{\pi}{4}},e^{i\frac{3\pi}{4}},e^{-i\frac{3\pi}{4}},e^{-i\frac{\pi}{4}}\}$, or as the superposition
\begin{equation}
 \label{eqn:16qam2}
 X = \frac{1}{\sqrt{2}}(X_1+ iX_2)
\end{equation}
of two 4PAM symbols $X_1,X_2 \in \{-\frac{3}{\sqrt{5}}, -\frac{1}{\sqrt{5}}, \frac{1}{\sqrt{5}}, \frac{3}{\sqrt{5}}\}$. For both cases, two halves of a length-$4n$ binary codeword are carried by $x_1^n$ and $x_2^n$ over two consecutive blocks. Alternatively, four BPSK layers can be used for staggered transmission over four consecutive blocks. 

For multiple-input multiple-output (MIMO) transmission, there is a natural correspondence between the antenna ports and the symbol-level mapping. Suppose that there are $t$ transmitting antennas. Then, each antenna port  $X^{(k)}$ can transmit the codeword carried by the SWCM layer $X_k$, that is,
\begin{equation}
 \label{eqn:dblast}
 \begin{split}
 X &= (X^{(1)},\ldots,X^{(t)})\\
 &= (X_1,\ldots,X_t).
 \end{split}
\end{equation}
The SWCM scheme with the symbol-level mapping in~\eqref{eqn:dblast} is in fact equivalent to the block-level diagonal Bell Labs layered space-time (D-BLAST) architecture~\cite{Foschini1996}. Note that horizontal BLAST~\cite{Li--Huang--Foschini--Valenzuela2000,Foschini--Chizhik--Gans--Papadias--Valenzuela2003} and vertical BLAST~\cite{Wolniansky--Foschini--Golden--valenzuela1998} correspond to MLC and BICM, respectively. In this sense, the encoder structure of sliding-window superposition coding may well be called \emph{diagonal superposition coding} in contrast to the conventional \emph{horizontal} superposition coding structure of MLC. 

SWCM, however, can provide much greater flexibility than D-BLAST since the symbol-level mapping can be controlled at the constellation level, not just at the antenna level. For example, consider a MIMO system with two transmitting antennas, both of which use the 4PAM constellation as in~\eqref{eqn:4pam1}
\begin{equation}
\begin{split}
X^{(1)} &= \frac{1}{\sqrt{5}}(A_{11}+2A_{12}),\\                                                                                                                                                                                                                                                                                                                                                                                                                                                                    
X^{(2)} &= \frac{1}{\sqrt{5}}(A_{21}+2A_{22}),
\end{split}
\end{equation}
where $A_{11},A_{12},A_{21},A_{22}$ are BPSK symbols. As in D-BLAST, we can use the symbol-level mapping in~\eqref{eqn:dblast}, or equivalently,
\[
 X_1 = (A_{11},A_{12}), \quad X_2 = (A_{21},A_{22}),
\]
and communicate the two halves of a length-$4n$ binary codeword by $x_2^n$ and $x_1^n$ over two consecutive blocks. As an alternative, we can map the least significant bits in the two antennas to layer~1 and the remaining bits to layer~2, i.e.,
\[
 X_1 = (A_{11},A_{21}), \quad X_2=(A_{12},A_{22}).
\]
As another alternative, we can use 4 layers with symbols $A_{11},A_{12},A_{21},A_{22}$, each carrying one fourth of the codewords over four consecutive blocks. There can be other possibilities. This richness can be utilized for adaptive transmission for wireless fading channels, as demonstrated in~\cite{Kim--Ahn--Kim--Park--Wang--Chen--Park2015}.

\subsection{Implementation With LTE Turbo Codes}
\label{sec:implementation}

We now demonstrate the feasibility of SWCM in practice by implementing the basic 4PAM coded modulation scheme in~\eqref{eqn:4pam1} for the Gaussian interference channel. More extensive studies for cellular networks are reported in~\cite{Kim--Ahn--Kim--Park--Chen--Kim2016}.

Consider the 2-user Gaussian interference channel in~\eqref{eq:channel}, where sender~1 uses 4PAM as in~\eqref{eqn:4pam1} and sender~2 uses BPSK. Sender~1 uses a binary code of length $2n$ and rate $R_1/2$ to communicate $m_1(j)$ through $x_2^n$ in block $j$ and $x_1^n$ in block $j+1$, while sender~2 uses a binary code of length $n$ and rate $R_2$ to communicate $m_2(j)$ through $w^n$ in block $j$; see Fig.~\ref{fig:swcm-enc}. 
\begin{figure}[b]
\centering
 \footnotesize
 \psfrag{sender1}[cc]{Sender $1$}
 \psfrag{sender2}[cc]{Sender $2$}
 \psfrag{m11}[cc]{$m_1(j-1)$}
 \psfrag{m12}[cc]{$m_1(j)$}
 \psfrag{c11}[cc]{$c_1(j-1)$}
 \psfrag{c12}[cc]{$c_1(j)$}
 \psfrag{ct11}[cc]{$\tilde{c}_1(j-1)$}
 \psfrag{ct12}[cc]{$\tilde{c}_1(j)$}
 \psfrag{v1}[cc]{$X_1(j-1)$}
 \psfrag{v2}[cc]{$X_1(j)$}
 \psfrag{u2}[cc]{$X_2(j)$}
 \psfrag{u3}[cc]{$X_2(j+1)$}
 \psfrag{x}[cc]{$X(j)$}
 \psfrag{m2}[cc]{$m_2(j)$}
 \psfrag{c2}[cc]{$c_2(j)$}
 \psfrag{ct2}[cc]{$\tilde{c}_2(j)$}
 \psfrag{w1}[cc]{$W(j)$}
 \psfrag{t1}[lb]{Turbo encoding with rate matching}
 \psfrag{t2}[lb]{Interleaving and scrambling}
 \psfrag{t3}[lb]{BPSK modulation}
 \psfrag{t4}[lb]{Superposition and block Markov coding}
 \psfrag{t5}[l]{$=\frac{1}{\sqrt{5}}(X_1(j) + 2 X_2(j))$}
 \includegraphics[scale = 0.47]{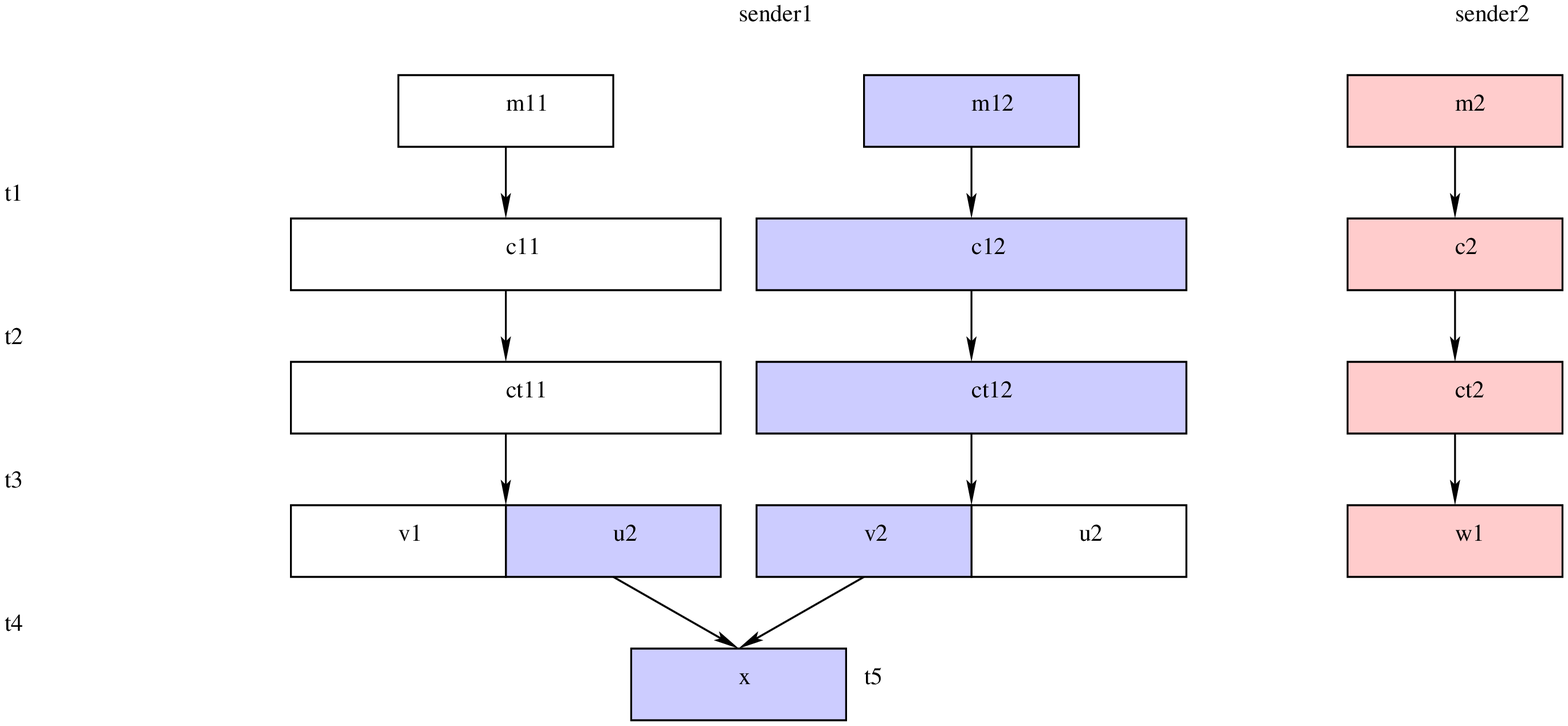}
 \caption{Encoding diagram for the LTE-turbo implementation of SWCM.}
  \label{fig:swcm-enc}
\end{figure}
We adopt the LTE standard turbo code~\cite{LTE}, which has the flexibility in the code rate and the block length. In particular, we start with the rate $1/3$ mother code and adjust the rates and lengths according to the rate matching algorithm in the standard. Note that for $R_1 < 2/3$, some code bits are repeated and for $R_1 > 2/3$, some code bits are punctured. We set the block length $n = 2048$ and the number of blocks $b = 20$. We use the LOG-MAP algorithm with up to 8 iterations in each stage of turbo decoding. We assume that a rate pair $(R_1,R_2)$ is achieved for a given channel if the resulting block-error rate (BLER) is below $0.1$ over 200 independent sets of simulations. Sliding-window decoding is performed at both receivers. Fig.~\ref{fig:swcm-dec} illustrates the decoding operation at receiver~1, under decoding order $d_1\suchthat \mh_1(j-1) \to \mh_2(j)$.

\begin{figure}[t]
\footnotesize
 \psfrag{y1}[cc]{$Y_1(j-1)$}
 \psfrag{y2}[cc]{$Y_1(j)$}
 \psfrag{x11}[cc]{$X_1(j-1)$}
 \psfrag{x12}[cc]{$X_1(j)$}
 \psfrag{x21}[cc]{$X_2(j-1)$}
 \psfrag{x22}[cc]{$X_2(j)$}
 \psfrag{w1}[cc]{$W(j-1)$}
 \psfrag{w2}[cc]{$W(j)$}
 \psfrag{l1}[cc]{$l_1(j-1)$}
 \psfrag{lt1}[cc]{$\tilde{l}_1(j-1)$}
 \psfrag{l2}[cc]{$l_2(j)$}
 \psfrag{lt2}[cc]{$\tilde{l}_2(j)$}
 \psfrag{m1}[cc]{$\mh_1(j-1)$}
 \psfrag{m2}[cc]{$\mh_2(j)$}
 \psfrag{t1}[lb]{known from}
 \psfrag{t2}[lc]{the previous block}
 \psfrag{t3}{LLR calculation}
 \psfrag{t4}{Descrambling and deinterleaving}
 \psfrag{t5}{Turbo decoding}
 \psfrag{t6}[lc]{treated as noise}
 \psfrag{t7}{~~~Encoding for}
 \psfrag{t8}[lc]{known from phase 1}
 \psfrag{t9}{successive cancellation}
 \psfrag{t10}{Encoding for successive}
 \psfrag{t11}{cancellation in the next block}
 \psfrag{t12}[lc]{to be recovered}
 \psfrag{ph1}[cc]{Phase 1}
 \psfrag{ph2}[cc]{Phase 2}
\hspace{1em}
 \includegraphics[scale = 0.48]{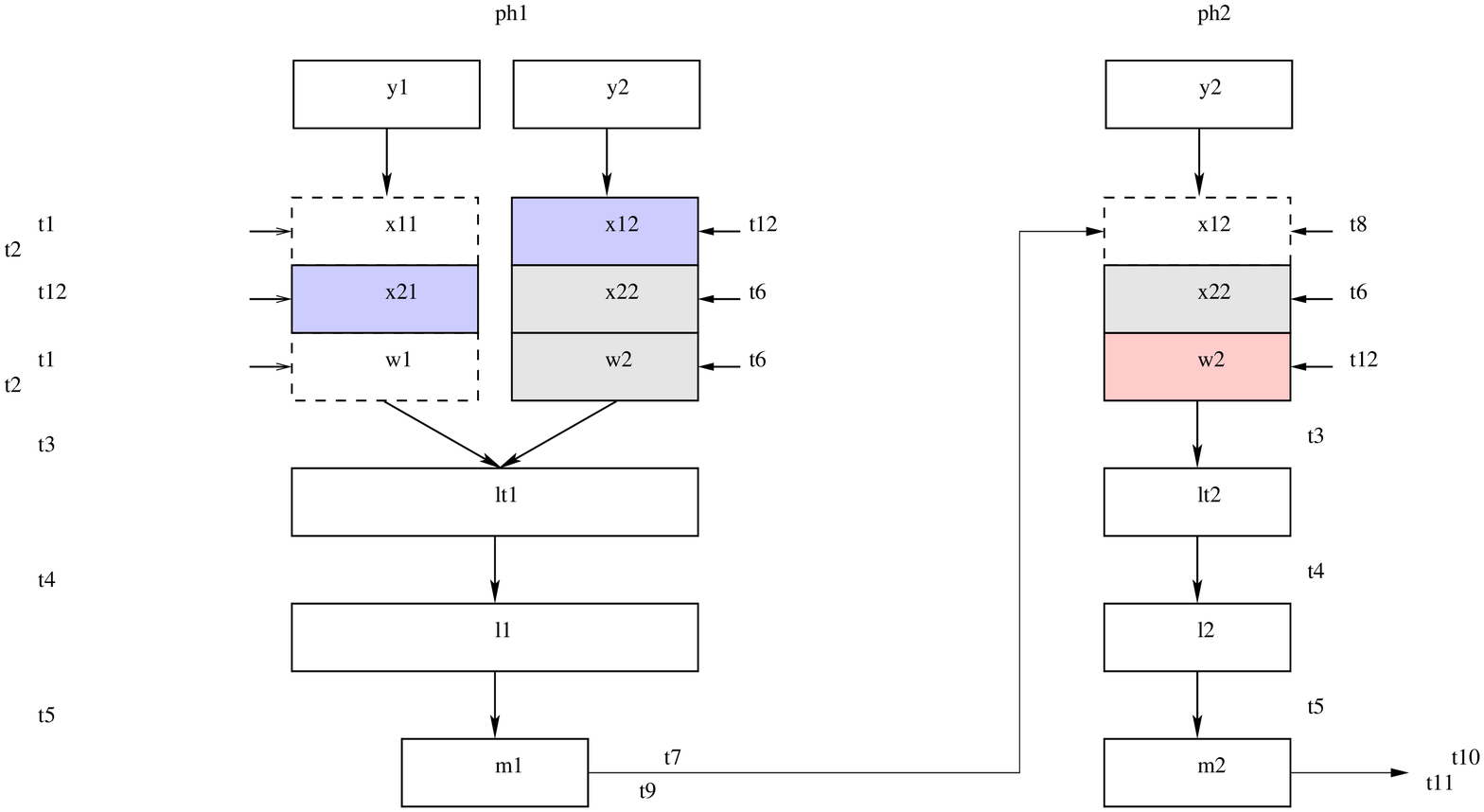}
 \caption{Decoding diagram for the decoding order $\mh_1(j-1)\to \mh_2(j)$.}
 \label{fig:swcm-dec}
\end{figure}

Fig.~\ref{fig:swcm-simulation} plots the symmetric rate $(R_1 = R_2)$ against the INR for the symmetric Gaussian interference channel $(S_1 = S_2 \text{ and } I_1 = I_2)$ when the SNR is held fixed at 8 dB. The solid lines represent theoretical achievable rates (mutual information) of MLD/SND, SWCM, and IAN.
In IAN decoding, the interference is treated as \emph{Gaussian} noise of the same power
and the constellation information of interference is not used. In SWCM decoding, the optimal decoding orders are
used at the given channel parameters. There is a gap between MLD/SND and SWCM (cf.\@ Theorem~\ref{thm:swsc-opt}), since the encoder is fixed using a symbol-level mapping~\eqref{eqn:4pam1} with only two layers $X_1,X_2 \sim\U\{-1,+1\}$.
The dashed lines represent the achievable rates of the actual implementation using the LTE turbo codes.
The 4PAM encoding at sender~1 uses BICM for IAN.
As the INR grows, the gain of SWCM over IAN increases from 53.44\% (at the INR of 6 dB) 
to 150.32\% (at 8 dB) and to 266.51\% (at 10 dB). 

\begin{figure}[htbp]
\centering
 \small
 \psfrag{aa}[l][l]{\footnotesize MLD/SND}
 \psfrag{bb}[l][l]{\footnotesize SWCM}
 \psfrag{cc}[l][l]{\footnotesize SWCM (turbo)}
 \psfrag{dd}[l][l]{\footnotesize IAN}
 \psfrag{ee}[l][l]{\footnotesize IAN (turbo)}
 \psfrag{S}[c][c]{Symmetric Rate (bits/s/Hz)}
 \psfrag{I}[c][c]{INR (dB)}
 \psfrag{6}[l]{$6$}
 \psfrag{6.5}{$6.5$}
\includegraphics[width=0.6\linewidth]{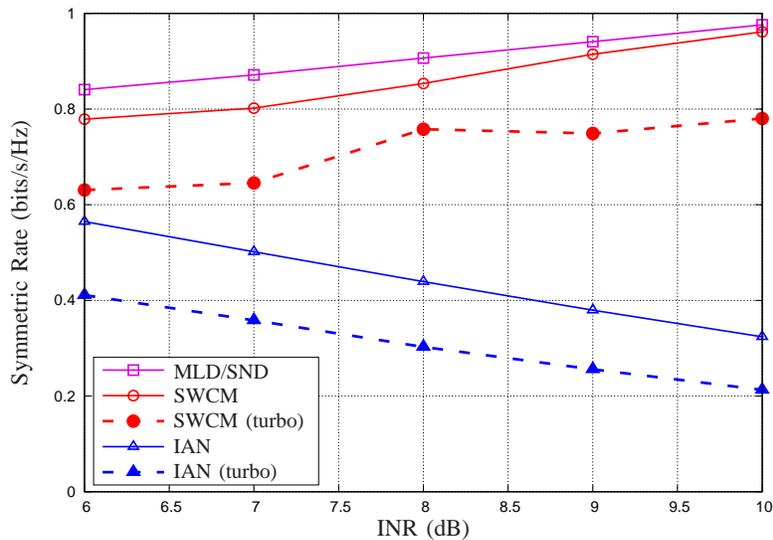}
\caption{Performance comparison in the symmetric Gaussian interference channel. The solid lines correspond to the theoretical performance. The dashed lines correspond to the simulation performance by the implementations using the LTE turbo codes.}
\label{fig:swcm-simulation}
\end{figure}

\section{Superposition Layers and Decoding Orders}
\label{sec:layering-order}

SWSC with a given encoder structure allows multiple decoding schemes, each with a different rate region. In this section, we provide a more systematic treatment of the relationship between superposition layers and decoding orders.


Suppose that we split $X$ into $K$ layers $(X_1,\ldots,X_K)$ and $W$ into $L$ layers $(W_1,\ldots,W_L)$. Consider a stream of messages, $(m_1(1),m_2(1)), (m_1(2),m_2(2)), \ldots,$ to be communicated over multiple blocks. The message $m_1(j)$ is encoded into $K$ sequences $x_K^n$, $x_{K-1}^n,\ldots,$ and $x_1^n$ to be transmitted in $K$ consecutive blocks $j$, $j+1,\ldots,$ and $j+K-1$, respectively. Similarly, the message $m_2(j)$ is encoded into $L$ sequences $w_L^n$, $w_{L-1}^n,\ldots,$ and $w_1^n$ to be transmitted in $L$ consecutive blocks $j$, $j+1,\ldots,$ and $j+L-1$, respectively. The transmitted sequence $x^n$ in block $j$ is the symbol-by-symbol superposition of $x_1^n(m_1(j))$, $x_2^n(m_1(j-1)),\ldots,$ and $x_K^n(m_1(j-K+1))$. The transmitted sequence $w^n$ in block $j$ is the symbol-by-symbol superposition of $w_1^n(m_2(j))$, $w_2^n(m_2(j-1)),\ldots,$ and $w_L^n(m_2(j-L+1))$. We refer to such a layer split and message schedule as the \emph{$K$-$L$ split}. Table~\ref{tab:3-2-one-dim} illustrates the encoding of the $3$-$2$ split.

\begin{table}[b]
\centering
\begin{tabular}{cc@{}c@{}c@{}c@{}c@{}c@{}cc@{}c@{}c@{}c@{}c}
block $\j$  & $1$ && $2$ && $3$ && $4$ & $\cdots$  && $b-1$ && $b$\\
 \hline\\[-.8em]
  $X_1$ & $1$ && $1$ && $m_{1}(1)$  && $m_{1}(2)$ & $\ldots$ && $\ldots$ &&$m_{1}(b-2)$ \\[-.2em]
  &&  && $\diagup$ && $\diagup$ && &&& $\diagup$ & \\[-.2em]
  $X_2$ & $1$ && $m_{1}(1)$ && $m_{1}(2)$  && $\ldots$ & $\ldots$ && $m_{1}(b-2)$ && $1$ \\[-.2em]
 && $\diagup$ && $\diagup$ && &&& $\diagup$ & && \\[-.2em]
 $X_3$ & $m_{1}(1)$ && $m_{1}(2)$ && $\ldots$  && $\ldots$ & $m_{1}(b-2)$ && $1$ && $1$ \\[.5em]
 $W_1$ & $1$ && $m_{2}(1)$ && $m_{2}(2)$  && $\ldots$ & $\ldots$ && $\ldots$ && $m_{2}(b-1)$\\[-.2em]
 && $\diagup$ && $\diagup$ && &&& && $\diagup$ &  \\[-.2em]
 $W_2$ & $m_{2}(1)$ && $m_{2}(2)$ && $\ldots$ && $\ldots$ & $\ldots$ && $m_2(b-1)$ && $1$\\[.5em]
 \hline\\
\end{tabular}
\caption{SWSC encoding with a 3-2 split.}
\label{tab:3-2-one-dim}
\end{table}

As we saw in the previous section, different decoding orders may result in different achievable rate regions. A feasible decoding order for a $K$-$L$ split is of the following form. At the end of block $j$, receiver $k = 1,2$ either recovers 
\[
\mh_1(j-K+1) \to \mh_2(j-K+1-t_1),
\]
for some $t_1 = \min\{K,L\}-1,\ldots, 1,0$, or 
\[
\mh_2(j-L+1) \to \mh_1(j-L+1-t_2), 
\]
for some $t_2 = 0,1,\ldots,\max\{K,L\}-1$. For the 3-2 split in Table~\ref{tab:3-2-one-dim}, there are five feasible decoding orders: 
\begin{align}
 1 &\suchthat \mh_1(j-2) \to \mh_2(j-3) \quad (t_1 = 1)\label{eqn:d-order1}\\
 2 &\suchthat \mh_1(j-2) \to \mh_2(j-2) \quad (t_1 = 0)\label{eqn:d-order2}\\
 3 &\suchthat \mh_1(j-2) \to \mh_2(j-1) \quad (t_2 = 0)\nonumber\\
 4 &\suchthat \mh_2(j-1) \to \mh_1(j-2) \quad (t_2 = 1)\nonumber\\
 5 &\suchthat \mh_2(j-1) \to \mh_1(j-3) \quad (t_2 = 2)\nonumber
\end{align}

In order to write the achievable rate region corresponding to each decoding order, we introduce the notion of \emph{layer order}. Let $\l \suchthat Z_1 \to Z_2 \to \cdots \to Z_{K+L}$ be an ordering of the variables $\{X_1,\ldots, X_K,W_1,\ldots,W_L\}$ such that the relative orders $X_1 \to X_2 \to \cdots \to X_K$ and $W_1 \to W_2 \to \cdots \to W_L$ are preserved. We say that a layer order is \emph{alternating} if it starts with either $X_1 \to \cdots \to X_{a_1}$, $a_1 = \max\{K,L\}-1, \ldots,1,0$, or $W_1 \to \cdots \to W_{a_2}$, $a_2 = 1,2,\ldots,\min\{K,L\}$, followed by one $X$ and one $W$ alternately until one of them is exhausted, and then by the remaining variables. As in the decoding orders, there are $K+L$ alternating layer orders. 
For the 3-2 split in Table~\ref{tab:3-2-one-dim}, the five alternating layer orders are listed as follows
\begin{align}
 1 &\suchthat X_1 \to X_2 \to X_3 \to W_1 \to W_2, \label{eqn:32-order1}\\
 2 &\suchthat X_1 \to X_2 \to W_1 \to X_3 \to W_2, \label{eqn:32-order2}\\
 3 &\suchthat X_1 \to W_1 \to X_2 \to W_2 \to X_3, \nonumber\\
 4 &\suchthat W_1 \to X_1 \to W_2 \to X_2 \to X_3, \nonumber\\
 5 &\suchthat W_1 \to W_2 \to X_1 \to X_2 \to X_3. \nonumber
\end{align}
A layer order indicates which variable (signal layer) is recovered first in successive cancellation decoding. For example, in decoding order $d = 1$ in~\eqref{eqn:d-order1}, $X_1,X_2,X_3$ carrying $m_1(j-2)$ are recovered before $W_1,W_2$ carrying $m_2(j-3)$. In other words, all the $X$ layers are recovered before the $W$ layers in successive cancellation decoding, which corresponds to the layer order $\l = 1$ in~\eqref{eqn:32-order1}. For another example, in decoding order $d = 2$ in~\eqref{eqn:d-order2}, at the end of block $j$, $3 \le j \le b$, we alternately recover $\mh_1(j-2)$ and $\mh_2(j-2)$.
The layers $X_1$ and $X_2$ are recovered before the layer $W_1$, while the layer $X_3$ is recovered after the layer $W_1$, which is followed by the layer $W_2$. This corresponds to the layer order $\l = 2$ in~\eqref{eqn:32-order2}. 

Given a layer order, the achievable rates $R_1$ and $R_2$ are given as sums of the corresponding mutual information terms. For example, for the layer order $\l = 1$ in~\eqref{eqn:32-order1}, the achievable rate region at receiver~$k =1,2$ is the set of rate pairs $(R_1,R_2)$ such that
\begin{equation}
\label{eqn:32-rate1}
\begin{split}
 R_1 &\le I(X_1;Y_k) + I(X_2;Y_k|X_1) + I(X_3;Y_k|X_1,X_2),\\ 
 R_2 &\le I(W_1;Y_k|X_1,X_2,X_3) + I(W_2;Y_k|X_1,X_2,X_3,W_1). 
 \end{split}
\end{equation}
Similarly, for the layer order $\l = 2$ in~\eqref{eqn:32-order2}, the achievable rate region at receiver~$k$ is characterized as
\begin{equation}
\label{eqn:32-rate2}
\begin{split}
 R_1 &\le I(X_1;Y_k) + I(X_2;Y_k|X_1) + I(X_3;Y_k|X_1,X_2,W_1),\\ 
 R_2 &\le I(W_1;Y_k|X_1,X_2) + I(W_2;Y_k|X_1,X_2,X_3,W_1). 
 \end{split}
\end{equation}
Given a layer order $\l\suchthat Z_1\to \cdots \to Z_{K+L}$, define
\begin{equation}\label{eqn:index-set}
\begin{split}
 \mathcal{I}_1 &= \{i\suchthat Z_i \in\{X_1,\ldots,X_K\}\},\\
 \mathcal{I}_2 &= \{i\suchthat Z_i \in\{W_1,\ldots,W_L\}\}.
\end{split}
\end{equation}
Then the achievable rate region 
at receiver~$k$ with corresponding decoding order $d = \l$ is the set of rate pairs $(R_1,R_2)$ such that
\begin{equation}\label{eqn:LO-rate}
\begin{split}
 R_1 &\le \sum_{i\in \mathcal{I}_1} I(Z_i;Y_k|Z^{l-1}),\\
 R_2 &\le \sum_{i\in \mathcal{I}_2} I(Z_i;Y_k|Z^{l-1}).
\end{split}
\end{equation}

\section{Han--Kobayashi Inner Bound}
\label{sec:hk}

The Han--Kobayashi coding scheme~\cite{Han--Kobayashi1981}, illustrated in Fig.~\ref{fig:hk}, is the most powerful among known single-letter coding techniques for the two-user interference channel. In this scheme, rate splitting is used for the messages $M_1 = (M_{10},M_{11})$ and $M_2 = (M_{20},M_{22})$. The messages $M_{10},M_{11},M_{20},M_{22}$ are carried by codewords $s^n,t^n,u^n,v^n$, which are then superimposed into $x^n$ and $w^n$ by symbol-by-symbol mappings $x(s,t)$ and $w(u,v)$. Receiver~1 recovers $\Mh_{10}, \Mh_{20}, \Mh_{11}$ and receiver~2 recovers $\Mh_{10}, \Mh_{20}, \Mh_{22}$  using simultaneous decoding. If we consider $S,T,U,V$ as the channel inputs, the original two-user interference channel can then be viewed as a four-sender two-receiver channel with conditional pmf
\[
 p(y_1,y_2|s,t,u,v) = p(y_1,y_2|x(s,t), w(u,v)).
\]
\begin{figure}[b]
\small
\centering
\def\svgscale{1.4}
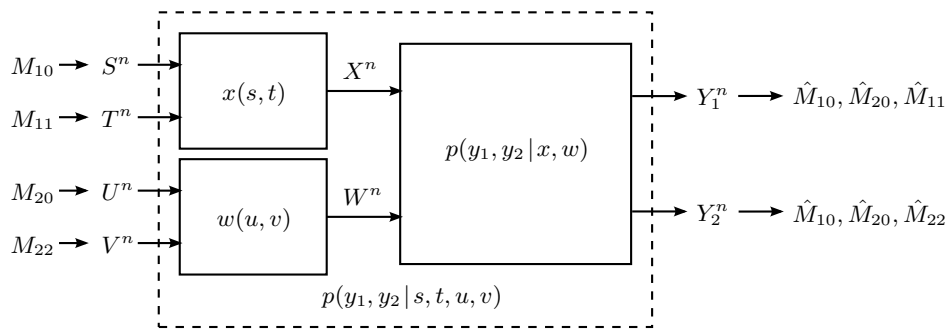
\caption{Han--Kobayashi coding scheme.}
\label{fig:hk}
\end{figure}

For a fixed input pmf $p(s)p(t)p(u)p(v)$ and functions $x(s,t), w(u,v)$, the Han--Kobayashi coding scheme achieves the 4-dimensional auxiliary rate region
\begin{equation}
\label{eqn:hk}
 \Rr_{1,\mathrm{MAC}} \cap \Rr_{2,\mathrm{MAC}}
\end{equation}
where 
\begin{align*}
 \Rr_{1,\mathrm{MAC}} &= \{(R_{10},R_{11},R_{20},R_{22})\suchthat (R_{10},R_{11},R_{20}) \in \Rr_{\mathrm{MAC}}(S,T,U;Y_1)\}\\
 \Rr_{2,\mathrm{MAC}} &= \{(R_{10},R_{11},R_{20},R_{22})\suchthat (R_{10},R_{20},R_{22}) \in \Rr_{\mathrm{MAC}}(S,U,V;Y_2)\},
\end{align*}
and $\Rr_{\mathrm{MAC}}(A,B,C;Y)$ is the standard rate region for a three-user MAC $p(y|a,b,c)$ by random code ensemble $p(a)p(b)p(c)$. Recall that $\Rr_{\mathrm{MAC}}(A,B,C;Y)$ consists of rate triples $(r_1,r_2,r_3)$ such that
\begin{align*}
 r_1 &\le I(A;Y|B,C),\\
 r_2 &\le I(B;Y|A,C),\\
 r_3 &\le I(C;Y|A,B),\\
 r_1+r_2 &\le I(A,B;Y|C),\\
 r_1+r_3 &\le I(A,C;Y|B),\\
 r_2+r_3 &\le I(B,C;Y|A),\\
 r_1+r_2+r_3 &\le I(A,B,C;Y).
\end{align*}
Finally, the Han--Kobayashi inner bound is the union over $p(s)p(t)p(u)p(v)$ and functions $x(s,t),w(u,v)$ of the rate region
\begin{equation}
\label{eqn:hk-inner}
 \text{Proj}_{4\to 2}\bigl(\Rr_{1,\mathrm{MAC}} \cap \Rr_{2,\mathrm{MAC}}\bigr),
\end{equation}
where $\text{Proj}_{4\to 2}$ denotes the projection of the 4-dimensional region of rate quadruples $(R_{10},R_{11},R_{20},R_{22})$ to the 2-dimensional region of rate pairs $(R_1,R_2) = (R_{10}+R_{11}, R_{20}+R_{22})$.

\smallskip

Now we present a coding scheme that achieves the Han--Kobayashi inner bound with single-user decoding by showing the achievability of the 4-dimensional auxiliary region in~\eqref{eqn:hk}. The two common messages $M_{10}$ and $M_{20}$ are transmitted using SWSC, with the 3-1 split in Section~\ref{sec:3-1split}. The two private messages $M_{11}$ and $M_{22}$ are transmitted using the single-block rate-splitting scheme in Section~\ref{sec:rs-one}. The signal $S$ is further split into three layers $S_1, S_2$, and $S_3$. For $j \in [b-2]$, the message $M_{10}(j)$ is carried by $s_3^n, s_2^n$, and $s_1^n$ over blocks $j,j+1$, and $j+2$ respectively. Since the signal $T$ is kept unsplit, the message $M_{20}(j)$ is carried by a single-block code $u^n$ in block $j$. The private messages are further split into two parts $M_{11} = (M_{11}',M_{11}'')$ and $M_{22} = (M_{22}',M_{22}'')$. The four messages $M_{11}',M_{11}'',M_{22}',M_{22}''$ are carried by $t_1^n, t_2^n, v_1^n,v_2^n$, respectively, in a single block. The encoding is illustrated in Table~\ref{tab:hk-31}.


At receiver~1, messages are recovered in the order $d_1$, which is one of the following six (trivial messages at the first and last blocks are skipped):
{\allowdisplaybreaks
\begin{align*}
 1\suchthat & \mh_{11}'(j-1) \to \mh_{10}(j-2) \to \mh_{20}(j-2) \to \mh_{11}''(j-2),\\
 2\suchthat & \mh_{11}'(j) \to \mh_{20}(j) \to \mh_{10}(j-2) \to \mh_{11}''(j),\\
 3\suchthat & \mh_{11}'(j) \to \mh_{10}(j-2) \to \mh_{11}''(j-2) \to \mh_{20}(j),\\
 4\suchthat & \mh_{11}'(j-2) \to \mh_{10}(j-2) \to \mh_{20}(j-2) \to \mh_{11}''(j-2),\\
 5\suchthat & \mh_{11}'(j) \to \mh_{20}(j) \to \mh_{10}(j-2) \to \mh_{11}''(j-1),\\
 6\suchthat & \mh_{11}'(j) \to \mh_{10}(j-2) \to \mh_{11}''(j-2) \to \mh_{20}(j-1).
\end{align*}}%

\begin{table}[t]
\centering
\begin{tabular}{cc@{}c@{}c@{}c@{}c@{}c@{}cc@{}c@{}c@{}c@{}c}
block $\j$  & $1$ && $2$ && $3$ && $4$ & $\cdots$  && $b-1$ && $b$ \\
 \hline\\[-.8em]
  $S_1$ & $1$ && $1$ && $m_{10}(1)$  && $m_{10}(2)$ & $\ldots$ && $\ldots$ &&$m_{10}(b-2)$ \\[-.2em]
  &&  && $\diagup$ && $\diagup$ && &&& $\diagup$ & \\[-.2em]
  $S_2$ & $1$ && $m_{10}(1)$ && $m_{10}(2)$  && $\ldots$ & $\ldots$ && $m_{10}(b-2)$ && $1$ \\[-.2em]
 && $\diagup$ && $\diagup$ && &&& $\diagup$ & && \\[-.2em]
 $S_3$ & $m_{10}(1)$ && $m_{10}(2)$ && $\ldots$  && $\ldots$ & $m_{10}(b-2)$ && $1$ && $1$ \\[.5em]
 $U$ & $m_{20}(1)$ && $m_{20}(2)$ && $\ldots$ && $\ldots$ & $\ldots$ && $\ldots$ && $m_{20}(b)$ \\[.5em]
 $T_1$ & $m_{11}'(1)$ && $m_{11}'(2)$ && $\ldots$ && $\ldots$ & $\ldots$ && $\ldots$ && $m_{11}'(b)$\\[.5em]
 $T_2$ & $m_{11}''(1)$ && $m_{11}''(2)$ && $\ldots$ && $\ldots$ & $\ldots$ && $\ldots$ && $m_{11}''(b)$\\[.5em]
 $V_1$ & $m_{22}'(1)$ && $m_{22}'(2)$ && $\ldots$ && $\ldots$ & $\ldots$ && $\ldots$ && $m_{22}'(b)$\\[.5em]
 $V_2$ & $m_{22}''(1)$ && $m_{22}''(2)$ && $\ldots$ && $\ldots$ & $\ldots$ && $\ldots$ && $m_{22}''(b)$\\[.5em]
\hline\\
\end{tabular}
\caption{A scheme that achieves the Han--Kobayashi inner bound with single-user decoding.}
\label{tab:hk-31}
\end{table}

Fig.~\ref{fig:hk-decoding} illustrates the decoding process for $d_1 = 1$, where $\ast$ indicates messages that were recovered previously. 
\begin{figure*}[htbp]
 \centering
\begin{subfigure}[b]{0.5\textwidth}
\centering
\begin{tabular}{cccc}
block   & $j-2$ & $j-1$ & $j$\\
 \hline\\[-1em]
 $S_1$ & $\ast$ & $\ast$ & $m_{10}(j-2)$ \\[.5em]
 $S_2$ & $\ast$ & $m_{10}(j-2)$  & $m_{10}(j-1)$ \\[.5em]
 $S_3$ & $m_{10}(j-2)$ & $m_{10}(j-1)$  & $m_{10}(j)$ \\[.5em]
 $U$  & $m_{20}(j-2)$ &  $m_{20}(j-1)$  & $m_{20}(j)$ \\[.5em]
 $T_1$ & $\ast$ & $m_{11}'(j-1)$ & $m_{11}'(j)$ \\[.5em]
 $T_2$ & $m_{11}''(j-2)$ & $m_{11}''(j-1)$ & $m_{11}''(j)$  \\[.5em]
\hline
\end{tabular}
\caption{The initial state at the end of block $j$.}
\end{subfigure}%
~
\begin{subfigure}[b]{0.33\textwidth}
\centering
\begin{tabular}{ccc}
  $j-2$ & $j-1$ & $j$\\
 \hline\\[-1em]
 $\ast$ & $\ast$ & $m_{10}(j-2)$ \\[.5em]
 $\ast$ & $m_{10}(j-2)$  & $m_{10}(j-1)$ \\[.5em]
 $m_{10}(j-2)$ & $m_{10}(j-1)$  & $m_{10}(j)$ \\[.5em]
 $m_{20}(j-2)$ &  $m_{20}(j-1)$  & $m_{20}(j)$ \\[.5em]
 $\ast$ & $\ast$ & $m_{11}'(j)$ \\[.5em]
 $m_{11}''(j-2)$ & $m_{11}''(j-1)$ & $m_{11}''(j)$  \\[.5em]
\hline
\end{tabular}
\caption{Step 1: recover $\mh_{11}'(j-1)$.}
\end{subfigure}\\[1em]

\begin{subfigure}[b]{0.5\textwidth}
\centering
\begin{tabular}{cccc}
block   & $j-2$ & $j-1$ & $j$\\
 \hline\\[-1em]
 $S_1$ & $\ast$ & $\ast$ & $\ast$ \\[.5em]
 $S_2$ & $\ast$ & $\ast$  & $m_{10}(j-1)$ \\[.5em]
 $S_3$ & $\ast$ & $m_{10}(j-1)$  & $m_{10}(j)$ \\[.5em]
 $U$  & $m_{20}(j-2)$ &  $m_{20}(j-1)$  & $m_{20}(j)$ \\[.5em]
 $T_1$ & $\ast$ & $\ast$ & $m_{11}'(j)$ \\[.5em]
 $T_2$ & $m_{11}''(j-2)$ & $m_{11}''(j-1)$ & $m_{11}''(j)$  \\[.5em]
\hline
\end{tabular}
\caption{Step 2: recover $\mh_{10}(j-2)$ over blocks $j-2,j-1$, and $j$.}
\end{subfigure}%
~
\begin{subfigure}[b]{0.33\textwidth}
 \centering
\begin{tabular}{ccc}
 $j-2$ & $j-1$ & $j$\\
 \hline\\[-1em]
 $\ast$ & $\ast$ & $\ast$ \\[.5em]
 $\ast$ & $\ast$  & $m_{10}(j-1)$ \\[.5em]
 $\ast$ & $m_{10}(j-1)$  & $m_{10}(j)$ \\[.5em]
 $\ast$ &  $m_{20}(j-1)$  & $m_{20}(j)$ \\[.5em]
 $\ast$ & $\ast$ & $m_{11}'(j)$ \\[.5em]
 $m_{11}''(j-2)$ & $m_{11}''(j-1)$ & $m_{11}''(j)$  \\[.5em]
\hline
\end{tabular}
\caption{Step 3: recover $\mh_{20}(j-2)$.}
\end{subfigure}\\[1em]

\begin{subfigure}[b]{0.5\textwidth}
 \centering
 \begin{tabular}{@{\quad}c@{\quad\qquad}c@{\qquad\quad}c@{\quad\quad}c@{\quad}}
block   & $j-2$ & $j-1$ & $j$\\
 \hline\\[-1em]
 $S_1$ & $\ast$ & $\ast$ & $\ast$ \\[.5em]
 $S_2$ & $\ast$ & $\ast$  & $m_{10}(j-1)$ \\[.5em]
 $S_3$ & $\ast$ & $m_{10}(j-1)$  & $m_{10}(j)$ \\[.5em]
 $U$  & $\ast$ &  $m_{20}(j-1)$  & $m_{20}(j)$ \\[.5em]
 $T_1$ & $\ast$ & $\ast$ & $m_{11}'(j)$ \\[.5em]
 $T_2$ & $\ast$ & $m_{11}''(j-1)$ & $m_{11}''(j)$  \\[.5em]
\hline
\end{tabular}
\caption{Step 4: recover $\mh_{11}''(j-2)$.}
\end{subfigure}
\caption{Illustration of the decoding process for $d_1 =1$.}
\label{fig:hk-decoding}
\end{figure*}
By the standard analysis, the achievable rate region for this decoding order is the set of rate quadruples $(R_{10},R_{11},R_{20},R_{22})$ such that
\begin{equation}
\label{eqn:hk-region1}
\begin{split}
 R_{10} &\le I(S_1;Y_1) + I(S_2;Y_1|S_1,T_1) + I(S_3;Y_1|S_1,T_1,S_2),\\
 R_{20} &\le I(U;Y_1|S_1,T_1,S_2,S_3),\\
 R_{11} &\le I(T_1;Y_1|S_1) + I(T_2;Y_1|S_1,T_1,S_2,S_3,U),
\end{split}
\end{equation}
which is exactly the rate region corresponding to the layer order $\l_1$ 
\[
  1\suchthat S_1 \to T_1 \to S_2 \to S_3 \to U \to T_2. 
\]
One can similarly verify that the layer orders $\l_1$ corresponding to decoding orders $d_1= 2, \ldots, 6$ are 
\begin{align*}
 2\suchthat & T_1 \to U \to S_1 \to T_2 \to S_2 \to S_3,\\
 3\suchthat & T_1 \to S_1 \to U \to S_2 \to S_3 \to T_2,\\
 4\suchthat & S_1 \to S_2 \to T_1 \to S_3 \to U \to T_2,\\
 5\suchthat & T_1 \to U \to S_1 \to S_2 \to T_2 \to S_3,\\
 6\suchthat & T_1 \to S_1 \to S_2 \to U \to S_3 \to T_2.
\end{align*}
At receiver~2, the messages are recovered in the order $d_2$, which is one of the following six: 
{\allowdisplaybreaks\begin{align*}
 7\suchthat & \mh_{22}'(j-1) \to \mh_{10}(j-2) \to \mh_{20}(j-2) \to \mh_{22}''(j-2),\\
 8\suchthat & \mh_{22}'(j) \to \mh_{20}(j) \to \mh_{10}(j-2) \to \mh_{22}''(j),\\
 9\suchthat & \mh_{22}'(j) \to \mh_{10}(j-2) \to \mh_{22}''(j-2) \to \mh_{20}(j),\\
 10\suchthat & \mh_{22}'(j-2) \to \mh_{10}(j-2) \to \mh_{20}(j-2) \to \mh_{22}''(j-2),\\
 11\suchthat & \mh_{22}'(j) \to \mh_{20}(j) \to \mh_{10}(j-2) \to \mh_{22}''(j-1),\\
 12\suchthat & \mh_{22}'(j) \to \mh_{10}(j-2) \to \mh_{22}''(j-2) \to \mh_{20}(j-1),
\end{align*}}%
with corresponding achievable layer orders $\l_2$
\begin{align*}
 7\suchthat & S_1 \to V_1 \to S_2 \to S_3 \to U \to V_2,\\
 8\suchthat & V_1 \to U \to S_1 \to V_2 \to S_2 \to S_3,\\
 9\suchthat & V_1 \to S_1 \to U \to S_2 \to S_3 \to V_2,\\
 10\suchthat & S_1 \to S_2 \to V_1 \to S_3 \to U \to V_2,\\
 11\suchthat & V_1 \to U \to S_1 \to S_2 \to V_2 \to S_3,\\
 12\suchthat & V_1 \to S_1 \to S_2 \to U \to S_3 \to V_2.
\end{align*}
Let $p'$ be the pmf $p'(s_1)p'(s_2)p'(s_3)p'(t_1)p'(t_2)p'(u)p'(v_1)p'(v_2)$ along with $s(s_1,s_2,s_3),t(t_1,t_2)$, and $v(v_1,v_2)$. Let $\Rr_1(p',\l_1)$ be the rate region corresponding to the layer order $\l_1 = 1,\ldots,6$ at receiver~1. For example, $\Rr_1(p',1)$ is the set of rate quadruples $(R_{10},R_{11},R_{20},R_{22})$ in~\eqref{eqn:hk-region1}. Similarly let $\Rr_2(p',\l_2)$ be the rate region corresponding to the layer order $\l_2 = 7,\ldots,12$ at receiver~2. 
This SWSC scheme achieves $\Rr_1(p',\l_1) \cap \Rr_2(p',\l_2)$ for any $\l_1 = 1,\ldots,6$ and $\l_2 = 7,\ldots,12$, which is sufficient to achieve the 4-dimensional auxiliary region in~\eqref{eqn:hk}; see Appendix~\ref{appx:hk} for the proof. 

%
%
%

\begin{theorem}
 \label{thm:hk}
Let $p$ denote the pmf $p(s)p(t)p(u)p(v)$ along with functions $x(s,t)$ and $w(u,v)$. Then
 \[
  \Rr_{1,\mathrm{MAC}}(p) \cap \Rr_{2,\mathrm{MAC}}(p) = \bigcup_{p'\simeq p} \bigcup_{\l_1=1}^6 \bigcup_{\l_2=7}^{12} [\Rr_1(p',\l_1) \cap \Rr_2(p',\l_2)].
  \]
\end{theorem}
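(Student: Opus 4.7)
The plan is to prove the claimed equality by double containment. For the direction $\supseteq$, fix any $p'\simeq p$ and layer orders $\l_1\in\{1,\dots,6\}$ and $\l_2\in\{7,\dots,12\}$, and I would verify that $\Rr_1(p',\l_1)\subseteq\Rr_{1,\mathrm{MAC}}(p)$ (and symmetrically at receiver~2). The sum $R_{10}+R_{11}+R_{20}$ of the layer-wise bounds defining $\Rr_1(p',\l_1)$ telescopes via the chain rule to $I(S,T,U;Y_1)$, the sum-rate MAC bound. The individual and pairwise MAC bounds follow by similar telescoping combined with the inequality $I(A;Y|B)\le I(A;Y|B,C)$ valid under $A\perp(B,C)$, reflecting that conditioning on another independent sender only enlarges the single-term MAC constraint relative to the successive-cancellation quantity. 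This disposes of the easy containment.

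For the direction $\subseteq$, the main tool is a three-user extension of Lemma~\ref{lem:layer-splitting}: with $S$ split into three layers and $T$ (resp.\ $V$) split into two, the union over splits $p'\simeq p$ and over the six alternating layer orders $\l_1$ (resp.\ $\l_2$) of the rectangular regions $\Rr_1(p',\l_1)$ equals the full three-user MAC polytope $\Rr_{\mathrm{MAC}}(S,T,U;Y_1)$ (resp.\ $\Rr_{\mathrm{MAC}}(S,U,V;Y_2)$). I would prove this three-user fact by iterating Lemma~\ref{lem:layer-splitting}: first split $S$ against the pair $(T,U)$ viewed as a single effective user to sweep one pair of MAC facets, then split $T$ against $(S,U)$ to sweep the transverse coordinate; the six alternating orders correspond exactly to the six permutations of the groups $(S,T,U)$ and hence to the six vertices of the MAC dominant face.

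The central obstacle is that the split of $S$ must be simultaneously compatible with both receivers, since $S$ is shared, whereas the splits of $T$ (used only at receiver~1) and $V$ (used only at receiver~2) can be chosen independently. The resolution exploits the structural parallelism between the two families of layer orders: for each $i=1,\dots,6$, the pair $(\l_1,\l_2)=(i,i+6)$ places the three $S$-layers and $U$ in identical relative positions and differs only in whether the private variable is $T$ or $V$. Consequently, a single split of $S$ simultaneously reaches matched corners of the two MAC dominant faces. A finite case analysis on which of the seven MAC constraints are active at a given target quadruple $(R_{10},R_{11},R_{20},R_{22})$ in the intersection selects the appropriate matched pair $(i,i+6)$; the $T$ and $V$ splits are then tuned independently at their respective receivers via the two-user Lemma~\ref{lem:layer-splitting} to meet the residual constraints on $R_{11}$ and $R_{22}$, and sweeping the common $S$-split within the chosen pair traces out the corresponding face of $\Rr_{1,\mathrm{MAC}}(p)\cap\Rr_{2,\mathrm{MAC}}(p)$. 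Aggregating the contributions from the six matched pairs covers the full intersection, completing the proof. I expect the hardest step to be a clean statement and proof of the three-user layer-splitting identity that simultaneously handles both MAC polytopes with a shared $S$-split; the independence of the private splits of $T$ and $V$ is what makes the coupling tractable.
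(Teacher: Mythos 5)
Your $\supseteq$ direction is standard and fine, and you have correctly identified the central obstacle: the split of $S$ is shared by the two receivers. The gap is in the resolution. Every matched pair $(\l_1,\l_2)=(i,i+6)$ places $S_1,S_2,S_3,U$ identically in both chains; in particular, for $i\le 3$ the layers $S_2,S_3$ are adjacent in \emph{both} orders and for $i\ge 4$ the layers $S_1,S_2$ are adjacent in both. So every matched pair collapses to a common \emph{two}-layer split of $S$ used at both receivers, and the extra $S$-layer does no work. But a generic quadruple in $\Rr_{1,\mathrm{MAC}}(p)\cap\Rr_{2,\mathrm{MAC}}(p)$ requires $S$ to be split at \emph{different} erasure parameters $\a'\ne\a''$ in the two successive-cancellation chains --- the two MAC dominant faces are unrelated polytopes and the private rates $R_{11},R_{22}$ are independent --- and no matched pair, even unioning over all $p'\simeq p$, can produce $\a'\ne\a''$. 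The ``matched corners'' are precisely the wrong corners to aim for.

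The paper's Appendix~\ref{appx:hk} resolves this differently: it proves a three-user layer-splitting lemma (Lemma~\ref{lem:layer-splitting3}) that needs only a \emph{two}-layer erasure split of $S$ and three layer orders per receiver, applies it independently at the two receivers to get erasure parameters $\a'$ and $\a''$, and then uses the fact that one erasure channel is degraded with respect to the other to merge the two splits into a single three-layer chain via the functional representation lemma. Crucially, this merge \emph{cross-pairs} the layer orders: when $\a'>\a''$, receiver~1's orders become $\l_1\in\{1,2,3\}$ (break between $S_1$ and $S_2$) while receiver~2's become $\l_2\in\{10,11,12\}$ (break between $S_2$ and $S_3$), and conversely when $\a'\le\a''$ --- exactly the pairing of Lemma~\ref{lem:hk-split}. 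That degradation-and-merge construction is the missing idea; replacing the matched-pair step with it repairs the argument, after which Theorem~\ref{thm:hk} follows since the remaining pairs in the full union are covered by the easy containment.
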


Consequently, taking the union over all pmfs $p(s)p(t)p(u)p(v)$ and functions $x(s,t), w(u,v)$, the coding scheme in Table~\ref{tab:hk-31} achieves the Han--Kobayashi inner bound~\eqref{eqn:hk-inner} for the two-user interference channel $p(y_1,y_2|x,w)$.

\section{Concluding Remarks}
\label{sec:rmk}
In this paper, we proposed the sliding-window superposition coding scheme (SWSC) as an implementable alternative to  the rate-optimal simultaneous decoding. Combined with the conventional rate-splitting technique, the coding scheme can be generalized to achieve the Han--Kobayashi inner bound on the capacity region of the two-user interference channel. Since the publication of the initial work~\cite{Wang--Sasoglu--Kim2014} on SWSC, extensive simulations of the SWSC scheme have been performed in more practical communication scenarios, such as the Ped-B fading interference channel model~\cite{Kim--Ahn--Kim--Park--Wang--Chen--Park2015, Kim--Ahn--Kim--Park--Chen--Kim2016}. With several improvements in transceiver design, such as soft decoding, input bit-mapping and layer optimization, and power control, the performance figures presented here can be improved by another 10--20\%~\cite{Kim--Ahn--Kim--Park--Wang--Chen--Park2015}. System-level performance as well as requirements on the network operation for implementing SWSC in 5G cellular networks are discussed in~\cite{Kim--Ahn--Kim--Park--Chen--Kim2016}. These results indicate that SWSC is a promising candidate for interference management in future cellular networks.

On the theory side, the SWSC scheme can be further extended to support more senders and receivers; a more complete theory on this topic will be reported elsewhere. Here we present a new ``dimension'' of the problem to illustrate the richness of potential extensions. Consider the SWSC scheme with a 2-2 split, as defined in Section~\ref{sec:layering-order}. This scheme has 4 possible layer orders:
\begin{equation}
 \begin{split}
 \label{eqn:22so}
 1 &\suchthat X_1\to X_2 \to W_1 \to W_2,\\
 2 &\suchthat X_1\to W_1 \to X_2 \to W_2,\\
 3 &\suchthat W_1\to X_1 \to W_2 \to X_2,\\
 4 &\suchthat W_1\to W_2 \to X_1 \to X_2.
 \end{split}
\end{equation}
As in the 2-1 split case, this scheme is not sufficient to achieve the MLD region $\Rr^*$ in general. There are two additional \emph{nonalternating} layer orders:
\begin{align*}
 5 &\suchthat X_1\to W_1 \to W_2 \to X_2,\\
 6 &\suchthat W_1\to X_2 \to X_2 \to W_2,
\end{align*}
which also preserve the relative orders $X_1 \to X_2$ and $W_1 \to W_2$ but these layer orders do not admit corresponding decoding orders. 

It turns out all six layer orders can be achieved if the messages are scheduled in two dimensions. Instead of communicating the messages over $b$ consecutive blocks (in a single dimension), one can communicate $b_1(b_2-1)$ messages $M_1(jk),j\in[b_1],k\in[b_2-1]$, and $(b_1-1)b_2$ messages $M_2(jk),j\in[b_1-1],k\in[b_2]$, over $b_1b_2$ blocks (in two dimensions). Fig.~\ref{tab:2dswsc} illustrates the message scheduling for $b_1 = b_2 = 4$.
\begin{figure}[htbp]
\centering
\small
\begin{tabular}{c|c|c|c|c|}
$X_1,\;W_1$  &  & & &   \\
$X_2,\;W_2$  & $i_2 = 1$ & $2$ & $3$  & $i_2 = 4$\\\hline
 &  & & &   \\[-.8em]
\multirow{2}{*}{$i_1 = 1$} & $1, \qquad 1$ & $\quad 1, \quad m_2(11)$ & $\quad 1, \quad 
m_2(12)$ & $\quad 1,\quad m_2(13)$ \\[.5em]
 & $m_1(11), m_2(11)$ & $m_1(12), m_2(12)$ & $m_1(13), m_2(13)$ & $m_1(14),\quad 1$ 
\\[.5em]\hline
  &  & & &  \\[-.8em]
\multirow{2}{*}{$2$} & $m_1(11),\quad 1$  & $m_1(12), m_2(21)$ & $m_1(13), m_2(22)$ 
  & $m_1(1,4), m_2(23)$ \\[.5em]
 & $m_1(21), m_2(21)$ & $m_1(22), m_2(22)$ & $m_1(23), m_2(23)$ & $m_2(1,b_2),\quad 
1$\\[.5em] \hline
 &  & & & \\[-.8em]
\multirow{2}{*}{$3$} & $m_1(21), \quad 1$  & $m_1(22), m_2(31)$ & $m_1(23), m_2(32)$
& $m_1(24),m_2(33)$ \\[.5em]
 & $m_1(31), m_2(31)$ & $m_1(32), m_2(32)$ & $m_1(33), m_2(33)$ & 
$m_1(34), m_2(34)$\\[.5em] \hline
   &  & & & \\[-.8em]
\multirow{2}{*}{$i_1 = 4$} & $m_1(31), \quad 1$  & $m_1(32), m_2(41)$ & $m_1(33), 
m_2(42)$ & $m_1(34),m_2(43)$ \\[.5em]
 & $\quad 1,\quad m_2(41)$ & $\quad 1,\quad m_2(42)$ & $\quad 1,\quad m_2(43)$ & 
$1,\qquad 1$\\[.5em] \hline
 \multicolumn{5}{c}{}\\
\end{tabular}
\caption{Message scheduling for the two-dimensional SWSC.}
\label{tab:2dswsc}
\end{figure}
This two-dimensional SWSC scheme with the 2-2 split has symmetric encoding structure for both users. With properly chosen successive cancellation decoding, any layer order is feasible, which is sufficient to achieve the MLD region $\Rr^*$~\cite[Section 4.4.1]{Wang2015}. With further augmentation, this alternative SWSC scheme can also achieve the Han--Kobayashi inner bound~\cite[Section 4.4.2]{Wang2015}.

\appendices

\section{Proof of Lemma~\ref{lem:layer-splitting}}
\label{appx:layer-splitting}

First, for any rate pair $(R_1,R_2)$ in~\eqref{eqn:rs-mac}, we have
{\allowdisplaybreaks\begin{align}
 R_1 &\le I(X_1;Y_1) + I(X;Y_1|W,X_1)\nonumber\\*
 &\stackrel{(a)}{\le} I(X_1;Y_1|W) + I(X;Y_1|W,X_1)\nonumber\\*
 &\stackrel{(b)}{=} I(X;Y_1|W),\nonumber\\
 R_2 &\le I(W;Y_1|X_1)\nonumber\\*
 &\stackrel{(c)}{\le} I(W;Y_1|X_1,X_2)\nonumber\\*
 &\stackrel{(d)}= I(W;Y_1|X),\nonumber\\
 R_1 + R_2 &\le I(X_1;Y_1) + I(X;Y_1|W,X_1) + I(W;Y_1|X_1)\nonumber\\*
 &= I(X,W;Y_1), \label{eqn:equal-sum}
\end{align}}%
where $(a)$ and $(c)$ follow since $W$ is independent of $(X_1,X_2)$, and $(b)$ and $(d)$ follow since $X_1 \to X \to (W,Y_1)$ form a Markov chain. Thus, any rate point in $\Rr_{\mathrm{RS}}(p')$ with $p'\simeq p$ is also in $\Rr_{1,\mathrm{SD}}(p)$. 

Now it suffices to show that for any rate point $(I_1,I_2)$ on the \emph{dominant face}, i.e., $I_1+I_2 = I(X,W;Y_1)$, there exists a $p' \simeq p$ such that
 \begin{align*}
  I_1 &= I(X_1;Y_1) + I(X;Y_1|X_1,W),\\
  I_2 &= I(W;Y_1|X_1).
 \end{align*}
 To this end, note that when $X_1 = X$ and $X_2 = \emptyset$, expression~\eqref{eqn:rs-mac} attains one corner point $(I(X;Y_1), I(W;Y_1|X))$; when $X_1 = \emptyset$ and $X_2 = X$, expression~\eqref{eqn:rs-mac} attains the other corner point $(I(X;Y_1|W), I(W;Y_1))$. Moreover, the rate pair in~\eqref{eqn:rs-mac} and $(I_1,I_2)$ share the same sum-rate as in~\eqref{eqn:equal-sum}.
 Hence, it suffices to show that for every $\a \in [0,1]$, there exists a choice of $p(x_1)p(x_2)$ and function $x(x_1,x_2)$ such that
 \[
  I(W;Y_1|X_1) = I_2 = \a I(W;Y_1) + (1-\a) I(W;Y_1|X).
 \]
 
Let $p_{X_1}(x) = (1-\a)p_X(x)$ for $x \in \Xc$ and $p_{X_1}(\error) = \a$. Let $X_2$ be independent of $X_1$ and $p_{X_2}(x) = p_X(x)$ for $x \in \Xc$. Let
\[
 x(x_1,x_2) = \begin{cases}
 x_1, &\text{ if } x_1 \neq \error,\\
 x_2, &\text{ otherwise}.
\end{cases}
\]
This choice of $p(x_1)p(x_2)$ and $x(x_1,x_2)$ induces a conditional pmf
\begin{equation}
 \label{eqn:u-split}
  p_{X_1|X}(x_1|x) = \begin{cases}
                1-\a, & \text{ if } x_1 = x,\\
                \a, & \text{ if } x_1 = \error,\\
                0, & \text{ otherwise,}
               \end{cases}
 \end{equation}
which is an erasure channel with input $X$, output $X_1$, and erasure probability $\a$. Define $E = \mathbbm{1}_{\{X_1 = \error\}}$. It can be checked that $E \sim \Bern(\a)$ is independent of $X$ and $X_2$. Thus, we have
 \begin{align*}
  I(W;Y_1|X_1) & = I(W;Y_1|X_1,E) \\
  &= \a I(W;Y_1|X_1,E=1) + (1-\a) I(W;Y_1|X_1,E=0)\\
  &\stackrel{(a)}{=} \a I(W;Y_1|X_1 = \error, E=1) + (1-\a) I(W;Y_1|X,X_1,E=0)\\
  &\stackrel{(b)}{=} \a I(W;Y_1|X_1 = \error, E=1) + (1-\a) I(W;Y_1|X)\\
  &\stackrel{(c)}{=} \a I(W;Y_1|E=1) + (1-\a) I(W;Y_1|X)\\
  &\stackrel{(d)}{=} \a I(W;Y_1) + (1-\a) I(W;Y_1|X), 
 \end{align*}
 where $(a)$ follows since when $E = 0$, $X_1 = X$, $(b)$ follows since given $X$, $(W,Y_1)$ are conditionally independent of $(X_1,E)$, $(c)$ follows since $E =1$ is equivalent as $X_1 = \error$, and $(d)$ follows since $E$ is independent of $(X,W,Y_1)$. Therefore, as $\a$ increases from $0$ to $1$, the rate pair in~\eqref{eqn:rs-mac} moves continuously and linearly from one corner point to the other along the line $R_1 + R_2 = I(X,W;Y_1)$.

 
 \section{Proof of Proposition~\ref{pro:31swsc}}
 \label{appx:31swsc}
 By Lemma~\ref{lem:layer-splitting}, $\Rr_{1,\text{SD}} \cap \Rr_{2,\text{SD}}$ is equivalent to the set of $(R_1,R_2)$ such that
 \begin{equation}\label{eqn:fact3}
 \begin{split}
  R_1 &\le \min\{I(X_1';Y_1)+I(X;Y_1|X_1',W),\, I(X_1'';Y_2)+I(X;Y_2|X_1'',W)\}\\
  R_2 &\le \min\{I(W;Y_1|X_1'), I(W;Y_2|X_1'')\}
 \end{split}
 \end{equation}
for erasure channels $p(x_1'|x)$ and $p(x_1''|x)$ with erasure probabilities $\a'$ and $\a''$ respectively. 
Suppose that $\a' > \a''$. Then the channel $p(x_1'|x)$ is \emph{degraded} with respect to the channel $p(x_1''|x)$. Since the rate expressions in~\eqref{eqn:fact3} only depend on the marginal conditional pmfs of $p(x_1',x_1''|x)$, we assume without loss of generality that $X \to X_1'' \to X_1'$ form a Markov chain. By the functional representation lemma (twice), for $p(x_1'|x_1'')$, there exists an $X_2$ independent of $X_1'$ such that $X_1'' = f(X_1',X_2)$; for $p(x_1''|x)$, there exists an $X_3$ independent of $(X_2,X_1')$ such that $X = g(X_1'',X_3) = g(f(X_1',X_2),X_3) \triangleq x(X_1',X_2,X_3)$. Renaming $X_1 \triangleq X_1'$ and plugging $X_1'' = f(X_1,X_2)$ into~\eqref{eqn:fact3}, we obtain the rate region $\Rr_{\mathrm{SWSC}}(p',3,1,d_1,d_2)$ with $(d_1,d_2)$ given in~\eqref{eqn:31split-d1}. In the case when $\a' \le \a''$, we can assume that $X \to X_1' \to X_1''$ form a Markov chain. Then, using the functional representation similarly as above, the rate region in~\eqref{eqn:fact3} can be reduced to the rate region $\Rr_{\mathrm{SWSC}}(p',3,1,d_1,d_2)$ with $(d_1,d_2)$ given in~\eqref{eqn:31split-d2}.

\section{Proof of Theorem~\ref{thm:insufficiency}}
\label{appx:insufficiency}
We provide an example of strict inclusion between the two regions in the symmetric Gaussian interference channel (cf.~\eqref{eq:channel}) with $g_{11} = g_{22} = 1, g_{12} = g_{21} = g, S_1 = S_2 = S = P$ and $I_1 = I_2 = I = g^2P$. Assume that the interference channel has \emph{strong, but not very strong,} interference, i.e., $S < I < S(S+1)$. The capacity region of this channel is characterized by the set of rate pairs $(R_1, R_2)$ such that
\begin{align*}
R_1 &\le \C(S),\\
R_2 &\le \C(S),\\
R_1 + R_2 &\le \C(I+S),
\end{align*}%
which is achieved by simultaneous decoding with a single input distribution $X_1,X_2 \sim N(0,P)$~\cite{Sato1978b,Han--Kobayashi1981}. 

Given $\Rr(p,s,t,d_1,d_2)$, let $\Rr^*(s,t,d_1,d_2)$ be the closure of the union of $\Rr(p,s,t,d_1,d_2)$ over all $p$. Define 
\[
R_1^*(s,t,d_1,d_2) = \max\{R_1\suchthat (R_1,\C(S)) \in \Rr^*(s,t,d_1,d_2)\}
\]
as the maximal achievable rate $R_1$ such that $R_2$ is at individual capacity. In order to show the corner point of the capacity region is not
achievable using any $(p,s,t,d_1,d_2)$ rate-splitting scheme, it suffices to establish the following.

\smallskip
\begin{proposition}
\label{prop:gaussian}
For the symmetric Gaussian interference channel with $S < I < S(S+1)$,
\[
R_1^*(s,t,d_1,d_2) < {\textsf{\upshape C}}\Big(\frac{I}{1+S}\Big)
\]
for any finite $s,t$ and decoding orders $d_1,d_2$.
\end{proposition}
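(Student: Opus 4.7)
The plan is to reduce the problem to Gaussian layer distributions, then derive two-sided per-layer rate constraints that are mutually incompatible with the target corner point $(R_1,R_2) = (\C(I/(1+S)), \C(S))$ in the strong-but-not-very-strong regime. First, I would argue that for the Gaussian interference channel, the rate region $\Rr_{\mathrm{RS}}(p,s,t,d_1,d_2)$ is, without loss of generality, maximized by taking the layers to be independent Gaussians $X_j \sim \N(0,V_j)$ and $W_k \sim \N(0,U_k)$ combined additively as $X = \sum_j X_j$ and $W = \sum_k W_k$, with $\sum_j V_j = \sum_k U_k = P$. This renders every mutual information in the polytope constraints as an explicit log-expression in the layer variances.

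Next, I would analyze the constraint $R_2 = \C(S)$ at receiver~2. Because $I(W;Y_2) < \C(S) = I(W;Y_2\cond X)$, achieving $R_2 = \C(S)$ forces $d_2$ to decode all $X$-layers before any $W$-layer; the per-$W$ rates at receiver~2 then take the telescoping form $R_{2k}^{(2)} = \C(U_k/(1+\sum_{j>k} U_j))$ and sum to $\C(S)$. For these same values to be consistent with receiver~1's constraints, the Rx1 rate bound $\C(g^2 U_k/(1+g^2\sum_{j>k}U_j + \tau_k))$ must dominate Rx2's, where $\tau_k$ is the variance of $X$-layers still unknown when $W_k$ is decoded at receiver~1. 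A direct calculation reduces this to $\tau_k \le g^2-1 = (I-S)/S$ for every $k$, a threshold strictly less than $S = P$ in the regime $I < S(1+S)$.

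Then I would split into two cases. In Case~A, every $W$-layer at receiver~1 is decoded after all $X$-layers (so $\tau_k = 0$ for all $k$). Then $R_1 \le I(X;Y_1) = \C(S/(1+I))$, which is strictly less than $\C(I/(1+S))$ because $S(1+S) < I(1+I)$ whenever $S < I$. In Case~B, some positive $X$-variance $S_{\mathrm{post}}$ is decoded after all $W$-layers at receiver~1; the constraint above forces $S_{\mathrm{post}} \le (I-S)/S$. I would then partition the $X$-layers at receiver~1 into a pre-$W$ group (total variance $S-S_{\mathrm{post}}$) and a post-$W$ group (total variance $S_{\mathrm{post}}$). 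For the pre-$W$ group, receiver~1's constraint is tighter than receiver~2's (the inequality $S(1+S) < I(1+I)$ again), giving rate at most $\C((S-S_{\mathrm{post}})/(1+S_{\mathrm{post}}+I))$; for the post-$W$ group, receiver~2's constraint is tighter than receiver~1's (equivalent to $g^2 < 1+P$, which is $I < S(1+S)$), giving rate at most $\C(IS_{\mathrm{post}}/(S(1+S)))$.

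Summing the two bounds and maximizing over $S_{\mathrm{post}} \in [0, (I-S)/S]$, the supremum is attained at the boundary value $(I-S)/S$, and the desired strict inequality $R_1 < \C(I/(1+S))$ reduces, after routine algebra, to $(I-S)[(I-S) - S^2] < 0$, which is exactly the hypothesis $I < S(1+S)$. The main obstacle I anticipate is justifying rigorously that the clean two-part partition above captures the supremum of $R_1$ across \emph{all} decoding orders $d_1$, including arbitrary interleavings of $X$- and $W$-layers and any number of layers $s,t$; this should follow from a polymatroid-style chain-rule argument exploiting the monotonicity of the rate-loss per layer in the uncancelled interference, together with the observation that the receiver~2 constraint on any $X$-layer that benefits from partial $W$-cancellation at receiver~1 is always the tighter one under the standing hypothesis $I < S(1+S)$.
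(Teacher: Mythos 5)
Your proposal hinges, right at the start, on a reduction that the paper neither makes nor supports: that the rate region $\Rr_{\mathrm{RS}}(p',s,t,d_1,d_2)$ is ``without loss of generality'' maximized by independent additive Gaussian layers $X_j \sim \N(0,V_j)$, $X=\sum_j X_j$. This is a genuine gap. The rate-splitting framework in Theorem~\ref{thm:insufficiency} allows arbitrary layer pmfs $p'(x_1)\cdots p'(x_s)p'(w_1)\cdots p'(w_t)$ and arbitrary symbol-by-symbol maps $x(x_1,\ldots,x_s)$, $w(w_1,\ldots,w_t)$, constrained only by $p'\simeq p$, i.e.\ the \emph{aggregate} $X$ and $W$ must be $\N(0,P)$. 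In fact, the paper's own achievability device (the erasure-channel construction of Lemma~\ref{lem:layer-splitting}, where $X_1$ equals $X$ with probability $1-\alpha$ and an erasure symbol otherwise) is an explicit example of a useful non-Gaussian, non-additive layer decomposition. Because the objective involves minima over two different Gaussian MACs, there is no extremal-entropy argument available that would show additive Gaussian layers dominate all such constructions, and one cannot prove an \emph{upper} bound on $R_1^*$ by restricting attention to a sub-family of schemes. So the clean closed-form variance calculus that drives your Cases~A and~B (the thresholds $\tau_k \le (I-S)/S$, the split into pre-$W$ and post-$W$ variance, and the final polynomial inequality $(I-S)\bigl[(I-S)-S^2\bigr]<0$) never gets off the ground.

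The paper's route is quite different and is designed precisely to avoid this. It first shows (Lemma~\ref{lem:gaussian}) only that the \emph{aggregates} $X$ and $W$ must be $\N(0,P)$ at the corner point; the layers remain arbitrary. It then works by contradiction: using the entropy power inequality together with the rigidity result that any $U$ with $I(U;X+N)=0$ and $X\sim\N(0,P)$ must satisfy $I(U;X)=0$ (Lemma~\ref{lem:independence}), it shows inductively that $I(X_{\pi(1)},\ldots,X_{\pi(s-1)};X)=0$ and $I(W_{\sigma(1)},\ldots,W_{\sigma(s-1)};W)=0$. This forces $I_1 \le I(X;Y_1) = \C\bigl(S/(1+I)\bigr) < \C\bigl(I/(1+S)\bigr)$, which is your Case~A conclusion, but reached without any Gaussianity assumption on the layers. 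Your Case~A is thus the right endgame; the missing content is the entire EPI-plus-rigidity machinery that replaces the unjustified Gaussian-layer reduction and the ensuing variance bookkeeping. (Also, your claim that $R_2 = \C(S)$ ``forces'' receiver~2 to decode all $X$ before any $W$ is not literally true; the paper's Lemma~\ref{lem:d2} shows instead that any optimal $d_2$ can be \emph{rearranged} into that form without rate loss, which is weaker but sufficient.)
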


The remainder of this appendix is dedicated to the proof of Proposition~\ref{prop:gaussian}. First, we find the optimal decoding order at receiver~2 of the $(p,s,t,d_1,d_2)$ rate-splitting scheme that achieves $R_1^*(s,t,d_1,d_2)$. We note that in homogeneous superposition coding, message parts are encoded into independent codewords, and thus can be recovered in an arbitrary order in general (which is in sharp contrast to heterogeneous superposition coding, where $\mh_{ij}$ has to be recovered before $\mh_{ik}$ for $j < k, i = 1,2$). Henceforth, by renaming the message parts, we assume without loss of generality that at receiver~2,  the decoding order among message parts $\{\mh_{11},\ldots,\mh_{1s}\}$ is $\mh_{11} \to \mh_{12} \to \cdots \to \mh_{1s}$ and the decoding order among messages parts $\{\mh_{21},\ldots,\mh_{2t}\}$ is $\mh_{21} \to \mh_{22} \to \cdots \to \mh_{2t}$. Note that between message parts of $m_1$ and $m_2$, there are still flexibility for all possible permutations as long as the subsets $\{\mh_{11},\ldots,\mh_{1s}\}$ and $\{\mh_{21},\ldots,\mh_{2t}\}$ are in order. The next lemma states the optimal order among them.

\begin{lemma}
 \label{lem:d2}
 For any $(p,s,t,d_1,d_2)$ rate-splitting scheme that achieves $R_1^*(s,t,d_1,d_2)$, the decoding order at receiver~2 is
\[
 d_2^* \suchthat \mh_{11} \to \mh_{12} \to \cdots \to \mh_{1s} \to \mh_{21} \to \mh_{22} \to \cdots \to \mh_{2t}.
\]
\end{lemma}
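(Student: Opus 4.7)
The plan is to upper-bound the sum rate $R_2$ that receiver~2 can achieve under an arbitrary decoding order $d_2$, and to show that this bound reaches $\C(S)$ only when $d_2=d_2^*$. Since the scheme is assumed to achieve $(R_1^*,\C(S))\in\Rr^*(s,t,d_1,d_2)$, the decoding order must then be $d_2^*$.

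For a given $d_2$, denote by $\Ac_j$ the set of layers decoded before $\mh_{2j}$ at receiver~2. By the renaming convention stated just before the lemma, $\Ac_j$ contains $\{W_1,\ldots,W_{j-1}\}$ together with some subset $B_j$ of $\{X_1,\ldots,X_s\}$. The standard SC error analysis gives $R_{2j}\le I(W_j;Y_2|\Ac_j)$, so $R_2\le\sum_{j=1}^{t} I(W_j;Y_2|\Ac_j)$. Let $\bar B_j=\{X_1,\ldots,X_s\}\setminus B_j$. Since $W_j$ is independent of all other inputs, $I(W_j;\bar B_j|\Ac_j)=0$, and the mutual-information interaction identity yields
\[
I(W_j;Y_2|\Ac_j)=I(W_j;Y_2|X,W_1,\ldots,W_{j-1})-I(W_j;\bar B_j|\Ac_j,Y_2).
\]
Summing over $j$ and applying the chain rule to the first family of terms gives
\[
R_2\le I(W;Y_2|X)-\sum_{j=1}^{t}I(W_j;\bar B_j|\Ac_j,Y_2) = \C(S)-\sum_{j=1}^{t}I(W_j;\bar B_j|\Ac_j,Y_2).
\]

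Each remainder $I(W_j;\bar B_j|\Ac_j,Y_2)$ is nonnegative, so $R_2\le\C(S)$, and the target $R_2=\C(S)$ forces every remainder to vanish. For the Gaussian channel~\eqref{eq:channel} with the nondegenerate Gaussian layer inputs used in the capacity-achieving rate-splitting, conditioning on $(\Ac_j,W_j)$ leaves $Y_2$ as an affine function of the unknown $X$-components in $\bar B_j$ plus independent Gaussian noise; this makes $W_j$ and $\bar B_j$ conditionally dependent given $(\Ac_j,Y_2)$ whenever $\bar B_j$ is nonempty. Consequently $\bar B_j=\emptyset$ for every $j$, i.e., every $X$-layer is decoded before every $W$-layer at receiver~2 (which also forces $l=s$). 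Combined with the already-fixed relative orders among the $X$-layers and among the $W$-layers, this pins $d_2$ down to $d_2^*$.

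The main obstacle is the final step: one must rigorously argue that $I(W_j;\bar B_j|\Ac_j,Y_2)=0$ implies $\bar B_j=\emptyset$ even for the possibly heterogeneous rate-splitting inputs. Pathological cases (e.g., zero-power or deterministic layers) should be ruled out by noting that such splits can be absorbed into a scheme with smaller $s$ without increasing $R_1^*$, or by invoking the positive-definiteness of the conditional covariance of the Gaussian residual that links $W_j$ to $\bar B_j$.
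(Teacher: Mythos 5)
Your approach is a genuine variant: instead of the paper's local swap-and-compare argument, you write a one-shot global bound on $R_2$ via the interaction-information identity $I(W_j;Y_2\mid\Ac_j)=I(W_j;Y_2\mid X,W^{j-1})-I(W_j;\bar B_j\mid\Ac_j,Y_2)$, sum over $j$ with the chain rule, and conclude that $R_2=\C(S)$ forces every residual $I(W_j;\bar B_j\mid\Ac_j,Y_2)$ to vanish. That part is correct and is in fact the same information-theoretic content the paper extracts one swap at a time (there it appears as $\Rt_{2j}=R_{2j}\Leftrightarrow I(X_k;W_j\mid Y_2,W^{j-1},X^{k-1})=0$). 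The global version is clean and arguably more transparent.

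The final step, however, does not go through. The implication ``$I(W_j;\bar B_j\mid\Ac_j,Y_2)=0\Rightarrow\bar B_j=\emptyset$'' is false in general: the rate-splitting layers $X_i,W_j$ are \emph{not} Gaussian. Only the assembled inputs $X,W$ are forced to be $\N(0,P)$ (Lemma~\ref{lem:gaussian}); the individual layers come with arbitrary auxiliary pmfs $p'$ and symbol-level maps, and the erasure-channel splits of Lemma~\ref{lem:layer-splitting} are the canonical example of non-Gaussian, partly discrete layers. So ``nondegenerate Gaussian layer inputs'' and ``positive-definiteness of the conditional covariance'' are not available assumptions, and degenerate $X$-layers really can sit anywhere with zero interaction information. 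The fallback you sketch (absorb degenerate layers into a smaller-$s$ scheme) changes the scheme, so at best it proves a without-loss-of-generality claim --- which is indeed how the lemma is used downstream in Lemma~\ref{lem:d1}. But even for the WLOG reading, you only track $R_2$: to justify that moving every $X$-layer ahead of every $W$-layer costs nothing, you must also verify that the $X$-layer constraints are not tightened, i.e., $\Rt_{1k}=R_{1k}$. This is precisely the paper's companion step, $\Rt_{1k}=R_{1k}-I(X_k;W_j\mid Y_2,W^{j-1},X^{k-1})=R_{1k}$ once the interaction information vanishes, and it is absent from your argument. Without it, the proposal does not establish the lemma in either its literal or its WLOG form.
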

\begin{IEEEproof} 
Fix any $(p,s,t,d_1,d_2)$ rate-splitting scheme that guarantees $R_2 =\textsf{C}(S)$. Suppose that $\mh_{2j}$ is recovered earlier than $\mh_{1k}$ at receiver~2, that is,
\[
d_2 \suchthat d_{21} \to \mh_{2j} \to \mh_{1k} \to d_{22}.
\] 
Now flip the decoding order of $\mh_{2j}$ and $\mh_{1k}$ in $\dt_2$ as
\[
\dt_2 \suchthat d_{21} \to \mh_{1k} \to \mh_{2j} \to d_{22}
\]
and construct $(p,s,t,d_1,\dt_2)$ rate-splitting scheme, where the message splitting, the underlying distribution, and decoding order $d_1$ remain the
same. Let $\Rt_{ij}$ be the rate of the message part $m_{ij}$ in the $(p,s,t,d_1,\dt_2)$ rate-splitting scheme. Then we have that all the rates remain the same except
\begin{align*}
R_{2j} &= I(W_{j};Y_2|W^{j-1},X^{k-1}),\\
\Rt_{2j} &= I(W_{j};Y_2|W^{j-1},X^{k}),\\
R_{1k} &= I(X_{k};Y_2|W^{j},X^{k-1}),\\
\Rt_{1k} &= I(X_{k};Y_2|W^{j-1},X^{k-1}).
\end{align*}
Note that $R_{2j} \le \Rt_{2j}$ since $X_k$ is independent of $(W^j,X^{k-1})$. On the other hand, since $R_{2j}$ already results in full rate at $R_2$, we must
have $\Rt_{2j} = R_{2j}$. It follows that
\[
I(X_{k};W_{j}|Y_2,W^{j-1},X^{k-1}) = 0
\]
and therefore $R_{1j} = \Rt_{1j}$. 
\end{IEEEproof}

\smallskip
Next, we discuss the structure of the decoding order at receiver~1.
\begin{lemma}
 \label{lem:d1}
 In order to show the insufficiency in achieving the corner point $(\textsf{\upshape C}(I/(1+S)),\textsf{\upshape C}(S))$ for any $(p,s,t,d_1,d_2)$ rate-splitting scheme, it suffices to show the insufficiency of any $(p,s,s,d_1^*,d_2^*)$ rate-splitting scheme with decoding orders
 \begin{equation}
 \label{eq:d1d2}
 \begin{split}
 d_1^* &\suchthat \mh_{1,\pi(1)} \to \mh_{2,\sigma(1)} \to \mh_{1,\pi(2)} \to \mh_{2,\sigma(2)} \to \cdots \to \mh_{1,\pi(s-1)} \to \mh_{2,\sigma(s-1)} \to \mh_{1,\pi(s)},\\
 d_2^* &\suchthat \mh_{11} \to \mh_{12} \to \cdots \to \mh_{1s} \to \mh_{21} \to \mh_{22} \to \cdots \to \mh_{2s},
\end{split}
\end{equation}
where $\pi \suchthat [s] \to [s], \sigma\suchthat [s] \to [s]$ are two permutations on the index set $[s]$.
\end{lemma}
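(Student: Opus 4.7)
The plan is to argue the contrapositive: given any $(p,s,t,d_1,d_2)$ rate-splitting scheme attaining the corner point $(\textsf{C}(I/(1+S)),\textsf{C}(S))$, I will exhibit a canonical $(p',s',s',d_1^*,d_2^*)$ scheme that also attains it. The reduction proceeds through a sequence of modifications that introduce only degenerate (constant) dummy layers, each of which leaves every mutual information appearing in the rate bounds unchanged.

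First, by Lemma~\ref{lem:d2}, I may replace $d_2$ by $d_2^*$ without loss of generality. Then I prune $d_1$: any $m_2$-part appearing after the last $m_1$-part in $d_1$ does not contribute to $R_1 = \sum_k R_{1k}$, so it can be deleted, and $d_1$ may be assumed to terminate in an $m_1$-part. Furthermore, any $\mh_{1k}$ absent from $d_1$ has its rate bounded at receiver~1 only by a treating-as-noise constraint; reassigning such parts to zero rate and relabelling yields a scheme in which all $s$ parts $\mh_{11},\ldots,\mh_{1s}$ appear in $d_1$ in some permuted order $\mh_{1,\pi(1)},\ldots,\mh_{1,\pi(s)}$.

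Next I enforce strict alternation in $d_1$. Whenever $d_1$ contains two consecutive $m_1$-decodings $\mh_{1,\pi(i)}\to\mh_{1,\pi(i+1)}$ with no $m_2$-part between them, I introduce a new $W$-layer with a constant distribution and place its decoding between them; since it carries zero mutual information with every other random variable, all existing rate bounds are preserved, and after inserting the same layer into $d_2^*$ at receiver~2, the rate $R_2 = \textsf{C}(S)$ remains attained. Symmetrically, two consecutive $m_2$-decodings in $d_1$ are separated by a dummy $X$-layer. Iterating produces a $d_1$ of the alternating form $\mh_{1,\pi(1)}\to\mh_{2,\sigma(1)}\to\cdots\to\mh_{1,\pi(s)}$, decoding exactly $s-1$ $m_2$-parts. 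Finally, pad with additional zero-rate dummy layers until the layer counts agree at $s'=t'$, yielding the canonical $(p',s',s',d_1^*,d_2^*)$.

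The main obstacle I anticipate is the alternation step: one must verify carefully that inserting a dummy layer into $d_1$ at receiver~1 is compatible with the fixed order $d_2^*$ at receiver~2 and tightens no bound on $R_1$ there. This reduces to showing that a constant random variable can be introduced at any position of a conditional mutual information without changing its value, and is independent of every other variable in the model, so that both the insertion into $d_1$ and the matching insertion into $d_2^*$ are harmless. Once this bookkeeping is settled, the chain of reductions delivers the desired canonical scheme and completes the proof.
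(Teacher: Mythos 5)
Your proposal is correct and follows essentially the same approach as the paper: both reductions work by introducing degenerate (constant) dummy layers and observing that such layers leave all conditional mutual informations---and hence every rate constraint---unchanged, so that any $(p,s,t,d_1,d_2)$ scheme achieving the corner point can be embedded in a canonical alternating one. The paper packages this as a single embedding into a $(\tilde p, 2s, 2s, \tilde d_1^*, \tilde d_2^*)$ scheme (after first equalizing $s=t$ by nulling), whereas you build the alternating form incrementally by inserting one dummy at a time and pruning trailing $m_2$-parts; these are cosmetic differences, and your anticipated ``bookkeeping'' concern---that each inserted constant layer is independent of everything and so is harmless at both receivers---is exactly the verification the paper invokes when it notes $\tilde d_2^*$ remains the optimal order of Lemma~\ref{lem:d2} after nulling out the inactive parts.
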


\begin{IEEEproof}
First, there is no loss of generality in assuming $s = t$, because any $(p,s,t,d_1,d_2)$ scheme can be viewed as a special case of some $(p',\max\{s,t\},\max\{s,t\},d_1',d_2')$ scheme by nulling out the corresponding inactive variables and preserving the distribution and decoding orders of the active ones. For the alternating decoding order at receiver~1, we note that a $(p,s,s,d_1,d_2^*)$ scheme with \emph{arbitrary} decoding order $d_1$ can viewed as a special case of some $(\tilde{p},2s,2s,\tilde{d}_1^*,\tilde{d}_2^*)$ scheme with \emph{alternating} decoding order $\tilde{d}_1^*$. For example, for $s = 2$, any decoding order must be one of the following six forms
{\allowdisplaybreaks\begin{align*}
 &\mh_{1,\pi(1)} \to \mh_{1,\pi(2)} \to \mh_{2,\sigma(1)} \to \mh_{2,\sigma(2)},\\
 &\mh_{1,\pi(1)} \to \mh_{2,\sigma(1)} \to \mh_{1,\pi(2)} \to \mh_{2,\sigma(2)},\\
 &\mh_{2,\sigma(1)} \to \mh_{1,\pi(1)} \to \mh_{2,\sigma(2)} \to \mh_{1,\pi(1)},\\
 &\mh_{2,\sigma(1)} \to \mh_{2,\sigma(2)} \to \mh_{1,\pi(1)} \to \mh_{1,\pi(2)},\\
 &\mh_{1,\pi(1)} \to \mh_{2,\sigma(1)} \to \mh_{2,\sigma(2)} \to \mh_{1,\pi(2)},\\
 &\mh_{2,\sigma(1)} \to \mh_{1,\pi(1)} \to \mh_{1,\pi(2)} \to \mh_{2,\sigma(2)},
\end{align*}}%
which are all special cases of the alternating order
\[
 \mh_{1,\tilde{\pi}(1)} \to \mh_{2,\tilde{\sigma}(1)} \to \mh_{1,\tilde{\pi}(2)} \to \mh_{2,\tilde{\sigma}(2)} \to \mh_{1,\tilde{\pi}(3)} \to \mh_{2,\tilde{\sigma}(3)} \to \mh_{1,\tilde{\pi}(4)}.
\]
Moreover, because of the special structure of $\tilde{d}_2^*$, it remains the optimal decoding order (in the sense of Lemma~\ref{lem:d2}) even after nulling out the inactive message parts.
\end{IEEEproof}

\smallskip
Now, we provide a necessary condition for a rate-splitting scheme to achieve the corner point of the capacity region.

\begin{lemma}
\label{lem:gaussian}
If a $(p,s,t,d_1,d_2^*)$ rate-splitting scheme attains the corner point $(\textsf{\upshape C}(I/(1+S)),\textsf{\upshape C}(S))$, then $p$ must satisfy 
\[
X \sim \N(0,P)\quad\text{ and }\quad W \sim \N(0,P). 
\]
\end{lemma}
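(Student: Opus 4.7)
The plan is to exploit the structure of the decoding order $d_2^*$ at receiver~2, which collapses the per-layer rate bounds into two single-letter constraints, and then to use the Gaussian maximum-entropy inequality together with Cramer's decomposition theorem to pin down the marginals of $X$ and $W$ at the corner point. Under $d_2^*$, part $M_{1k}$ is recovered with $X^{k-1}$ as side information and $(X_{k+1},\ldots,X_s,W)$ treated as noise, giving $R_{1k}\le I(X_k;Y_2|X^{k-1})$; part $M_{2j}$ is recovered after all the $X$-layers with $W^{j-1}$ as side information, giving $R_{2j}\le I(W_j;Y_2|X,W^{j-1})$. Summing by the chain rule and using that $X$ is a deterministic function of $(X_1,\ldots,X_s)$, so that $(X_1,\ldots,X_s)\to X\to Y_2$ is Markov (and similarly for $W$), one obtains
\[
R_1\le I(X_1,\ldots,X_s;Y_2)=I(X;Y_2),\qquad R_2\le I(W_1,\ldots,W_t;Y_2|X)=I(W;Y_2|X).
\]

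For the $R_2$ bound, I use that $X$ and $W$ are independent under the product ensemble $p'$, so $I(W;Y_2|X)=h(W+N_2)-\tfrac{1}{2}\log(2\pi e)$. The Gaussian maximum-entropy inequality together with $\E[W^2]\le P$ gives $I(W;Y_2|X)\le\C(S)$, with equality iff $W+N_2\sim\N(0,1+P)$. Achieving $R_2=\C(S)$ therefore forces $W+N_2$ to be Gaussian, and Cramer's theorem (peeling off the independent Gaussian $N_2$) yields $W\sim\N(0,P)$. A symmetric argument handles $R_1$: with $W+N_2$ now pinned down as $\N(0,1+P)$,
\[
I(X;Y_2)=h(gX+W+N_2)-\tfrac{1}{2}\log(2\pi e(1+P))\le\C\!\bigl(I/(1+S)\bigr),
\]
with equality iff $gX+W+N_2\sim\N(0,1+I+P)$. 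Forcing $R_1=\C(I/(1+S))$ and applying Cramer's theorem a second time, this time peeling off the independent Gaussian summand $W+N_2$, yields $gX\sim\N(0,I)$, equivalently $X\sim\N(0,P)$.

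The main obstacle is the clean invocation of Cramer's decomposition theorem. In the rate-splitting framework $X$ and $W$ are a priori arbitrary (possibly discrete, possibly without densities), so one cannot rely on direct Gaussian density identification; the classical Cramer characteristic-function argument is needed to conclude that a Gaussian sum with an independent Gaussian summand forces the other summand to be Gaussian as well. Once this is granted, the rest of the proof is just a tightening of standard entropy inequalities into equalities driven by the corner-point hypothesis, together with the telescoping that renders the per-layer splitting irrelevant at receiver~2 under $d_2^*$.
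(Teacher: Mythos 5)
Your proposal is correct and follows essentially the same route as the paper: the decoding order $d_2^*$ and the chain rule collapse the per-layer constraints to $R_1 \le I(X;Y_2)$ and $R_2 \le I(W;Y_2|X)$, and then the Gaussian max-entropy bound tightened to equality pins down both marginals. The only difference is one of exposition: where the paper simply cites that the Gaussian input is the unique capacity-achieving distribution for a Gaussian channel, you make the underlying mechanism explicit by identifying $W+N_2$ (resp.\ $gX+W+N_2$) as Gaussian and then invoking Cram\'{e}r's decomposition theorem to peel off the independent Gaussian summand; this is a valid and slightly more self-contained way to justify the same uniqueness claim.
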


\begin{IEEEproof}
No matter what $d_1$ is, because of the optimal order $d_2^*$, the rate constraints for $R_2$ must satisfy
\begin{align}
 R_2 &\le \sum_{j = 1}^t I(W_j;Y_2|X,W^{j-1}) \nonumber\\
 & = I(W;Y_2|X) \nonumber\\
 &\le \C(S). \label{eq:R2}
\end{align}
Given $X$, the channel from $W$ to $Y_2$ is a Gaussian channel with SNR $S$. Therefore the condition $W\sim \N(0,P)$ is necessary for~\eqref{eq:R2} to
hold with equality. Similarly, $R_1$ must satisfy
\begin{align}
 R_1 &\le \sum_{j=1}^s I(X_j;Y_2|X^{j-1})\nonumber\\
 &= I(X;Y_2)\nonumber\\
 &\le \C \left(\frac{I}{1+S}\right). \label{eq:R1}
\end{align}
Given $W\sim \N(0,P)$, the channel from $X$ to $Y_2$
is a Gaussian channel with SNR $I/(1+S)$. Therefore, the condition $X\sim
\N(0,P)$ is necessary for~\eqref{eq:R1} to hold with equality.
\end{IEEEproof}

\smallskip
We also need the following technical lemma.

\begin{lemma}
\label{lem:independence}
Let $F(u,x)$ be any (continuous) distribution such that $X\sim \N(0,P)$ and $I(U;Y) = 0$, where $Y=X+N$ with $N \sim \N(0,1)$ independent of $X$. Then, $I(U;X) = 0$.
\end{lemma}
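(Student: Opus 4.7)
The plan is to exploit the fact that, since $Y=X+N$ with $N\sim\N(0,1)$ independent of $X$ (and hence of $(U,X)$), we have the Markov chain $U\to X\to Y$, and the map from the conditional law of $X$ given $U=u$ to that of $Y$ given $U=u$ is a convolution with the standard Gaussian kernel $\phi$. Such a convolution is injective, so information about $P_{X|U=u}$ is fully preserved in $P_{Y|U=u}$.

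Concretely, first I would observe that $I(U;Y)=0$ is equivalent to $Y$ being independent of $U$, so that for $P_U$-almost every $u$ the conditional distribution $P_{Y|U=u}$ equals the marginal $P_Y$. Since $X\sim\N(0,P)$ and $N\sim\N(0,1)$ independently, the marginal of $Y$ is $\N(0,P+1)$. Next, using the Markov chain $U\to X\to Y$, the conditional law of $Y$ given $U=u$ can be written as the convolution $P_{X|U=u}\ast\phi$ where $\phi$ is the $\N(0,1)$ density. Hence, for $P_U$-a.e.\ $u$,
\[
P_{X|U=u}\ast\phi \;=\; \N(0,P+1) \;=\; \N(0,P)\ast\phi.
\]

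The main step is then to deduce $P_{X|U=u}=\N(0,P)$ from this identity. This is where the injectivity of Gaussian convolution enters: taking characteristic functions turns the convolution into multiplication, and the characteristic function $e^{-t^2/2}$ of $\phi$ is everywhere nonzero, so one can divide and recover
\[
\widehat{P_{X|U=u}}(t) \;=\; \widehat{\N(0,P)}(t) \;=\; e^{-Pt^2/2}
\]
for every $t\in\Real$. By the uniqueness theorem for characteristic functions, this forces $P_{X|U=u}=\N(0,P)=P_X$ for $P_U$-almost every $u$, which is precisely the statement that $X$ is independent of $U$, and hence $I(U;X)=0$.

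I do not expect any serious obstacle here; the only point that needs a small bit of care is to rule out exceptional null sets of $u$ (handled automatically by working $P_U$-a.e.) and to justify writing $P_{Y|U=u}=P_{X|U=u}\ast\phi$, which follows at once from the Markov structure $U\to X\to Y$ together with the independence of $N$ from $(U,X)$. No continuity of $F(u,x)$ beyond what is needed to define conditional distributions is actually used; the argument goes through for any joint distribution satisfying the stated Markov and Gaussianity assumptions.
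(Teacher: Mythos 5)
Your proof is correct, and it takes a genuinely different route from the paper's. The paper argues information-theoretically: since $Y$ is independent of $U$ and $U\to X\to Y$ is Markov, one computes $I(X;Y|U=u)=h(Y)-h(Y|X)={\mathsf C}(P)$ for $P_U$-a.e.\ $u$; then since a conditional second moment $\E[X^2|U=u]<P$ would force $I(X;Y|U=u)<{\mathsf C}(P)$, and $\int\E[X^2|U=u]\,dF(u)=P$, one concludes $\E[X^2|U=u]=P$ a.e.; finally, invoking that ${\mathrm N}(0,P)$ is the \emph{unique} input distribution attaining ${\mathsf C}(P)$ under that second-moment constraint, $P_{X|U=u}={\mathrm N}(0,P)$ a.e. Your argument sidesteps all of the capacity machinery: from $Y\perp U$ you read off $P_{Y|U=u}=P_Y={\mathrm N}(0,P+1)$ directly, write $P_{Y|U=u}=P_{X|U=u}\ast\phi$ from the Markov structure, and then invert the convolution via characteristic functions (using that $\widehat\phi$ never vanishes) to get $P_{X|U=u}={\mathrm N}(0,P)$ a.e. Both are valid; yours is more elementary and also isolates what is really needed — a nowhere-vanishing characteristic function for the noise and identification of the output marginal — so it would transfer verbatim to non-Gaussian additive noise of that type, whereas the paper's version leans on the Gaussian extremal property. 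One small thing worth keeping explicit in your write-up: the deconvolution step recovers $P_{X|U=u}$ as a probability measure only because the ratio of characteristic functions $e^{-(P+1)t^2/2}/e^{-t^2/2}=e^{-Pt^2/2}$ is itself a valid characteristic function and uniqueness applies; you already implicitly handle this, but it is the crux of ``injectivity of Gaussian convolution.''
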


\begin{IEEEproof}
For every $u \in \Uc$, we have
\begin{align*}
I(X;Y|U=u) &= h(Y|U=u) - h(Y|X,U=u)\\
&\stackrel{(a)}{=} h(Y) - h(Y|X) \\
&= \textsf{C}(P),
\end{align*}
where $(a)$ follows since $Y$ is independent of $U$ and $U \to X \to Y$ form a Markov chain. Suppose for some $u$, $\E(X^2|U=u)<P$, i.e., the effective
channel SNR is strictly less than $P$. Then $I(X;Y|U=u) < P$. As a result, we must have $\E(X^2|U=u)\ge P$ for all $u \in \Uc$. On the other hand, 
\begin{align*}
P &\le \int \E(X^2|U=u) dF(u)\\
&= \E(X^2)\\
&= P,
\end{align*}
which forces $\E(X^2|U=u)=P$ for almost all $u$. Since the Gaussian input $\N(0,P)$ is the unique distribution that attains the rate $\textsf{C}(P)$ in the Gaussian channel with SNR $P$, the distribution $F(x|u)$ must be $\N(0,P)$ for almost all $u$. Therefore $I(U;X) = 0$. 
\end{IEEEproof}

\smallskip

We are ready to establish the suboptimality of rate-splitting schemes.

\smallskip

By Lemma~\ref{lem:d1}, it suffices to show the insufficiency of any $(p,s,s,d_1^*,d_2^*)$ rate-splitting scheme with decoding orders given in~\eqref{eq:d1d2}. The achievable rate region of this scheme is characterized by
\begin{align*}
 R_1 &\le \sum_{i=1}^s \min\{I(X_{\pi(i)};Y_1|X_{\pi(1)},\ldots,X_{\pi(i-1)},W_{\sigma(1)},\ldots,W_{\sigma(i-1)}),\, I(X_{\pi(i)};Y_2|X^{\pi(i)-1})\} :=I_1\\
 R_2 &\le \sum_{i=1}^{s-1} \min\{I(W_{\sigma(i)};Y_1|X_{\pi(1)},\ldots,X_{\pi(i)},W_{\sigma(1)},\ldots,W_{\sigma(i-1)}),\, I(W_{\sigma(i)};Y_2|X,W^{\sigma(i)-1})\} \\
 &\hspace{2em} + I(W_{\sigma(s)};Y_2|X,W_{\sigma(1)},\ldots,W_{\sigma(s-1)}) :=I_2
\end{align*}
Assume that the corner point of the capacity region is achieved by this scheme, i.e.,
\begin{align}
I_1 &= \textsf{C}(I/(1+S)),  \label{eq:R1corner}\\
I_2 &= \textsf{C}(S).\label{eq:R2corner}
\end{align}%
Then, by Lemma~\ref{lem:gaussian}, we must have $X \sim \N(0,P)$ and $W \sim \N(0,P)$. Consider
\begin{align}
 I_1 &\le I(X_{\pi(1)};Y_1) + \sum_{i \in [s]\setminus \pi(1)} I(X_i;Y_2|X^{i-1})\nonumber\\
 &\le I(X_{\pi(1)};Y_1) + I(X^{\pi(1)-1};Y_2) + I(X_{\pi(1)+1}^s;Y_2|X^{\pi(1)})\nonumber\\
 &\stackrel{(a)}{\le} I(X_{\pi(1)};Y_1) + I(X^{\pi(1)-1};Y_2|X_{\pi(1)}) + I(X_{\pi(1)+1}^s;Y_2|X^{\pi(1)})\nonumber\\
 &= h(Y_1) - h(Y_1|X_{\pi(1)}) + h(Y_2|X_{\pi(1)}) -h(Y_2|X^{\pi(1)}) + h(Y_2|X^{\pi(1)}) - h(Y_2|X) \nonumber\\
 &= h(Y_1) - h(Y_1|X_{\pi(1)}) + h(Y_2'|X_{\pi(1)}) - h(Y_2'|X) \label{eq:case5entropyI1}
\end{align}
where $Y_2' = Y_2/g = X + (W+N_2)/g$ and $(a)$ follows since $X_{\pi(1)}$ is independent of $X^{\pi(1)-1}$. Since 
\begin{align*}
\frac{1}{2}\log(2\pi e(S+1)/g^2) &= h(Y_2'|X)\\
&\le h(Y_2'|X_{\pi(1)})\\
&\le h(Y_2') 
= \frac{1}{2} \log(2\pi e(I + S + 1)/g^2),
\end{align*} 
there exists an $\a \in [0,1]$ such that $h(Y_2'|X_{\pi(1)}) = (1/2) \log(2\pi e$ $(\a I + S + 1)/g^2)$. Moreover, since $W \sim \N(0,P)$ and $I < S(1+S)$, the channel $X \to Y_1$ is a degraded version of the channel $X \to Y_2'$, i.e., $Y_1 =Y_2' + N'$, where $N' \sim \N(0,I+1-(S+1)/g^2)$ is independent of $X$ and $W$. By the entropy power inequality, 
\begin{align*}
2^{2h(Y_1|X_{\pi(1)})} &\ge 2^{2h(Y_2'|X_{\pi(1)})} + 2^{2h(N'|X_{\pi(1)})}\\
&= 2\pi e (\a S + I + 1).
\end{align*}
Therefore, it follows from~\eqref{eq:case5entropyI1} that 
\begin{align*}
I_1 &\le h(Y_1) - h(Y_1|X_{\pi(1)}) + h(Y_2'|X_{\pi(1)}) - h(Y_2'|X)\\
&\le \frac{1}{2}\log\left(\frac{(I+S+1)(\a I + S + 1)}{(\a S + I +
1)(1+S)}\right)\\
&\le \textsf{C}(I/(1+S)),
\end{align*}%
where the last step follows since $S < I$. To match the standing assumption in~\eqref{eq:R1corner}, we must have equality in $(a)$, which forces $\a = 1$ and $h(Y_2'|X_{\pi(1)}) = (1/2) \log(2\pi e( I + S + 1)/g^2) = h(Y_2')$,  i.e., $I(X_{\pi(1)};Y_2') = 0$. Note that $X, W \sim \N(0,P)$ and the channel from $X$ to $Y_2'$ is a Gaussian channel. Applying Lemma~\ref{lem:independence} yields 
\begin{equation}
\label{eq:pi1}
I(X_{\pi(1)};X) = 0.
\end{equation}

Now, $I_2$ can be simplified to 
\begin{align}
 I_2 &\le I(W_{\sigma(1)};Y_1|X_{\pi(1)}) + \sum_{i\in[s]\setminus \sigma(1)} I(W_i;Y_2|X,W^{i-1}) \nonumber\\
 &\stackrel{(b)}{=} I(W_{\sigma(1)};Y_1) + I(W^{\sigma(1)-1};Y_2|X) + I(W_{\sigma(1)+1}^s;Y_2|X,W^{\sigma(1)}) \nonumber\\
 &\stackrel{(c)}{\le} I(W_{\sigma(1)};Y_1) + I(W^{\sigma(1)-1};Y_2|X,W_{\sigma(1)}) + I(W_{\sigma(1)+1}^s;Y_2|X,W^{\sigma(1)}) \nonumber\\
 &= h(Y_1) -h(Y_1|W_{\sigma(1)}) + h(Y_2|X,W_{\sigma(1)}) -h(Y_2|X,W^{\sigma(1)}) + h(Y_2|X,W^{\sigma(1)}) -h(Y_2|X,W) \nonumber\\
 &= h(\Yt_1) -h(\Yt_1|W_{\sigma(1)}) + h(\Yt_2|W_{\sigma(1)}) -h(\Yt_2|W) \label{eq:case5entropyI2}
\end{align}
where $\Yt_1 = Y_1/g = W+(X+N_1)/g$ and $\Yt_2 = W+N_2$. Here $(b)$ follows since $I(X_{\pi(1)};Y_1|W_{\sigma(1)}) \le I(X_{\pi(1)};Y_1|W) = I(X_{\pi(1)};X+N_1) \le I(X_{\pi(1)};X) = 0$ and $I(X_{\pi(1)};Y_1) \le I(X_{\pi(1)};X) = 0$, which implies 
\begin{align*}
I(W_{\sigma(1)};Y_1|X_{\pi(1)}) &= I(W_{\sigma(1)};Y_1|X_{\pi(1)}) + I(X_{\pi(1)};Y_1)\\
&= I(W_{\sigma(1)};Y_1) + I(X_{\pi(1)};Y_1|W_{\sigma(1)})\\
&= I(W_{\sigma(1)};Y_1), 
\end{align*}
and $(c)$ follows since $W_{\sigma(1)}$ and $(W^{\sigma(1)-1},X)$ are independent. Since
\begin{align*}
\frac{1}{2}\log(2\pi e) &= h(\Yt_2|W)\\
 &\le h(\Yt_2|W_{\sigma(1)}) \\
 &\le h(\Yt_2)\\
 & = \frac{1}{2}\log(2\pi e(1+S)),
\end{align*}
there exists a $\b \in [0,1]$ such that $h(\Yt_2|W_{\sigma(1)}) = (1/2)\log(2\pi e$ $(1+\b S))$. Moreover, since $X \sim \N(0,P)$ and $I < S(1+S)$, $\Yt_1$ is a degraded version of $\Yt_2$, i.e., $\Yt_1 = \Yt_2 + \Nt$, where $\Nt \sim \N(0, (1+S)/g^2-1)$ is independent of $X$ and $W$. Applying the entropy power
inequality, we have
\begin{align*}
2^{2h(\Yt_1|W_{\sigma(1)})} &\ge 2^{2h(\Yt_2|W_{\sigma(1)})} + 2^{2h(\Nt|W_{\sigma(1)})}\\
&= 2\pi e (\b S + (1+S)/g^2).
\end{align*}
Therefore, it follows from~\eqref{eq:case5entropyI2} that
\begin{align*}
I_2 &\le h(\Yt_1)-h(\Yt_1|W_{\sigma(1)}) + h(\Yt_2|W_{\sigma(1)}) - h(Y_2|X,W)\\
&\le \frac{1}{2} \log \left(\frac{(I+S+1)(1+\b S)}{g^2(\b S +
(1+S)/g^2)}\right)
\le \textsf{C}(S),
\end{align*}
where the last step follows from the channel condition $I <(1+S)S$. To match the standing assumption in~\eqref{eq:R2corner}, we must have equality above, which forces $\b = 1$ and $h(\Yt_2|W_{\sigma(1)}) = (1/2)\log(2\pi e(1+ S)) = h(\Yt_2)$, i.e., $I(W_{\sigma(1)};\Yt_2) = 0$. Note that $W \sim \N(0,P)$ and the channel from $W$ to $\Yt_2$ is a Gaussian channel. Applying Lemma~\ref{lem:independence} yields 
\begin{equation}
\label{eq:sigma1}
I(W_{\sigma(1)};W) = 0.
\end{equation}

To continue analyzing the dependency between $(X_{\pi(1)},X_{\pi(2)})$ and $X$, we note that condition~\eqref{eq:sigma1} implies that
\begin{align}
 I(X_{\pi(2)};Y_1|X_{\pi(1)},W_{\sigma(1)}) &= I(X_{\pi(2)};Y_1|X_{\pi(1)}) +I(W_{\sigma(1)};Y_1|X_{\pi(1)},X_{\pi(2)}) - I(W_{\sigma(1)};Y_1|X_{\pi(1)}) \nonumber\\
 &\stackrel{(d)}{=} I(X_{\pi(2)};Y_1|X_{\pi(1)}), \label{eq:I3-1}
\end{align}
where $(d)$ follows since $I(W_{\sigma(1)};Y_1|X_{\pi(1)}) \le I(W_{\sigma(1)};Y_1|X_{\pi(1)},X_{\pi(2)}) \le I(W_{\sigma(1)};Y_1|X) \le I(W_{\sigma(1)};W) = 0$. Moreover, condition~\eqref{eq:pi1} implies that
\begin{align}
 I(X_{\pi(1)};Y_2|X_{\pi(2)}) &\le I(X_{\pi(1)};Y_2,X_{\pi(2)}) \nonumber\\
 &\le I(X_{\pi(1)};Y_2,X) \nonumber\\
 &= I(X_{\pi(1)};X,gX+W+N_2) 
=0 \nonumber
\end{align}
and thus
\begin{equation}
\label{eq:I3-2}
 h(Y_2|X_{\pi(2)}) = h(Y_2|X_{\pi(2)},X_{\pi(1)}).
\end{equation}
With~\eqref{eq:I3-1} and~\eqref{eq:I3-2}, we can bound $I_1$ alternatively as
\begin{align}
 I_1 &\le I(X_{\pi(2)};Y_1|X_{\pi(1)},W_{\sigma(1)}) + \sum_{i \in [s]\setminus \pi(2)} I(X_i;Y_2|X^{i-1})\nonumber\\
 &= I(X_{\pi(2)};Y_1|X_{\pi(1)}) + I(X^{\pi(2)-1};Y_2) + I(X_{\pi(2)+1}^s;Y_2|X^{\pi(2)})\nonumber\\
 &\le I(X_{\pi(2)};Y_1|X_{\pi(1)}) + I(X^{\pi(2)-1};Y_2|X_{\pi(2)}) + I(X_{\pi(2)+1}^s;Y_2|X^{\pi(2)})\nonumber\\
 &= h(Y_1|X_{\pi(1)}) - h(Y_1|X_{\pi(2)},X_{\pi(1)}) + h(Y_2|X_{\pi(2)}) - h(Y_2|X) \nonumber\\
 &= h(Y_1) - h(Y_1|X_{\pi(2)},X_{\pi(1)}) + h(Y_2|X_{\pi(2)},X_{\pi(1)}) - h(Y_2|X) \nonumber\\
 &= h(Y_1) - h(Y_1|X_{\pi(2)},X_{\pi(1)}) + h(Y_2'|X_{\pi(2)},X_{\pi(1)}) - h(Y_2'|X).\label{eq:case5entropyI3}
\end{align}
Note that the expression in~\eqref{eq:case5entropyI3} is in the same form of~\eqref{eq:case5entropyI1}, except that $X_{\pi(1)}$ is replaced by the pair $(X_{\pi(1)},X_{\pi(2)})$. From this point on, following the identical argument as before with variable substitution $X_{\pi(1)} \leftrightarrow (X_{\pi(1)},X_{\pi(2)})$, we conclude that
\[
I(X_{\pi(1)},X_{\pi(2)};X) = 0. 
\]
Now, repeating this procedure, we can similarly show that 
\begin{equation}
\label{eq:s-indep}
\begin{split}
 I(X_{\pi(1)},\ldots,X_{\pi(s-1)};X) = 0,\\
 I(W_{\sigma(1)},\ldots,W_{\sigma(s-1)};W) = 0.
\end{split}
\end{equation}

However, condition~\eqref{eq:s-indep} implies that for $i \in [s-1]$
\begin{align*}
 I(X_{\pi(i)};Y_1|X_{\pi(1)},\ldots,X_{\pi(i-1)},W_{\sigma(1)},\ldots,W_{\sigma(i-1)})
 &\le I(X_{\pi(i)};X,W,Y_1) 
 = I(X_{\pi(i)};X)
 = 0
\end{align*}
and that
\begin{align*}
 &I(X_{\pi(s)};Y_1|X_{\pi(1)},\ldots,X_{\pi(s-1)},W_{\sigma(1)},\ldots,W_{\sigma(s-1)}) - I(X;Y_1)\\
 &= I(W_{\sigma(1)},\ldots,W_{\sigma(s-1)};Y_1|X) - I(X_{\pi(1)},\ldots,X_{\pi(s-1)};Y_1) - I(W_{\sigma(1)},\ldots,W_{\sigma(s-1)};Y_1|X_{\pi(1)},\ldots,X_{\pi(s-1)})\\
 &\stackrel{(e)}{=}0,
\end{align*}
where $(e)$ follows since $I(W_{\sigma(1)},\ldots,W_{\sigma(s-1)};Y_1|X_{\pi(1)},\ldots,X_{\pi(s-1)}) \le I(W_{\sigma(1)},\ldots,W_{\sigma(s-1)};Y_1|X)$\\ $\le I(W_{\sigma(1)},\ldots,W_{\sigma(s-1)};W) = 0$ and $I(X_{\pi(1)},\ldots,X_{\pi(s-1)};Y_1) \le I(X_{\pi(1)},\ldots,X_{\pi(s-1)};X) = 0$. Therefore,
\begin{align*}
 I_1 &= \sum_{i=1}^s \min\{I(X_{\pi(i)};Y_1|X_{\pi(1)},\ldots,X_{\pi(i-1)},W_{\sigma(1)},\ldots,W_{\sigma(i-1)}),\, I(X_{\pi(i)};Y_2|X^{\pi(i)-1})\}\\
 &= \min\{I(X_{\pi(s)};Y_1|X_{\pi(1)},\ldots,X_{\pi(s-1)},W_{\sigma(1)},\ldots,W_{\sigma(s-1)}),\, I(X_{\pi(s)};Y_2|X^{\pi(s)-1})\}\\
 &\le I(X_{\pi(s)};Y_1|X_{\pi(1)},\ldots,X_{\pi(s-1)},W_{\sigma(1)},\ldots,W_{\sigma(s-1)})\\
 &= I(X;Y_1)\\
 &= \C(S/(1+I))\\
 &< \C(I/(S+I)),
\end{align*}
which is a contradiction to the standing assumption in~\eqref{eq:R1corner}. This completes the proof of Proposition~\ref{prop:gaussian}.

\section{The SWSC scheme in Table~\ref{tab:sw-rs}}
\label{appx:2-1swsc}

{\it Codebook generation.} Fix a pmf $p'(x_1)p'(x_2)p'(w)$ and a function $x(x_1,x_2)$. Randomly and independently generate a codebook for each block. For notational convention, we assume $m_{1}(0) = m_{1}(b) = 1$. For $\j \in [b]$, randomly and 
independently generate $2^{nR_1}$ sequences $x_1^n(m_{1}(\j -1)), m_{1}(\j -1) \in [2^{nR_1}]$, each according to $\prod_{i=1}^n p'_{X_1}(x_{1i})$. For $\j \in [b]$, randomly and independently generate $2^{nR_1}$ sequences $x_2^n(m_1(\j)), m_1(\j) \in [2^{nR_1}]$, each according to $\prod_{i=1}^n p'_{X_2}(x_{2i})$. For $\j \in [b]$, randomly and independently generate $2^{nR_2}$ sequences $w^n(m_{2}(\j)), m_{2}(\j) \in 
[2^{nR_2}]$, each according to $\prod_{i=1}^n p'_{W}(w_{i})$. This defines the codebook
\begin{align*}
 \Cc_\j = \bigl\{& x_1^n(m_{1}(\j -1)), x_2^n(m_{1}(\j)), w^n(m_{2}(\j))\suchthat  m_{1}(\j -1), 
m_{1}(\j) \in [2^{nR_1}], m_{2}(\j) \in [2^{nR_2}]\bigr\}, \quad \j \in [b].
\end{align*}

\smallskip

{\it Encoding.} In block $j \in [b]$, sender~1 transmits $x_{i}(x_{1i}(m_1(\j -1)),x_{2i}(m_1(\j)))$ at time $i \in [n]$ and sender~2 transmits
$w^n(m_{2}(\j))$. Table~\ref{tab:sw-rs} reveals the scheduling of the messages.

\smallskip

{\it Decoding.}  
Let the received sequences in block $\j$ be $y_1^n(\j)$ and $y_2^n(\j)$, $\j \in [b]$. For receiver~1, at the end of block~1, it finds the unique message $\mh_{2}(1)$ such that 
$$(w^n(\mh_{2}(1)),y_1^n(1),x_1^n(m_1(0))) \in \aep.$$ 
At the end of block $\j$, $2 \le \j\le b$, it finds the unique message $\mh_{1}(\j-1)$ 
such that 
$$(x_1^n(\mh_{1}(\j-2)), x_2^n(\mh_{1}(\j-1)), w^n(\mh_{2}(\j-1)), y_1^n(\j-1)) \in \aep$$
and 
$$(x_1^n(\mh_{1}(\j-1)), y_1^n(\j)) \in \aep$$
simultaneously. Then it finds the unique $\mh_{2}(\j)$
 such that
$$(w^n(\mh_{2}(\j)), y_1^n(\j), x_1^n(\mh_{1}(\j-1))) \in
\aep.$$ 
If any of the typicality checks fails, it declares an error.

We analyze the probability of decoding error averaged over codebooks. Assume  without loss of generality that $M_{1}(\j) = M_{2}(\j) = 1$. We divide the error events as follows
\begin{align*}
 &\Ec_{11}(\j-2) = \{\Mh_{1}(\j-2) \neq 1\},\\
 &\Ec_{12}(\j-1) = \{\Mh_{2}(\j-1) \neq 1\},\\
 &\Ec_{13}(\j-1) = \{(X_1^n(\Mh_{1}(\j-2)), X^n(1),W^n(\Mh_{2}(\j-1)), Y_1^n(\j-1))\not\in \aep \text{ or } (X_1^n(1), Y_1^n(\j)) \not\in \aep\},\\
 &\Ec_{14}(\j-1) = \{(X_1^n(\Mh_{1}(\j-2)), X^n(m_{1}(\j-1)), W^n(\Mh_{2}(\j-1)),Y_1^n(\j-1))\in \aep  \\ 
 &\hspace{4em} \text{ and } (X_1^n(m_{1}(\j-1)), Y_1^n(\j))\in \aep \text{ for some } m_{1}(\j-1) \neq 1\},\\
 &\Ec_{15}(\j) = \{(W^n(1), Y_1^n(\j), X_1^n(\Mh_{1}(\j-1))) \not\in \aep\},\\
 &\Ec_{16}(\j) = \{(W^n(m_{2}(\j)), Y_1^n(\j), X_1^n(\Mh_{1}(\j-1))) \in \aep \text{ for some } m_{2}(\j) \neq 1\}.
\end{align*}
We analyze by induction. By assumption $\Ec_{11}(0) =\emptyset$. Thus in block~1, the probability of error is upper bounded as 
\begin{align*}
 \P\{\Mh_{2}(1)\neq 1\} & = \P(\Ec_{12}(1)) \le \P(\Ec_{15}(1)) + \P(\Ec_{16}(1)).
\end{align*}
Now by the law of large numbers, $\P(\Ec_{15}(1)) \to 0$ as $n \to \infty$. By the packing lemma, $\P(\Ec_{16}(1)) \to 0$ as $n \to \infty$ if $R_2 < I(W;Y_1|X_1) - \d(\e)$. Now assume that the probability of error $\P(\Ec_{11}(\j-2)\cup \Ec_{12}(\j-1))$ in block~$\j-1$ tends to zero as $n \to
\infty$. In block $\j$, the probability of error is upper bounded as
\begin{align*}
 &\P\{(\Mh_{1}(\j-1),\Mh_{2}(\j)) \neq (1,1)\}\\
 &\le \P(\Ec_{11}(\j-2) \cup \Ec_{12}(\j-1) \cup \Ec_{11}(\j-1) \cup \Ec_{12}(\j))\\
 &\le \P(\Ec_{11}(\j-2) \cup \Ec_{12}(\j-1)) + \P(\Ec_{11}(\j-1)\cap \Ec_{11}^c(\j-2)
\cap \Ec_{12}^c(\j-1)) + \P(\Ec_{12}(\j) \cap \Ec_{11}^c(\j-1))\\
 &\le \P(\Ec_{11}(\j-2) \cup \Ec_{12}(\j-1)) + \P(\Ec_{13}(\j-1)\cap \Ec_{11}^c(\j-2)
\cap \Ec_{12}^c(\j-1)) + \P(\Ec_{14}(\j-1)\cap \Ec_{11}^c(\j-2) \cap \Ec_{12}^c(\j-1))\\
 &\hspace{2em} + \P(\Ec_{15}(\j) \cap \Ec_{11}^c(\j-1)) + \P(\Ec_{16}(\j) \cap
\Ec_{11}^c(\j-1)).
\end{align*}
By the induction assumption, the first term tends to zero as $n \to \infty$. By the independence of the codebooks, the law of large numbers, and the packing lemma, the second, fourth, and fifth terms tend to zero as $n \to \infty$ if $R_2 < I(W;Y_1|X_1) - \d(\e)$. The third term $\P(\Ec_{14}(\j-1)\cap \Ec_{11}^c(\j-2)\cap \Ec_{12}^c(\j-1))$ requires a special care. We have
\begin{align*}
 &\P(\Ec_{14}(\j-1)\cap \Ec_{11}^c(\j-2) \cap \Ec_{12}^c(\j-1))\\
 & = \P\{(X_1^n(1), X^n(m_{1}(\j-1)), W^n(1), Y_1^n(\j-1))\in \aep \text{ and }\\
 &\hspace{2em}(X_1^n(m_{1}(\j-1)), Y_1^n(\j)) \in \aep \text{ for some } m_{1}(\j-1) \neq 1\}\\
 & = \sum_{m_{1}(\j-1) \neq 1} \P\{(X_1^n(1), X^n(m_{1}(\j-1)), W^n(1), Y_1^n(\j-1)\in \aep \text{ and } (X_1^n(m_{1}(\j-1)), Y_1^n(\j)) \in \aep\}\\
 & \stackrel{(a)}{=} \sum_{m_{1}(\j-1) \neq 1} \P\{(X_1^n(1), X^n(m_{1}(\j-1)), W^n(1), Y_1^n(\j-1)\in \aep\}  \cdot \P\{(X_1^n(m_{1}(\j-1)), Y_1^n(\j)) \in \aep\}\\
 & \stackrel{(b)}{\le} 2^{nR_1} 2^{-n(I(X;Y_1|W,X_1)-\d(\e))} 2^{-n(I(X_1;Y_1)-\d(\e))},
\end{align*}
which tends to zero if $R_1 < I(X_1;Y_1) + I(X;Y_1|W,X_1) -2\d(\e)$. Here $(a)$ follows since, by the independence of the codebooks, the events
\[
 \{(X_1^n(1), X^n(m_{1}(\j-1)), W^n(1), Y_1^n(\j-1))\in \aep\}
\]
and 
\[
 \{(X_1^n(m_{1}(\j-1)), Y_1^n(\j)) \in \aep\}
\]
are independent for each $m_{1}(\j-1) \neq 1$, and $(b)$ follows by the independence of the codebooks and the joint typicality lemma.


For receiver~2, at the end of block $j$, $2 \le \j \le b$, it finds the unique $\mh_{1}(\j-1)$ such that 
$$(x_1^n(\mh_{1}(\j-2)), x_2^n(\mh_{1}(\j-1)),y_2^n(\j-1)) \in \aep$$
and 
$$(x_1^n(\mh_{1}(\j-1)), y_2^n(\j)) \in \aep$$
simultaneously.  Then it finds the unique $\mh_{2}(\j-1)$ such that 
$$(w^n(\mh_{2}(\j-1)), y_2^n(\j-1), x_1^n(\mh_{1}(\j-2)), x_2^n(\mh_{1}(\j-1))) \in \aep.$$ 
In the end, receiver~2 finds the unique $\mh_{2}(b)$ such that 
$$(w^n(\mh_{2}(b)), y_2^n(b), x_1^n(\mh_{1}(b-1)), x_2^n(m_1(b))) \in \aep.$$  
If any of the typicality checks fails, it declares an error. 
With a similar analysis as above, the decoding is successful if
\begin{align*}
 R_1 &< I(X_1;Y_2) + I(X_2;Y_2|X_1)-2\d(\e) = I(X;Y_2) - 2\d(\e),\\
 R_2 &< I(W;Y_2|X) -\d(\e).
\end{align*}

\section{Proof of Theorem~\ref{thm:hk}}
\label{appx:hk}

We first extend the layer-splitting lemma (Lemma~\ref{lem:layer-splitting}) to the three-user case and show that by splitting two inputs into two layers each and keeping one input unsplit, any rate triple on the dominant face of $\Rr_{\mathrm{MAC}}(A,B,C;Y)$ is achievable by successive cancellation decoding. 

The dominant face of $\Rr_{\mathrm{MAC}}(A,B,C;Y)$ is illustrated in Figure~\ref{fig:3mac}. We label the six corner points by $\Iv_{ABC},\Iv_{BAC},\Iv_{BCA},$ $\Iv_{CBA},\Iv_{CAB},\Iv_{ACB}$, corresponding to the following six rate vectors
\begin{align*}
 \Iv_{ABC} &= (I(A;Y),I(B;Y|A),I(C;Y|A,B)),\\
 \Iv_{BAC} &= (I(A;Y|B),I(B;Y),I(C;Y|B,A)),\\
 \Iv_{BCA} &= (I(A;Y|B,C),I(B;Y),I(C;Y|B)),\\
 \Iv_{CBA} &= (I(A;Y|C,B),I(B;Y|C),I(C;Y)),\\
 \Iv_{CAB} &= (I(A;Y|C),I(B;Y|C,A),I(C;Y)),\\
 \Iv_{ACB} &= (I(A;Y),I(B;Y|A,C),I(C;Y|A)).
\end{align*}
We partition this hexagon region into three subregions: two triangles $\triangle (\Iv_{ACB},\Iv_{ABC},\Iv_{BAC})$ and $\triangle (\Iv_{BCA}, \Iv_{CBA}, \Iv_{CAB})$, and a trapezoid $\Box (\Iv_{ACB}, \Iv_{BAC}, \Iv_{BCA}, \Iv_{CAB})$. In order to achieve each region by successive cancellation decoding, we split $A$ and $B$ into $(A_1,A_2)$ and $(B_1,B_2)$ respectively. In other words, we consider $p'$ of the form $p'(a_1)p'(a_2)p'(b_1)p'(b_2)$ $p'(c)$ and functions $a(a_1,a_2)$ and $b(b_1,b_2)$ such that $p'\simeq p(a)p(b)p(c)$. Let $\Rr(p',\l)$, $\l = 1,2,3$, be the set of achievable rate triples $(R_1,R_2,R_3)$ associated with the following layer orders (defined in a similar manner as in Section~\ref{sec:layering-order})
\begin{subequations}
\label{eqn:3mac-orders}
\begin{align}
 1&\suchthat A_1 \to B_1 \to A_2 \to C \to B_2,\\
 2&\suchthat B_1 \to C \to A_1 \to B_2 \to A_2,\\
 3&\suchthat B_1 \to A_1 \to C \to A_2 \to B_2.
\end{align}
\end{subequations}
For example, $\Rr(p',1)$ is the set of rate triples $(r_1,r_2,r_3)$ such that
\begin{equation}
\label{eqn:3mac-rs}
\begin{split}
 r_1 &\le I(A_1;Y) + I(A_2;Y|A_1,B_1) \\
 r_2 &\le I(B_1;Y|A_1) + I(B_2;Y|A_1,B_1,A_2,C)\\
 r_3 &\le I(C;Y|A_1,B_1,A_2).
\end{split}
\end{equation}

We need to show for every point in $\Rr_{\mathrm{MAC}}(A,B,C;Y)$, there exists some choice of $p'$ that achieves it. Similar to Lemma~\ref{lem:layer-splitting}, we choose the conditional pmfs $p(a_1|a)$ and $p(b_1|b)$ as erasure channels with erasure probabilities $\a$ and $\b$ respectively.
Then the rate expressions~\eqref{eqn:3mac-rs} can be further simplified as
\begin{align*}
 r_1 &= (1-\a)(1-\b) I(A;Y) + \a(1-\b) I(A;Y|B) + \b I(A;Y),\\
 r_2 &= (1-\a)(1-\b) I(B;Y|A) + \a(1-\b) I(B;Y) + \b I(B;Y|A,C),\\
 r_3 &= (1-\a)(1-\b) I(C;Y|A,B) + \a(1-\b) I(C;Y|B,A) + \b I(C;Y|A).
\end{align*}
In other words, letting $\rv :=(r_1,r_2,r_3)$, the achievable rate region $\Rr(\a,\b,\l)$ for $\l = 1$ is the set of rate vectors $\rv$ such that
\[
\rv \le (1-\a)(1-\b)\Iv_{ABC} + \a(1-\b)\Iv_{BAC} + \b \Iv_{ACB}.
\]
This region covers every point in the triangle $\triangle (\Iv_{ACB},\Iv_{ABC},\Iv_{BAC})$ by varying $\a,\b \in [0,1]$. Similarly, the achievable rate region $\Rr(\a,\b,\l)$ for $\l = 2$ is the set of rate vectors $\rv$ such that
\[
\rv \le (1-\a)\b \Iv_{CAB} + \a\b \Iv_{CBA} + (1-\b) \Iv_{BCA}.
\]
This region covers every point in the triangle $\triangle (\Iv_{BCA}, \Iv_{CBA}, \Iv_{CAB})$ by varying $\a,\b \in [0,1]$.
For layer order $\l = 3$, the achievable rate region $\Rr(\a,\b,\l)$ is given by
\[
 \rv \le (1-\a)(1-\b) \Iv_{BAC} + (1-\a)\b \Iv_{ACB} + \a(1-\b) \Iv_{BCA} + \a\b \Iv_{CAB}.
\]
Note that for each fixed $\b$, the trajectory of the achievable rate points when varying $\a$ from 0 to 1 is a line segment that is parallel to the two sides $(\Iv_{ACB},\Iv_{CAB})$ and $(\Iv_{BAC},\Iv_{BCA})$. By further varying $\b$ from 0 to 1, this layer order achieves every point in the trapezoid $\Box (\Iv_{ACB}, \Iv_{BAC}, \Iv_{BCA}, \Iv_{CAB})$.

\begin{figure}[t]
\centering
\footnotesize
 \psfrag{r1}[bc]{$R_1$}
 \psfrag{r2}[cl]{$R_2$}
 \psfrag{r3}[cl]{$R_3$}
 \psfrag{abc}{$ABC$}
 \psfrag{acb}{$ACB$}
 \psfrag{bac}{$BAC$}
 \psfrag{bca}{$BCA$}
 \psfrag{cab}{$CAB$}
 \psfrag{cba}{$CBA$}
 \includegraphics[scale = 0.7]{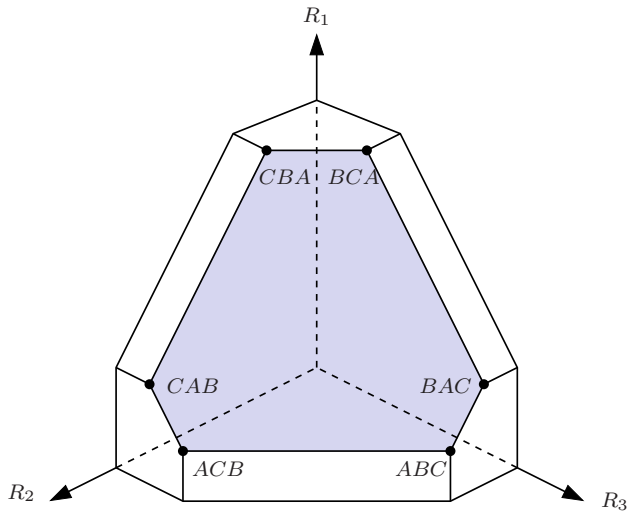}
 \caption{Achievable rate region of the three-user MAC $p(y|a,b,c)$.}
 \label{fig:3mac}
\end{figure}

\begin{lemma}[Layer splitting for a 3-user MAC~\cite{Grant--Rimoldi--Urbanke--Whiting2001}]
 \label{lem:layer-splitting3}
 For a 3-user MAC $p(y|a,b,c)$, the achievable rate region for input pmf $p=p(a)p(b)p(c)$ can be equivalently expressed as
 \[
  \Rr_{\mathrm{MAC}}(A,B,C;Y) = \bigcup_{p'\simeq p} \bigcup_{\l = 1}^3 \Rr(p',\l).
 \]
where the layer orders $\l = 1,2,3$ are given in~\eqref{eqn:3mac-orders}.
Moreover, let $p(a_1|a)$ and $p(b_1|b)$ be two erasure channels with erasure probabilities $\a$ and $\b$ respectively. Then,
\begin{equation}
\label{eqn:erasure-split}
 \Rr_{\mathrm{MAC}}(A,B,C;Y) = \bigcup_{\a \in [0,1],\b \in [0,1]} \bigcup_{\l=1}^3 \Rr(\a,\b,\l).
\end{equation}
\end{lemma}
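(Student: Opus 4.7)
The plan is to prove the two displayed identities together, noting that the second immediately implies the first (the erasure construction is one particular choice of $p'\simeq p$), so it suffices to establish:
(i) the \emph{outer bound} $\bigcup_{p'\simeq p}\bigcup_{\l=1}^3 \Rr(p',\l)\subseteq \Rr_{\mathrm{MAC}}(A,B,C;Y)$, and
(ii) the \emph{inner bound} $\Rr_{\mathrm{MAC}}(A,B,C;Y)\subseteq \bigcup_{\a,\b\in[0,1]}\bigcup_{\l=1}^3 \Rr(\a,\b,\l)$.

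For (i), I would take any $p'\simeq p$ with functions $a(a_1,a_2)$ and $b(b_1,b_2)$ and, for each layer order, bound every successive-cancellation rate sum by the corresponding MAC mutual information using the chain rule and the Markov relations $A_1\to A\to(B,C,Y)$ and $B_1\to B\to(A,C,Y)$ induced by $p'\simeq p$. For instance, under layer order $\l=1$ one has $r_1\le I(A_1;Y)+I(A_2;Y|A_1,B_1)\le I(A_1;Y|B_1)+I(A_2;Y|A_1,B_1)=I(A;Y|B_1)\le I(A;Y|B)$, and similarly $r_3\le I(C;Y|A,B)$ and $r_1+r_2+r_3\le I(A,B,C;Y)$ by telescoping the chain. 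Analogous bounds for the two-user sums and for $\l=2,3$ reduce each case to one of the seven MAC inequalities defining $\Rr_{\mathrm{MAC}}(A,B,C;Y)$, establishing (i).

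For (ii), which is the main obstacle, I would exploit the polymatroidal structure: both sides are contained in the halfspace $r_1+r_2+r_3\le I(A,B,C;Y)$, and this bound is tight in the right-hand side (one can check that each of the three rate regions $\Rr(\a,\b,\l)$ attains the sum-rate $I(A,B,C;Y)$ for every $\a,\b$, since the relevant telescoping chain is complete). Hence it suffices to show that every point of the dominant face (the hexagon in Fig.~\ref{fig:3mac} with corners $\Iv_{ABC},\Iv_{BAC},\Iv_{BCA},\Iv_{CBA},\Iv_{CAB},\Iv_{ACB}$) lies in the union. I would then carry out the explicit computation already outlined in the text: with $p(a_1|a)$ and $p(b_1|b)$ chosen as erasure channels with erasure probabilities $\a,\b$, introduce the indicator variables $E_A=\ind\{A_1=\error\}\sim\Bern(\a)$ and $E_B=\ind\{B_1=\error\}\sim\Bern(\b)$, which are independent of $(A,B,C)$, and condition on $(E_A,E_B)$ inside each mutual information term exactly as in the proof of Lemma~\ref{lem:layer-splitting}. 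Each of the six terms in the bounds for $r_1,r_2,r_3$ under layer order $\l$ reduces to a convex combination (with weights $(1-\a)(1-\b), \a(1-\b), (1-\a)\b, \a\b$) of the corner-point coordinates.

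Plugging in, layer order $\l=1$ yields $\rv\le (1-\a)(1-\b)\Iv_{ABC}+\a(1-\b)\Iv_{BAC}+\b\Iv_{ACB}$, which, as $(\a,\b)$ ranges over $[0,1]^2$, sweeps out the triangle $\triangle(\Iv_{ACB},\Iv_{ABC},\Iv_{BAC})$; layer order $\l=2$ sweeps out $\triangle(\Iv_{BCA},\Iv_{CBA},\Iv_{CAB})$; and layer order $\l=3$ yields a quadrilinear combination of $\Iv_{BAC},\Iv_{ACB},\Iv_{BCA},\Iv_{CAB}$ that, as $\b$ is fixed and $\a$ varies, traces a line segment parallel to the sides $[\Iv_{ACB},\Iv_{CAB}]$ and $[\Iv_{BAC},\Iv_{BCA}]$, filling the trapezoid $\Box(\Iv_{ACB},\Iv_{BAC},\Iv_{BCA},\Iv_{CAB})$ as $\b$ varies. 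Since these three subregions partition the hexagonal dominant face, every extreme point (and by convexity every point) of $\Rr_{\mathrm{MAC}}(A,B,C;Y)$ is covered, completing (ii). The hardest step will be verifying that the non-extreme layer order $\l=3$ indeed produces the full trapezoid rather than some proper subset; I would do this by checking that the two corners $\Iv_{BAC}$ (at $\a=\b=0$), $\Iv_{CAB}$ (at $\a=\b=1$), $\Iv_{ACB}$ (at $\a=0,\b=1$), and $\Iv_{BCA}$ (at $\a=1,\b=0$) are attained and invoking bilinearity in $(\a,\b)$.
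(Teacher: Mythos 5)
Your proof follows the same route as the paper: parametrize by erasure channels with parameters $\alpha,\beta$, write each layer order's rate vector as the resulting affine combination of corner points, and observe that $\l=1,2$ sweep the two triangles $\triangle(\Iv_{ACB},\Iv_{ABC},\Iv_{BAC})$ and $\triangle(\Iv_{BCA},\Iv_{CBA},\Iv_{CAB})$ while $\l=3$ sweeps the trapezoid, together partitioning the dominant face. One small caution on your final paragraph: ``hitting the four corners plus bilinearity'' is not by itself enough for a generic planar quadrilateral, but the parallel-segments observation you already cite (that at fixed $\beta$ the $\alpha$-slice is a full-width segment with constant $r_2$, since its endpoints $(1-\b)\Iv_{BAC}+\b\Iv_{ACB}$ and $(1-\b)\Iv_{BCA}+\b\Iv_{CAB}$ agree in the second coordinate) is exactly the right argument, so the proof is sound as written.
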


In order to express the 4-dimensional auxiliary region~\eqref{eqn:hk}, we split $S$ into three layers $(S_1,S_2,S_3)$ and $T,V$ into two layers each $(T_1,T_2)$ and $(V_1,V_2)$. At receiver~1, we consider layer orders $\l_1$ given by
\begin{align*}
 1\suchthat & S_1 \to T_1 \to S_2 \to S_3 \to U \to T_2,\\
 2\suchthat & T_1 \to U \to S_1 \to T_2 \to S_2 \to S_3,\\
 3\suchthat & T_1 \to S_1 \to U \to S_2 \to S_3 \to T_2,\\
 4\suchthat & S_1 \to S_2 \to T_1 \to S_3 \to U \to T_2,\\
 5\suchthat & T_1 \to U \to S_1 \to S_2 \to T_2 \to S_3,\\
 6\suchthat & T_1 \to S_1 \to S_2 \to U \to S_3 \to T_2.
\end{align*}
Let $p'$ be the pmf $p(s_1)p(s_2)p(s_3)p(t_1)p(t_2)p(u)p(v_1)p(v_2)$ along with $s(s_1,s_2,s_3),t(t_1,t_2),v(v_1,v_2)$. Let $\Rr_1(p',\l_1)$ be the achievable rate region at receiver~1 for the layer order $\l_1 \in [6]$. For example, $\Rr_1(p',1)$ is the set of rate quadruples $(R_{10},R_{11},R_{20},R_{22})$ such that 
\begin{align*}
 R_{10} &\le I(S_1;Y_1) + I(S_2;Y_1|S_1,T_1) + I(S_3;Y_1|S_1,T_1,S_2),\\
 R_{20} &\le I(U;Y_1|S_1,T_1,S_2,S_3),\\
 R_{11} &\le I(T_1;Y_1|S_1) + I(T_2;Y_1|S_1,T_1,S_2,S_3,U).
\end{align*}
At receiver~2, we consider layer orders $\l_2= 7,8,\ldots,12$, which are obtained from $\l_1 = 1,2,\ldots,6$, respectively, by replacing $T_1$ by $V_1$ and $T_2$ by $V_2$. For example, the layer order $\l_2 = 7$ is obtained from the layer order $\l_1 = 1$ as
\[
 7 \suchthat S_1 \to V_1 \to S_2 \to S_3 \to U \to V_2.
\]
Let $\Rr_2(p',\l_2)$ be the achievable rate region at receiver~2 for the layer order $\l_2 = 7,8,\ldots, 12$. For example, $\Rr_2(p',7)$ is the set of rate quadruples $(R_{10},R_{11},R_{20},R_{22})$ such that
\begin{align*}
 R_{10} &\le I(S_1;Y_2) + I(S_2;Y_2|S_1,V_1)+I(S_3;Y_2|S_1,V_1,S_2),\\
 R_{20} &\le I(U;Y_2|S_1,V_1,S_2,S_3),\\
 R_{22} &\le I(V_1;Y_2|S_1) + I(V_2;Y_2|S_1,V_1,S_2,S_3,U).
\end{align*}

\begin{lemma}
 \label{lem:hk-split}
 Let $p$ denote the pmf $p(s)p(t)p(u)p(v)$ along with functions $x(s,t)$ and $w(u,v)$. Then
 \begin{multline}
 \label{eqn:hk-swsc}
  \Rr_{1,\mathrm{MAC}}(p) \cap \Rr_{2,\mathrm{MAC}}(p) \\
  = \cup_{p'\simeq p} \bigl[\bigl(\cup_{\l_1=1}^3 \cup_{\l_2=10}^{12} [\Rr_1(p',\l_1) \cap \Rr_2(p',\l_2)]\bigr) \cup \bigl(\cup_{\l_1=4}^6 \cup_{\l_2=7}^{9} [\Rr_1(p',\l_1) \cap \Rr_2(p',\l_2)]\bigr)\bigr].
 \end{multline}
\end{lemma}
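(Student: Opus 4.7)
The forward inclusion ($\supseteq$) is the easier direction. Fix any $p' \simeq p$ and any $\l_1 \in \{1,\ldots,6\}$, $\l_2 \in \{7,\ldots,12\}$. At receiver~$k$, the rate region $\Rr_k(p',\l_k)$ corresponds to a successive-cancellation decoding order on the superposition of the sub-layers, and is a corner-point/edge of the rate region of the three-user MAC whose inputs are the pre-function versions of $(S,T,U)$ at receiver~$1$ and $(S,U,V)$ at receiver~$2$. Because the image distribution induced by $p'$ (after the functions $s(s_1,s_2,s_3)$, $t(t_1,t_2)$, $v(v_1,v_2)$) agrees with $p(s)p(t)p(u)p(v)$, and simple chain-rule manipulations as in the proof of Lemma~\ref{lem:layer-splitting} (e.g.\ \eqref{eqn:equal-sum}) bound each sum of conditional mutual information by the corresponding unconditional one, every rate quadruple in $\Rr_1(p',\l_1)\cap\Rr_2(p',\l_2)$ lies in $\Rr_{1,\mathrm{MAC}}(p) \cap \Rr_{2,\mathrm{MAC}}(p)$.

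For the converse inclusion ($\subseteq$) the plan is to apply the three-user MAC layer-splitting lemma (Lemma~\ref{lem:layer-splitting3}, in its erasure-parametrized form~\eqref{eqn:erasure-split}) separately at each receiver, and then reconcile the two resulting splits of the shared input $S$. Concretely, given $(R_{10},R_{11},R_{20},R_{22}) \in \Rr_{1,\mathrm{MAC}}(p)\cap\Rr_{2,\mathrm{MAC}}(p)$, I would:
\begin{enumerate}
\item At receiver~1, project to the triple $(R_{10},R_{11},R_{20})\in\Rr_{\mathrm{MAC}}(S,T,U;Y_1)$. By Lemma~\ref{lem:layer-splitting3} there exist erasure parameters $\alpha_1,\beta_1\in[0,1]$ and a layer order $\l^{(1)}\in\{1,2,3\}$ (in the notation of that lemma) so that the triple is achieved. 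Denote by $S^{(1)}$ the erased version of $S$ with erasure probability $\alpha_1$, and by $T_1$ the erased version of $T$ with erasure probability $\beta_1$.
\item Do the analogous decomposition at receiver~2, producing $S^{(2)}$ (erasure probability $\alpha_2$) and $V_1$ (erasure probability $\beta_2$), together with a layer order $\l^{(2)}\in\{1,2,3\}$.
\item By comparing $\alpha_1$ with $\alpha_2$ we may assume without loss of generality that $\alpha_1 \geq \alpha_2$ (the other case is symmetric, and leads to the second union in~\eqref{eqn:hk-swsc} instead of the first). Then $S\to S^{(2)}\to S^{(1)}$ is a Markov chain of erasure channels. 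Exactly as in the proof of Proposition~\ref{pro:31swsc}, two invocations of the functional representation lemma produce independent random variables $S_1,S_2,S_3$ with $S_1\triangleq S^{(1)}$, $S^{(2)} = f(S_1,S_2)$, and $S=s(S_1,S_2,S_3)$ for some function $s$. Letting $T_2$ and $V_2$ be independent copies of $T$ and $V$ respectively (together with appropriate $t(t_1,t_2)$, $v(v_1,v_2)$ as in Lemma~\ref{lem:layer-splitting}), defines a joint pmf $p' \simeq p$.
\end{enumerate}

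With this $p'$ fixed, the remaining step is to match each pair $(\l^{(1)},\l^{(2)})$ to the correct labels $(\l_1,\l_2)$ in the statement. The six layer orders $\l_1\in\{1,\ldots,6\}$ at receiver~1 split into two sub-families depending on whether $T_1$ is inserted after $S_1$ alone (orders~1,2,3) or after the pair $(S_1,S_2)$ (orders~4,5,6); within each sub-family the three orders realize the three layer orders of Lemma~\ref{lem:layer-splitting3} applied to $(A,B,C)=(S,T,U)$. Similarly $\l_2\in\{7,\ldots,12\}$ at receiver~2 splits according to whether $V_1$ is inserted after $S_1$ alone (orders~7,8,9) or after $(S_1,S_2)$ (orders~10,11,12), for $(A,B,C)=(S,V,U)$. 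Under $\alpha_1\geq\alpha_2$ the receiver~1 side only needs $S_1$ to be separated (giving $\l_1\in\{1,2,3\}$) while the receiver~2 side needs $(S_1,S_2)$ to be separated (giving $\l_2\in\{10,11,12\}$), which is precisely the first union in~\eqref{eqn:hk-swsc}. The case $\alpha_2\geq\alpha_1$ yields the second union.

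The main obstacle is step~3: one has to verify that the single joint pmf $p'$ constructed from the functional representation lemma simultaneously reproduces both receivers' erasure structures so that the sums of conditional mutual informations defining $\Rr_1(p',\l_1)$ and $\Rr_2(p',\l_2)$ collapse back to the triples identified in steps~1 and~2. This requires the same ``telescoping'' identities used in Lemma~\ref{lem:layer-splitting} (e.g.\ the chain-rule computation showing $\alpha I(W;Y) + (1-\alpha) I(W;Y|X) = I(W;Y|X_1)$), now applied layer-by-layer along the chosen orders, combined with the Markov property $S\to S_1^{(2)}\to S_1^{(1)}$ to show that the receiver~1 rate-splitting decomposition is unaffected by introducing the additional finer split $S_2$ used at receiver~2. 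Once these identities are checked, the proof is complete.
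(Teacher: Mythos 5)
Your proposal is correct and follows essentially the same approach as the paper's proof: both apply Lemma~\ref{lem:layer-splitting3} (erasure-parametrized) separately at each receiver, compare the two erasure probabilities on the shared input $S$ to determine a degradedness ordering, and then use two applications of the functional representation lemma to merge the two-way splits into a single compatible three-way split $(S_1,S_2,S_3)$, with the case distinction $\alpha_1\gtrless\alpha_2$ producing exactly the two sub-unions $(\l_1\in\{1,2,3\},\l_2\in\{10,11,12\})$ and $(\l_1\in\{4,5,6\},\l_2\in\{7,8,9\})$ in~\eqref{eqn:hk-swsc}.
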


\begin{proof}
 By Lemma~\ref{lem:layer-splitting3}, the target rate region can be equivalently expressed as
 \begin{align*}
  \Rr_{1,\mathrm{MAC}}(p) \cap \Rr_{2,\mathrm{MAC}}(p)
  & =  \left(\cup_{\a',\b \in [0,1]} \cup_{\tilde{\l}_1=1}^3 \Rr_1(\a',\b, \tilde{\l}_1)\right) \cap \left( \cup_{\a'',\g \in [0,1]}\cup_{\tilde{\l}_2=4}^6\Rr_2(\a'',\g,\tilde{\l}_2)\right)\\
  & = \cup_{\a',\a'',\b,\g \in [0,1]}\cup_{\tilde{\l}_1=1}^3 \cup_{\tilde{\l}_2=4}^6 [\Rr_1(\a',\b,\tilde{\l}_1) \cap \Rr_2(\a'',\g,\tilde{\l}_2)]
 \end{align*}
for some erasure channels $p(s_1'|s),p(s_1''|s),p(t_1|t),p(v_1|v)$ with erasure probabilities $\a',\a'',\b,\g$, respectively, and the layer orders are 
{\allowdisplaybreaks\begin{align*}
  1\suchthat & S_1' \to T_1 \to S_2' \to U \to T_2,\\
  2\suchthat & T_1 \to U \to S_1' \to T_2 \to S_2',\\
  3\suchthat & T_1 \to S_1' \to U \to S_2' \to T_2,\\
  4\suchthat & S_1'' \to V_1 \to S_2'' \to U \to V_2,\\
  5\suchthat & V_1 \to U \to S_1'' \to V_2 \to S_2'',\\
  6\suchthat & V_1 \to S_1'' \to U \to S_2'' \to V_2.
\end{align*}}%
Now following similar steps to the proof of Proposition~\ref{pro:31swsc}, we can merge $(S_1',S_2')$ and $(S_1'',S_2'')$ into $(S_1,S_2,S_3)$ as follows. When $\a' > \a''$, the channel $p(s_1'|s)$ is degraded with respect to $p(s_1''|s)$. We assume without loss of generality that $S \to S_1'' \to S_1'$ form a Markov chain. By the functional representation lemma (twice), for $p(s_1'|s_1'')$, there exists an $S_2$ independent of $S_1'$ such that $S_1'' = f(S_1',S_2)$; for $p(s_1''|s)$, there exists an $S_3$ independent of $(S_2,S_1')$ such that $S = g(S_1'',S_3) = g(f(S_1',S_2),S_3) \triangleq s(S_1',S_2,S_3)$. Renaming $S_1 \triangleq S_1'$ and plugging $S_1'' = f(S_1,S_2)$, the rate region $\Rr_1(\a',\b,\tilde{\l}_1)$, $\tilde{\l}_1 = 1,2,3$, becomes $\Rr_1(p',\l_1)$, $\l_1 = 1,2,3$, respectively. The rate region $\Rr_2(\a'',\g,\tilde{\l}_2)$, $\tilde{\l}_2 = 4,5,6$, becomes $\Rr_2(p',\l_2)$, $\l_2 = 10,11,12$, respectively. Thus, we have
\[
 \cup_{\tilde{\l}_1=1}^3 \cup_{\tilde{\l}_2=4}^6 [\Rr_1(\a',\b,\tilde{\l}_1) \cap \Rr_2(\a'',\g,\tilde{\l}_2)]  = \cup_{\l_1=1}^3 \cup_{\l_2=10}^{12} [\Rr_1(p',\l_1) \cap \Rr_2(p',\l_2)].
\]
When $\a' \le \a''$, we can assume that $S \to S_1' \to S_1''$ form a Markov chain. Following similar steps, the rate region $\Rr_1(\a',\b,\tilde{\l}_1)$, $\tilde{\l}_1 = 1,2,3$, becomes $\Rr_1(p',\l_1)$, $\l_1 = 4,5,6$, respectively. The rate region $\Rr_2(\a'',\g,\tilde{\l}_2)$, $\tilde{\l}_2 = 4,5,6$, becomes $\Rr_2(p',\l_2)$, $\l_2 = 7,8,9$, respectively. Thus, we have
\[
 \cup_{\tilde{\l}_1=1}^3 \cup_{\tilde{\l}_2=4}^6 [\Rr_1(\a',\b,\tilde{\l}_1) \cap \Rr_2(\a'',\g,\tilde{\l}_2)] = \cup_{\l_1=4}^6 \cup_{\l_2=7}^{9} [\Rr_1(p',\l_1) \cap \Rr_2(p',\l_2)].
\]
\end{proof}

\bibliographystyle{IEEEtran}
\bibliography{nit}

\newcommand{\noopsort}[1]{}
\begin{thebibliography}{10}
\providecommand{\url}[1]{#1}
\csname url@samestyle\endcsname
\providecommand{\newblock}{\relax}
\providecommand{\bibinfo}[2]{#2}
\providecommand{\BIBentrySTDinterwordspacing}{\spaceskip=0pt\relax}
\providecommand{\BIBentryALTinterwordstretchfactor}{4}
\providecommand{\BIBentryALTinterwordspacing}{\spaceskip=\fontdimen2\font plus
\BIBentryALTinterwordstretchfactor\fontdimen3\font minus
  \fontdimen4\font\relax}
\providecommand{\BIBforeignlanguage}[2]{{%
\expandafter\ifx\csname l@#1\endcsname\relax
\typeout{** WARNING: IEEEtran.bst: No hyphenation pattern has been}%
\typeout{** loaded for the language `#1'. Using the pattern for}%
\typeout{** the default language instead.}%
\else
\language=\csname l@#1\endcsname
\fi
#2}}
\providecommand{\BIBdecl}{\relax}
\BIBdecl

\bibitem{Boudreau--Panicker--Guo--Chang--Wang--Vrzic2009}
G.~Boudreau, J.~Panicker, N.~Guo, R.~Chang, N.~Wang, and S.~Vrzic,
  ``Interference coordination and cancellation for 4g networks,'' \emph{IEEE
  Communications Magazine}, vol.~47, no.~4, pp. 74--81, April 2009.

\bibitem{Seol--Cheun2009}
C.~Seol and K.~Cheun, ``A statistical inter-cell interference model for
  downlink cellular ofdma networks under log-normal shadowing and multipath
  rayleigh fading,'' \emph{IEEE Transactions on Communications}, vol.~57,
  no.~10, pp. 3069--3077, October 2009.

\bibitem{Cadambe--Jafar2008}
V.~Cadambe and S.~A. Jafar, ``Interference alignment and degrees of freedom of
  the {$K$}-user interference channel,'' \emph{{IEEE} Trans. Inf. Theory},
  vol.~54, no.~8, pp. 3425--3441, Aug. 2008.

\bibitem{El-Gamal--Kim2011}
A.~El~Gamal and Y.-H. Kim, \emph{Network Information Theory}.\hskip 1em plus
  0.5em minus 0.4em\relax Cambridge: Cambridge University Press, 2011.

\bibitem{Bidokhti--Prabhakaran2014}
S.~S. Bidokhti and V.~M. Prabhakaran, ``Is non-unique decoding necessary?''
  \emph{{IEEE} Trans. Inf. Theory}, vol.~60, no.~5, pp. 2594--2610, May 2014.

\bibitem{Costa--El-Gamal1987}
M.~H.~M. Costa and A.~El~Gamal, ``The capacity region of the discrete
  memoryless interference channel with strong interference,'' \emph{{IEEE}
  Trans. Inf. Theory}, vol.~33, no.~5, pp. 710--711, 1987.

\bibitem{Sato1978b}
H.~Sato, ``\noopsort{b}{O}n the capacity region of a discrete two-user channel
  for strong interference,'' \emph{{IEEE} Trans. Inf. Theory}, vol.~24, no.~3,
  pp. 377--379, May 1978.

\bibitem{Shang--Kramer--Chen2009}
X.~Shang, G.~Kramer, and B.~Chen, ``A new outer bound and the
  noisy-interference sum-rate capacity for {G}aussian interference channels,''
  \emph{{IEEE} Trans. Inf. Theory}, vol.~55, no.~2, pp. 689--699, Feb. 2009.

\bibitem{Annapureddy--Veeravalli2009}
V.~S. Annapureddy and V.~V. Veeravalli, ``{G}aussian interference networks:
  {S}um capacity in the low interference regime and new outer bounds on the
  capacity region,'' \emph{{IEEE} Trans. Inf. Theory}, vol.~55, no.~7, pp.
  3032--3050, Jul. 2009.

\bibitem{Motahari--Khandani2011}
A.~S. Motahari and A.~K. Khandani, ``To decode the interference or to consider
  it as noise,'' \emph{{IEEE} Trans. Inf. Theory}, vol.~57, no.~3, pp.
  1274--1283, March 2011.

\bibitem{Liu--Nair--Xia2014}
S.~Liu, C.~Nair, and L.~Xia, ``Interference channels with very weak
  interference,'' in \emph{Proc. {IEEE} Int. Symp. Inf. Theory}, June 2014, pp.
  1031--1035.

\bibitem{Baccelli--El-Gamal--Tse2011}
F.~Baccelli, A.~El~Gamal, and D.~N.~C. Tse, ``Interference networks with
  point-to-point codes,'' \emph{{IEEE} Trans. Inf. Theory}, vol.~57, no.~5, pp.
  2582--2596, May 2011.

\bibitem{Bandemer--El-Gamal--Kim2015}
B.~Bandemer, A.~E. Gamal, and Y.-H. Kim, ``Optimal achievable rates for
  interference networks with random codes,'' \emph{{IEEE} Trans. Inf. Theory},
  vol.~61, no.~12, pp. 6536--6549, Dec 2015.

\bibitem{Han--Kobayashi1981}
T.~S. Han and K.~Kobayashi, ``A new achievable rate region for the interference
  channel,'' \emph{{IEEE} Trans. Inf. Theory}, vol.~27, no.~1, pp. 49--60,
  1981.

\bibitem{Yedla--Nguyen--Pfister--Narayanan2011}
A.~Yedla, P.~S. Nguyen, H.~D. Pfister, and K.~R. Narayanan, ``Universal codes
  for the {G}aussian {MAC} via spatial coupling,'' in \emph{Proc. 49th Ann.
  Allerton Conf. Comm. Control Comput.}, Sept 2011, pp. 1801--1808.

\bibitem{Wang--Sasoglu2014}
\BIBentryALTinterwordspacing
L.~Wang and E.~{\c Sa\c so\u glu}, ``Polar coding for interference networks,''
  2014. [Online]. Available: \url{http://arxiv.org/abs/1401.7293}
\BIBentrySTDinterwordspacing

\bibitem{Rimoldi--Urbanke1996}
B.~Rimoldi and R.~Urbanke, ``A rate-splitting approach to the {Gaussian}
  multiple-access channel,'' \emph{{IEEE} Trans. Inf. Theory}, vol.~42, no.~2,
  pp. 364--375, Mar 1996.

\bibitem{Grant--Rimoldi--Urbanke--Whiting2001}
A.~J. Grant, B.~Rimoldi, R.~Urbanke, and P.~A. Whiting, ``Rate-splitting
  multiple access for discrete memoryless channels,'' \emph{{IEEE} Trans. Inf.
  Theory}, vol.~47, no.~3, pp. 873--890, 2001.

\bibitem{Zhao--Tan--Avestimehr--Diggavi--Pottie2012}
Y.~Zhao, C.~W. Tan, A.~Avestimehr, S.~Diggavi, and G.~Pottie, ``On the maximum
  achievable sum-rate with successive decoding in interference channels,''
  \emph{{IEEE} Trans. Inf. Theory}, vol.~58, no.~6, pp. 3798--3820, Jun. 2012.

\bibitem{Wang--Sasoglu--Kim2014}
L.~Wang, E.~{\c Sa\c so\u glu}, and Y.~H. Kim, ``Sliding-window superposition
  coding for interference networks,'' in \emph{Proc. {IEEE} Int. Symp. Inf.
  Theory}, June 2014, pp. 2749--2753.

\bibitem{Imai--Hirakawa1977}
H.~Imai and S.~Hirakawa, ``A new multilevel coding method using
  error-correcting codes,'' \emph{{IEEE} Trans. Inf. Theory}, vol.~23, no.~3,
  pp. 371--377, May 1977.

\bibitem{Wachsmann--Fischer--Huber1999}
U.~Wachsmann, R.~F.~H. Fischer, and J.~B. Huber, ``Multilevel codes:
  theoretical concepts and practical design rules,'' \emph{IEEE Transactions on
  Information Theory}, vol.~45, no.~5, pp. 1361--1391, Jul 1999.

\bibitem{Zehavi1992}
E.~Zehavi, ``8-psk trellis codes for a rayleigh channel,'' \emph{IEEE
  Transactions on Communications}, vol.~40, no.~5, pp. 873--884, May 1992.

\bibitem{Caire--Taricco--Biglieri1998}
G.~Caire, G.~Taricco, and E.~Biglieri, ``Bit-interleaved coded modulation,''
  \emph{{IEEE} Trans. Inf. Theory}, vol.~44, no.~3, pp. 927--946, May 1998.

\bibitem{Park--Kim--Wang2014}
H.~Park, Y.~H. Kim, and L.~Wang, ``Interference management via sliding-window
  superposition coding,'' in \emph{2014 IEEE Globecom Workshops (GC Wkshps)},
  Dec 2014, pp. 972--976.

\bibitem{Kim--Ahn--Kim--Park--Wang--Chen--Park2015}
K.~T. Kim, S.-K. Ahn, Y.-H. Kim, H.~Park, L.~Wang, C.-Y. Chen, and J.~Park,
  ``Adaptive sliding-window coded modulation in cellular networks,'' in
  \emph{2015 IEEE Global Communications Conference (GLOBECOM)}, Dec 2015, pp.
  1--7.

\bibitem{Kim--Ahn--Kim--Park--Chen--Kim2016}
K.~T. Kim, S.~K. Ahn, Y.~S. Kim, J.~Park, C.~Y. Chen, and Y.~H. Kim,
  ``Interference management via sliding-window coded modulation for {5G}
  cellular networks,'' \emph{IEEE Communications Magazine}, vol.~54, no.~11,
  pp. 82--89, November 2016.

\bibitem{5Gdiscussion1}
\BIBentryALTinterwordspacing
``{Vision and Schedule for 5G Radio Technologies (RWS-150039)},'' 3GPP TSG RAN
  Workshop on 5G, {Phoenix, AZ, USA}, September 2015, 19 pp. [Online].
  Available:
  \url{http://www.3gpp.org/ftp/workshop/2015-09-17\_18\_RAN\_5G/Docs/RWS-150039.zip}
\BIBentrySTDinterwordspacing

\bibitem{5Gdiscussion3}
\BIBentryALTinterwordspacing
``{Interference coordination for 5G new radio interface (R1-162185)},'' 3GPP
  TSG RAN WG1 \#84bis, April 2016, 4 pp. [Online]. Available:
  \url{http://www.3gpp.org/ftp/TSG\_RAN/WG1\_RL1/TSGR1\_84b/Docs/R1-162185.zip}
\BIBentrySTDinterwordspacing

\bibitem{5Gdiscussion4}
\BIBentryALTinterwordspacing
``{Discussion on interference management based on advanced transceivers for NR
  (R1-164023)},'' 3GPP TSG RAN WG1 \#85, May 2016, 3 pp. [Online]. Available:
  \url{http://www.3gpp.org/ftp/TSG\_RAN/WG1\_RL1/TSGR1\_85/Docs/R1-164023.zip}
\BIBentrySTDinterwordspacing

\bibitem{5Gdiscussion5}
\BIBentryALTinterwordspacing
``{Discussion on spatial multiplexing for NR (R1-167887)},'' 3GPP TSG RAN WG1
  \#86, August 2016, 2 pp. [Online]. Available:
  \url{http://www.3gpp.org/ftp/TSG\_RAN/WG1\_RL1/TSGR1\_86/Docs/R1-167887.zip}
\BIBentrySTDinterwordspacing

\bibitem{5Gdiscussion6}
\BIBentryALTinterwordspacing
``{Discussion on modulation for NR (R1-166776)},'' 3GPP TSG RAN WG1 \#86,
  August 2016, 4 pp. [Online]. Available:
  \url{http://www.3gpp.org/ftp/TSG\_RAN/WG1\_RL1/TSGR1\_86/Docs/R1-166776.zip}
\BIBentrySTDinterwordspacing

\bibitem{5Gdiscussion7}
\BIBentryALTinterwordspacing
``{Discussion on interference management based on advanced transceivers for NR
  (R1-166791)},'' 3GPP TSG RAN WG1 \#86, August 2016, 4 pp. [Online].
  Available:
  \url{http://www.3gpp.org/ftp/TSG\_RAN/WG1\_RL1/TSGR1\_86/Docs/R1-166791.zip}
\BIBentrySTDinterwordspacing

\bibitem{Orlitsky--Roche2001}
A.~Orlitsky and J.~R. Roche, ``Coding for computing,'' \emph{{IEEE} Trans. Inf.
  Theory}, vol.~47, no.~3, pp. 903--917, 2001.

\bibitem{Cover--Thomas2006}
T.~M. Cover and J.~A. Thomas, \emph{Elements of Information Theory},
  2nd~ed.\hskip 1em plus 0.5em minus 0.4em\relax New York: Wiley, 2006.

\bibitem{Wang--Sasoglu--Bandemer--Kim2013}
L.~Wang, E.~{\c Sa\c so\u glu}, B.~Bandemer, and Y.-H. Kim, ``A comparison of
  superposition coding schemes,'' in \emph{Proc. {IEEE} Int. Symp. Inf.
  Theory}, Istanbul, Turkey, Jul. 2013.

\bibitem{Fawzi--Savov2012}
\BIBentryALTinterwordspacing
O.~Fawzi and I.~Savov, ``Rate-splitting in the presence of multiple
  receivers,'' 2012. [Online]. Available: \url{http://arxiv.org/abs/1207.0543}
\BIBentrySTDinterwordspacing

\bibitem{Cover--El-Gamal1979}
T.~M. Cover and A.~El~Gamal, ``Capacity theorems for the relay channel,''
  \emph{{IEEE} Trans. Inf. Theory}, vol.~25, no.~5, pp. 572--584, Sep. 1979.

\bibitem{Cover--Leung1981}
T.~M. Cover and C.~S.~K. Leung, ``An achievable rate region for the
  multiple-access channel with feedback,'' \emph{{IEEE} Trans. Inf. Theory},
  vol.~27, no.~3, pp. 292--298, 1981.

\bibitem{Carleial1982}
A.~B. Carleial, ``Multiple-access channels with different generalized feedback
  signals,'' \emph{{IEEE} Trans. Inf. Theory}, vol.~28, no.~6, pp. 841--850,
  Nov. 1982.

\bibitem{Xie--Kumar2005}
L.-L. Xie and P.~R. Kumar, ``An achievable rate for the multiple-level relay
  channel,'' \emph{{IEEE} Trans. Inf. Theory}, vol.~51, no.~4, pp. 1348--1358,
  2005.

\bibitem{Kramer--Gastpar--Gupta2005}
G.~Kramer, M.~Gastpar, and P.~Gupta, ``Cooperative strategies and capacity
  theorems for relay networks,'' \emph{{IEEE} Trans. Inf. Theory}, vol.~51,
  no.~9, pp. 3037--3063, Sep. 2005.

\bibitem{Foschini1996}
G.~J. Foschini, ``Layered space-time architecture for wireless communication in
  a fading environment when using multi-element antennas,'' \emph{Bell Labs
  Tech. J.}, vol.~1, no.~2, pp. 41--59, 1996.

\bibitem{Li--Huang--Foschini--Valenzuela2000}
X.~Li, H.~Huang, G.~J. Foschini, and R.~A. Valenzuela, ``Effects of iterative
  detection and decoding on the performance of blast,'' in \emph{Global
  Telecommunications Conference, 2000. {GLOBECOM '00. IEEE}}, vol.~2, 2000, pp.
  1061--1066 vol.2.

\bibitem{Foschini--Chizhik--Gans--Papadias--Valenzuela2003}
G.~J. Foschini, D.~Chizhik, M.~J. Gans, C.~Papadias, and R.~A. Valenzuela,
  ``Analysis and performance of some basic space-time architectures,''
  \emph{{IEEE} Journal on Selected Areas in Communications}, vol.~21, no.~3,
  pp. 303--320, Apr 2003.

\bibitem{Wolniansky--Foschini--Golden--valenzuela1998}
P.~W. Wolniansky, G.~J. Foschini, G.~D. Golden, and R.~A. Valenzuela,
  ``V-blast: an architecture for realizing very high data rates over the
  rich-scattering wireless channel,'' in \emph{Signals, Systems, and
  Electronics, 1998. ISSSE 98. 1998 URSI International Symposium on}, Sep 1998,
  pp. 295--300.

\bibitem{LTE}
{3GPP TS 36.212}, ``Multiplexing and channel coding,'' Release 12, 2013.

\bibitem{Wang2015}
L.~Wang, ``Channel coding techniques for network communication,'' {Ph.D.}
  Thesis, University of California, San Diego, La Jolla, CA, 2015.

\end{thebibliography}

\end{document}